\newtheorem{theorem}{Theorem}
\newtheorem{definition}{Definition}
\newtheorem{lemma}{Lemma}
\newtheorem{claim}{Claim}[lemma]
\newenvironment{claimproof}[1][\proofname] 
{
    \proof[Proof of claim]
    
}
{
    \endproof
}
\theoremstyle{plain}
\newtheorem{rrule}{Reduction Rule}
\Crefname{rrule}{Reduction Rule}{Reduction Rules}
\newcommand{\NP}{\ensuremath{\mathsf{NP}}\xspace}
\renewcommand{\P}{\ensuremath{\mathsf{P}}\xspace}
\newcommand{\containment}{\ensuremath{\mathsf{NP \subseteq coNP/poly}}\xspace}
\newcommand{\N}{\mathbb{N}}
\newcommand{\Z}{\mathbb{Z}}
\newcommand{\Oh}{\mathcal{O}}
\newcommand{\tG}{\ensuremath{\tilde{G}}\xspace}
\newcommand{\side}{\texttt{side}}
\newcommand{\cI}{\ensuremath{\mathcal{I}}}
\DeclareMathOperator{\OPT}{OPT}
\newcommand{\conf}[2]{\textsc{conf}_{#1}(#2)}
\newcommand{\cut}{\textsc{cut}}
\newcommand{\probname}[1]{\lowercase{\textsc{#1}}}
\newcommand{\problem}[1]{\probname{#1}\xspace}
\newcommand{\parameterizedproblem}[2]{\textup{\probname{#1}[#2]}\xspace}
\newcommand{\vertexcover}{\problem{Vertex Cover}}
\newcommand{\VC}{\ensuremath{\text{VC}}}
\newcommand{\vertexcoveroct}{\parameterizedproblem{Vertex Cover}{oct}}
\newcommand{\oddcycletransversal}{\problem{Odd Cycle Transversal}}
\newcommand{\oddcycletransversallocal}{\parameterizedproblem{Odd Cycle Transversal}{local solution}}
\newcommand{\dtmultiwaycut}{\problem{Deletable Terminal Multiway Cut}}
\newcommand{\dtmultiwaycutlocal}{\parameterizedproblem{Deletable Terminal Multiway Cut}{local solution}}
\newcommand{\dtmultiwaycutterm}{\parameterizedproblem{Deletable Terminal Multiway Cut}{$|T_G|$}}
\newcommand{\smultiwaycut}{\problem{$s$-Multiway Cut}}
\newcommand{\smultiwaycutlocal}{\parameterizedproblem{$s$-Multiway Cut}{local solution}}
\title{Boundaried Kernelization via Representative Sets}
\author{
    Leonid Antipov\\Humboldt-Universität zu Berlin, Berlin, Germany\\leonid.antipov@hu-berlin.de 
    \and
    Stefan Kratsch\\Humboldt-Universität zu Berlin, Berlin, Germany\\stefan.kratsch@hu-berlin.de
}
\begin{document}
    
    \maketitle
    
    \begin{abstract}
        A kernelization is an efficient algorithm that given an instance of a parameterized problem returns an equivalent instance of size bounded by some function of the input parameter value. It is quite well understood which problems do or (conditionally) do not admit a kernelization where this size bound is polynomial, a so-called polynomial kernelization. Unfortunately, such polynomial kernelizations are known only in fairly restrictive settings where a small parameter value corresponds to a strong restriction on the global structure on the instance. Motivated by this, Antipov and Kratsch [WG 2025] proposed a local variant of kernelization, called boundaried kernelization, that requires only local structure to achieve a local improvement of the instance, which is in the spirit of protrusion replacement used in meta-kernelization [Bodlaender et al.\ JACM 2016]. They obtain polynomial boundaried kernelizations as well as (unconditional) lower bounds for several well-studied problems in kernelization.
        
        In this work, we leverage the matroid-based techniques of Kratsch and Wahlstr\"om [JACM 2020] to obtain randomized polynomial boundaried kernelizations for \smultiwaycut, \dtmultiwaycut, \oddcycletransversal, and \vertexcoveroct, for which randomized polynomial kernelizations in the usual sense were known before. A priori, these techniques rely on the global connectivity of the graph to identify reducible (irrelevant) vertices. Nevertheless, the separation of the local part by its boundary turns out to be sufficient for a local application of these methods.  
    \end{abstract}

    \section{Introduction}
    
    (Polynomial) kernelization is a very successful notion for rigorously studying efficient preprocessing for hard problems. A kernelization is an efficient algorithm that given an instance of a parameterized problem returns an equivalent instance of size bounded by some function of the input parameter value. It is a polynomial kernelization if this function is polynomially bounded. By now, it is quite well understood which problems do or (conditionally) do not admit a polynomial kernelization. In this way, we have learned what structural properties are helpful for a provable size reduction through efficient preprocessing, which is unlikely in general unless $\P=\NP$. Unfortunately, most polynomial kernelizations are for parameters that take low values only on instances with very restricted \emph{global structure}, e.g., only parameter many vertex deletions away from a known tractable special case of the problem. This is an issue because the size guarantee of a kernelization is only nontrivial when the parameter is small, otherwise we could just leave the instance as is.
    
    Motivated by this, Antipov and Kratsch~\cite{AntipovKratsch2025-BoundariedKernelization} proposed a local variant of kernelization, called \emph{boundaried kernelization}, inspired by protrusion replacement in meta-kernelization (Bodlaender et al.~\cite{DBLP:journals/jacm/BodlaenderFLPST16}). Roughly, this expects as input a boundaried graph $G_B$ and will return a boundaried graph $G'_B$ of size some function of the chosen parameter plus the boundary size (and possibly a shift $\Delta$ in solution value). The two graphs are equivalent in the sense that gluing any boundaried graph $H_B$ will result in equivalent instances (up to shift of $\Delta$ for optimization). Intuitively, such a preprocessing only needs a local part of a large graph to have beneficial structure and sufficiently small connection to the rest of the input, for the size bound to imply a size reduction. In this sense, boundaried kernelization relaxes the required structural conditions (from global to local), though of course also the size bound is only local (namely only for the boundaried local part). Since we can always run a boundaried kernelization with empty boundary, however, it is by itself in fact a strengthening of kernelization (modulo some technical aspects due to the variety of different parameterizations). Thus, it is natural and interesting to ask, which known polynomial kernelizations can be strengthened to polynomial boundaried kernelizations.
    
    In~\cite{AntipovKratsch2025-BoundariedKernelization}, polynomial boundaried kernelizations were obtained for \parameterizedproblem{Vertex Cover}{vc}, \parameterizedproblem{Vertex Cover}{fvs}, \parameterizedproblem{Feedback Vertex Set}{fvs}, \parameterizedproblem{Long Cycle}{vc}, \parameterizedproblem{Long Path}{vc}, \parameterizedproblem{Hamiltonian Cycle}{vc}, and \parameterizedproblem{Hamiltonian Path}{vc}.\footnote{Parameterized problems are denoted as problem name followed by parameter in brackets: These include parameterization by the size of a given vertex cover (vc), feedback vertex set (fvs), cluster vertex deletion set (cvd), cluster editing set (ce), and tree deletion set (tds), respectively.} In contrast, \parameterizedproblem{Cluster Editing}{cvd}, \parameterizedproblem{Cluster Editing}{ce}, \parameterizedproblem{Tree Deletion Set}{vc}, \parameterizedproblem{Tree Deletion Set}{tds}, and \parameterizedproblem{Dominating Set}{vc} were shown to (unconditionally) not admit a polynomial boundaried kernelization.\footnote{It was known that \parameterizedproblem{Dominating Set}{vc} has no polynomial kernelization unless \containment, unlike all other problems listed here, but exclusion of polynomial boundaried kernelization is unconditional.} Existence of unconditional lower bounds is not surprising and relies on exhibiting an unbounded (or at least too large) family of non-equivalent boundaried graphs. That being said, it is somewhat surprising that simple local reduction rules can sometimes be leveraged also for boundaried kernelization, while they otherwise fail completely.
    
    There are of course plenty of other graph problems with polynomial kernelizations to explore. Next to well-studied problems (as above) and cases with somewhat special kernelization (as \parameterizedproblem{Tree Deletion Set}{tds}), it would be interesting to know how inherently global tools fare in the boundaried setting. Here, the matroid-based techniques of Kratsch and Wahlstr\"om~\cite{KratschWahlstrom2020-RepresentativeSetsIrrelevantVertices} come to mind, as they rely on the properties of certain matroids defined on the entire graph, while also enabling the first (and so far only) polynomial kernelizations for certain cut and feedback problems. Another interesting one would be the randomized polynomial compression for \parameterizedproblem{Steiner Cycle}{$|T|$} due to Wahlstr\"om~\cite{DBLP:conf/stacs/Wahlstrom13} but likely one would first have to strengthen it to a kernelization, as it outputs a somewhat contrived matrix problem not known to be in \NP.
    
    \subparagraph{Our work.}
    Motivated by the question of whether the global matroid-based techniques of Kratsch and Wahlstr\"om~\cite{KratschWahlstrom2020-RepresentativeSetsIrrelevantVertices} can also be adapted for local preprocessing, we study the same problems in the boundaried setting. We are able to strengthen several results to polynomial boundaried kernelizations.
    
    \begin{theorem}\label{thm:main}
        The following parameterized problems admit randomized polynomial boundaried kernelization: \smultiwaycutlocal (with a minor restriction described below), \dtmultiwaycutterm, \oddcycletransversallocal, and \vertexcoveroct.
    \end{theorem}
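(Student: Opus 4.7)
The plan is to import the Kratsch--Wahlstr\"om representative-sets framework \cite{KratschWahlstrom2020-RepresentativeSetsIrrelevantVertices} into the boundaried setting by treating every boundary vertex as a ``virtual terminal'' in whatever matroid the original proof attaches to the instance. Concretely, for each of the four problems I would (i) define, on the boundaried input $G_B$, an auxiliary graph and a matroid $M$ of the type used by \cite{KratschWahlstrom2020-RepresentativeSetsIrrelevantVertices}, taking the source set to be the original terminals of that problem \emph{together with} the boundary $B$; (ii) view the family of ``relevant vertex-sets'' (minimum multiway cuts, OCT-certificates, etc.) as independent sets in $M$ of bounded rank; and (iii) compute a small representative set $R \subseteq V(G_B)$ using the polynomial-time randomized algorithm for representative sets in linear matroids. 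Vertices of the local part not in $R\cup B\cup T_G$ are declared irrelevant and removed (or replaced by a trivial gadget preserving adjacency to $B$). The output boundaried graph has size polynomial in $|B|$ and the parameter.

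For \oddcycletransversallocal and \vertexcoveroct I would use the bipartite-double-cover gammoid construction of \cite{KratschWahlstrom2020-RepresentativeSetsIrrelevantVertices}, with the two copies of each boundary vertex added as extra sources/sinks. For \dtmultiwaycutterm and \smultiwaycutlocal I would use the gammoid on the vertex-splitted graph with the set $T_G \cup B$ as sources, so that any minimum cut separating the real terminals \emph{after gluing} still corresponds to a linearly independent set in the local matroid.

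The key equivalence step is the local exchange argument: for any gluing $H_B$ and any optimal solution $Z$ of $G_B\oplus H_B$, one must produce a solution $Z'$ of the same size avoiding all irrelevant local vertices. In the global setting this is done by a shifting/rerouting argument based on independence in the matroid; here I would show that any ``influence'' of $H_B$ on the cuts/paths touching the local part is mediated entirely by $B$, so the rerouting can be performed using only the local matroid $M$ augmented with $B$ as terminals. Concretely, any path or alternating walk used in the exchange that enters $V(H_B)$ can be contracted to an edge between its two boundary endpoints, leaving a local object that still lies in the span of $R\cup B\cup T_G$.

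The main obstacle is exactly this \emph{locality of the exchange}: in \cite{KratschWahlstrom2020-RepresentativeSetsIrrelevantVertices} one freely uses global flow-augmenting paths that may traverse the whole graph, and the cut-covering / representative-sets lemma is applied once to the entire matroid. I expect the subtlety for \smultiwaycutlocal to sit here, accounting for the ``minor restriction'' in the statement---one probably needs to forbid terminals of $G$ from being used as boundary (or more generally needs every boundary vertex to be separable from the terminal set by the cut), so that the gammoid correctly captures all cuts that can arise after gluing. Once this locality has been established, bounding $|R|$ by a polynomial in $|B|$ plus the parameter, verifying that the kernel is an actual boundaried graph (not a matrix), and checking that the shift $\Delta$ in solution value is recoverable are all routine adaptations of the corresponding steps in~\cite{KratschWahlstrom2020-RepresentativeSetsIrrelevantVertices} and~\cite{AntipovKratsch2025-BoundariedKernelization}.
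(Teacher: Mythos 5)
Your high-level plan is essentially the one the paper follows: add the boundary $B$ as extra sources to the gammoid/cut-cover construction so that the representative set covers every possible ``forced behavior'' of boundary vertices, then declare the remaining interior vertices irrelevant. You also correctly identify both the key obstacle (locality of the exchange argument, mediated only through $B$) and the source of the ``minor restriction'' for $s$-Multiway Cut. In that sense the proposal is aligned with the paper.

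There are, however, two concrete gaps in the execution. First, declaring a vertex $v$ irrelevant and then ``removing it or replacing it by a trivial gadget preserving adjacency to $B$'' is not correct for cut problems. Removal destroys paths through $v$ and so would change the cut structure; a gadget preserving only adjacency to $B$ does not preserve connectivity between two non-boundary neighbors of $v$. The paper instead \emph{bypasses} irrelevant vertices: remove $v$ and add a shortcut edge $(u,w)$ for every path $(u,v,w)$. Bypassing forbids $v$ from being in the cut while preserving reachability (Lemma~\ref{lem:bypass}), which is exactly the invariant the exchange argument needs; your version would fail the ``$\le$''-direction of the gluing-equivalence proofs. Second, building the gammoid with sources $T_G \cup B$ (or $N(T_G)\cup B$) and then taking a representative set does not by itself yield a bound polynomial in $|B|+\ell$, because neither $|T_G|$ nor $|N(T_G)|$ is bounded a priori. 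The paper first shrinks those sets: for $s$-Multiway Cut a closest-cut replacement of each $N(t)$ (Reduction Rule~\ref{rr:smwc_neighbor}) bounds $|N(T_G)|$ by $\Oh(|B|+\ell)$, and for Deletable Terminal Multiway Cut a Razgon-style argument (Reduction Rules~\ref{rr:dtmwc_iso}--\ref{rr:dtmwc_high_deg_B} and Lemma~\ref{lem:dtmwc_T_S}) bounds $|T_G|$ by $\Oh((|B|+\ell)^2)$. Without these preliminary bounds, the representative-set sizes you invoke are polynomial in the wrong quantities. Finally, for \vertexcoveroct the paper does more than ``use the OCT construction'': it reduces to a clean instance with a perfect matching, relates conflicts to cuts in an auxiliary digraph, and bypasses whole matched \emph{pairs}. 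Grouping it with \oddcycletransversallocal undersells that this case needs its own cut/conflict translation (Lemmas~\ref{lem:vc_CtoS} and~\ref{lem:vc_StoC}).
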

    
    The main idea lies in the power of the gammoids and the resulting cut covers $Z$ which contain the vertices of an exponential number of min-cut queries. While Kratsch and Wahlstr\"om applied this on gammoids with sources being the terminal set, or an approximate odd cycle transversal, we additionally put in the boundary vertices as sources.
    As a result, this family of min-cut queries whose optimal answers are all covered by $Z$, contain in particular all possible combinations needed in order to complete an optimal global solution for any forced behavior of the boundary vertices in the solution. 
    For instance, in \smultiwaycut, where the solution is a vertex set $X$ such that each of the $s$ given terminals $t \in T$ is contained in its own component in $G-X$, we are able to preserve an optimal solution for every choice of putting any boundary vertex either into the solution, the component of some terminal, or some unrelated component. Or in \oddcycletransversal, where $G-X$ has to be a bipartite graph, the set $Z$ contains the local part of an optimal solution for every choice of putting $B$ vertices in the solution or one of the two parts of the bipartite graph $G-X$.
    
    Note however for the problem \smultiwaycut, that, for $G_B$ being the local graph, $T_G$ a set of terminals among $G$-vertices, and $H_B, T_H$ the rest of the global graph with the remaining terminals among the $H$-vertices, if we allow $T_H$ to also contain $B$-vertices, this forces us to be able to adapt to any pair of the boundary vertices to be terminals, and thus we need to store at least some information about the disjoint path sets between any such pair. Since the sizes of these path sets are not bounded, this means that we need to be able to store an unbounded amount of information, which directly rules out any effective preprocessing.
    
    \begin{theorem}\label{thm:side}
        In general, the parameterized problem \smultiwaycutlocal does not admit any boundaried kernelization.
    \end{theorem}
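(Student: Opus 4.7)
The plan is to refute any boundaried kernelization by exhibiting an infinite family of pairwise non-equivalent boundaried graphs sharing a fixed boundary size and a fixed local-solution parameter. Since any boundaried kernelization compresses an input $(G_B, k)$ into an output whose boundaried graph has size bounded by some computable function of $|B| + k$, fixing those quantities leaves only finitely many possible output graphs, and pigeonhole on an infinite family of equivalence classes then forces two of them to collapse, contradicting their non-equivalence.

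For each $n \in \N$, I would take $G_n$ to be the boundaried graph on vertex set $B \cup \{w_1, \ldots, w_n\}$ with boundary $B = \{u, v\}$ and edges $uw_i, w_iv$ for every $i \in [n]$, so that $G_n$ is a collection of $n$ internally vertex-disjoint $u$-$v$ paths of length two. Because $G_n$ contains no terminal, the local-solution parameter is $0$ uniformly over the family (the empty set trivially resolves the vacuous local obligation).

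The core step is verifying pairwise non-equivalence via two test extensions. The trivial extension $H'_B$ with $V(H') = B$, no edges, and $T_{H'} = \emptyset$ yields $\OPT(G_n \oplus H'_B) = 0$ for every $n$, pinning any shift $\Delta$ potentially relating $G_n$ to $G_m$ to be $0$. The discriminating extension $H_B$ with $V(H) = B$, no edges, and $T_H = \{u, v\}$ --- exercising precisely the freedom $T_H \cap B \ne \emptyset$ that is forbidden in \Cref{thm:main} --- forces any multiway cut in $G_n \oplus H_B$ to meet all $n$ internally disjoint $u$-$v$ paths inside $G_n$, whence $\OPT(G_n \oplus H_B) = n$. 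For $n \ne m$ this is incompatible with $\Delta = 0$, so $G_n$ and $G_m$ lie in distinct equivalence classes.

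Finally I would assemble the pigeonhole: a supposed kernelization on inputs $(G_n, 0)$ yields boundaried-graph outputs drawn from the finite set of graphs of size at most $f(2)$ for the output-size function $f$, so some $n_1 \ne n_2$ must map to the same output $G'_B$; composing the associated shifts would make $G_{n_1}$ and $G_{n_2}$ equivalent, contradicting the previous step. The main obstacle I anticipate is a bookkeeping one: confirming that the local-solution parameter of~\cite{AntipovKratsch2025-BoundariedKernelization} really evaluates to $0$ (or at least to some fixed constant independent of $n$) on terminal-free boundaried graphs like the $G_n$; should the definition require a positive number of local terminals, the family is easily patched by attaching a constant number of dummy terminals inside $G_n$ in a way that does not open any additional $u$-$v$ path.
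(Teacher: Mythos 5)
Your proposal is correct and takes essentially the same approach as the paper: the family you construct ($n$ internally disjoint length-two paths between the two boundary vertices) is exactly the $K_{2,n}$ construction used in the paper, with the identical pair of test extensions (empty terminals to pin $\Delta=0$, then $T_H=B$ to force $\OPT=n$). The only cosmetic difference is that you carry out the pigeonhole argument explicitly, whereas the paper invokes Lemma~\ref{lem:bk_lb}, which packages the same counting.
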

    
    So, the slightly restricted case that we are referring to in Theorem~\ref{thm:main} is that we forbid the terminal-part $T_H$ that is unknown for the kernelization, to contain any boundary vertices. This restriction is enough in order to obtain a boundaried kernelization, and it fits with the concept of local kernelization because the terminals in $B$ are known in that setting. (That being said, in general, it is appealing to push to the most general setting for getting a  polynomial boundaried kernelization. This includes having the preprocessing be fully independent of possible glued graphs $H_B$.)    
    
    \subparagraph{Related work.}
    Most immediately related is the already mentioned work on meta-kernelization via protrusion replacement (subgraphs with constant treewidth and constant boundary size). Inherently, the machinery for that does not lead to any polynomial size bounds, and that is not to be expected. Generally, the perspective of boundaried graphs is essential for dynamic programming on path and tree decompositions~(cf.~\cite{DBLP:books/sp/CyganFKLMPPS15}). In that setting, there is no hope for polynomial boundaried kernelizations because it would generalize the (likely) infeasible case of problems parameterized by path/treewidth regarding polynomial kernelization. Instead, our settings have more restrictive structure than small treewidth (as is usual for polynomial kernelization) and we obtain polynomial size bounds.
    
    Fomin et al.~\cite{DBLP:journals/talg/FominLST18} use a different form of protrusion for \parameterizedproblem{(Connected) Dominating Set}{$k$} where they require constant boundary size and a constant local solution size. It is used very differently than our (parameter) local cost, but is certainly reminiscent. Clearly, a small or constant local cost is a natural quality of interest amongst more explicit structural parameters like the size of modulators.
    
    Work by Jansen and Kratsch~\cite{DBLP:conf/esa/JansenK15} for preprocessing the feasibility problem of integer linear programs showed how to shrink subsystems with constant boundary size and either constant treewidth or total unimodularity to size polynomial in the domain. This predates the present notion of boundaried kernelization, like protrusion-replacement, but similarly does not yield size polynomial in the boundary size.    
    
    \subparagraph{Organization.}
    We give preliminaries regarding graph problems and boundaried kernelization in Section~\ref{section:prelim} and restate the tools from representative sets for matroids in Section~\ref{section:matroid}. In Sections~\ref{section:smwc}, \ref{section:dtmwc}, \ref{section:oct}, and \ref{section:vc} we describe the randomized polynomial boundaried kernelizations for \smultiwaycutlocal, \dtmultiwaycutterm, \oddcycletransversallocal, and \vertexcoveroct, respectively. Finally, we conclude in Section~\ref{section:conclusion}.
    
    \section{Preliminaries}\label{section:prelim}
    \subparagraph{Decision and optimization problems.}
    A \emph{decision problem} is a set $\Pi \subseteq \Sigma^*$. An \emph{optimization problem} is a function $\Pi \colon \Sigma^* \times \Sigma^* \to \N \cup \{\pm \infty\}$. For both we call any $x \in \Sigma^*$ an \emph{instance} of $\Pi$. For some instance $x$ and some $s \in \Sigma^*$, if $\Pi(x, s) \neq \pm \infty$, i.e., $\Pi(x, s) \in \N$, then $s$ is called a \emph{(feasible) solution} for $\Pi$ on $x$, and $\Pi(x, s)$ is called the \emph{value} of $s$. Optimization problems are divided into \emph{minimization} and \emph{maximization} problems. For a minimization (resp. maximization) problem we define the \emph{optimum value} for $\Pi$ on instance $x$, denoted $\OPT_\Pi(x)$ or simply $\OPT(x)$ if $\Pi$ is clear from the context, as the lowest (resp. highest) value of a feasible solution $s$ on $x$. If there is no feasible solution for $\Pi$ on $x$, then it holds that $\OPT_\Pi(x) = +\infty$ for a minimization problem and $\OPT_\Pi(x) = -\infty$ for a maximization problem. A solution with optimum value is called an \emph{optimum solution}.
    
    \subparagraph{Parameterized complexity.}
    A \emph{parameterized problem} is a set $Q \subseteq \Sigma^* \times \N$. For a tuple $(x, \ell) \in \Sigma^* \times \N$, which is again called an instance of $Q$, the number $\ell$ is called the \emph{parameter}. For a problem $\Pi$ (decision or optimization) and minimization problem $\rho$ we define the \emph{structurally parameterized} problem $\Pi[\rho]$, for which it holds that $(x, s, \ell) \in \Pi[\rho]$ if and only if $s$ is a feasible solution of value at most $\ell$ for $\rho$ on $x$ and: (i) for $\Pi$ being a decision problem, $x \in \Pi$; (ii) for $\Pi$ being a minimization problem, $x = (x', k)$ and $\OPT_\Pi(x') \leq k$; (iii) for $\Pi$ being a maximization problem, $x = (x', k)$ and $\OPT_\Pi(x') \geq k$. If $\Pi$ is an optimization problem, then this also leads us to the definition of the \emph{standard parameterization} $\Pi[k]$ of $\Pi$, which is a decision problem containing exactly those tuples $(x, k) \in \Sigma^* \times \N$ for which it holds that $\OPT_\Pi(x') \leq k$ for $\Pi$ being a minimization problem, respectively $\OPT_\Pi(x') \geq k$ if $\Pi$ is a maximization problem.
    
    A \emph{kernelization} for a parameterized problem $Q$ is a polynomial-time algorithm $\mathcal{A}$ which is given as input an instance $(x, \ell)$ and outputs an \emph{equivalent} instance $(x', \ell')$, i.e., $(x, \ell) \in Q \Leftrightarrow (x', \ell') \in Q$, such that $|x'|$ and $\ell'$ are upper bounded by $f(\ell)$ for some computable function $f$. The function $f$ is called the \emph{size} of the kernelization $\mathcal{A}$, and if $f$ is polynomially bounded, then $\mathcal{A}$ is called a \emph{polynomial kernelization}. If there is a probability that the output instance $(x', \ell')$ is not equivalent to $(x, \ell)$, then $\mathcal{A}$ is called a \emph{randomized kernelization}. In our cases, the error probability will be upper bounded by $\Oh(2^{-\Theta(|x|)})$.
    The output instance $(x', \ell')$ is called a \emph{kernel} of $(x, \ell)$ and we say that the problem $Q$ \emph{admits a (randomized) (polynomial) kernel}, if there exists a (randomized) (polynomial) kernelization for $Q$.
    
    \subparagraph{Graphs and graph problems.}
    We use standard graph theoretic notation mostly following Diestel \cite{Diestel2025-GraphTheory}. Our graphs might be directed or undirected, but are always finite, simple, loopless, and unweighted, unless explicitly stated otherwise. A \emph{mixed} graph contains both directed and undirected edges. A \emph{(vertex) annotated graph} is a tuple $(G, \mathcal{T})$, where $G$ is an undirected graph and $\mathcal{T}$ itself is a tuple of subsets of $V(G)$. A \emph{decision or optimization problem on (annotated) graphs} consists of instances encoding (annotated) graphs, respectively. A problem on graphs, i.e., without annotation $\mathcal{T}$, is also called a \emph{pure graph problem}.
    
    An edge $e = \{u, v\} \in E(G)$, respectively $e = (u, v)$, is said to be \emph{incident with} $u$ and $v$, and $e$ is an edge \emph{between} $u$ and $v$. The vertices $u$ and $v$ are called the \emph{endpoints} of $e$, as well as \emph{neighbors of} and \emph{adjacent to} each other. The neighborhood of a vertex $v \in V(G)$ is the set of all neighbors of $v$ in $G$ and is denoted by $N_G(v)$ and $N(v)$ if $G$ is clear from context. Additionally, for a vertex set $V' \subseteq V(G)$, we denote by $N_{G, V'}(v)$ the set of neighbors of $v$ that are contained in $V'$, i.e., $N_{G, V'}(v) := N_G(v) \cap V'$. For vertex set $W \subseteq V(G)$, we extend the definition of the neighborhood of $W$ as $N_G(W) := \bigcup_{v \in W} N_G(v) \setminus W$. The sets $N(W)$ and $N_{G, V'}(W)$ are defined in the same manner. For a set $W \subseteq V(G)$ we define the graph $G[W]$ as the \emph{induced subgraph} of $G$ consisting of the vertex set $W$ and every edge of $G$ between vertices in $W$. Reversely, the graph $G-W$ is equal to the induced subgraph $G[V(G) \setminus W]$ of $G$. A set $S \subseteq V(G)$ is called a \textit{vertex cover} for graph $G$ if $S$ intersects each edge of $G$, or equivalently, if $G-S$ contains no edges. A vertex set $I \subseteq V(G)$ for which $G[I]$ contains no edges is called an \emph{independent set}. 
    In the minimization problem \vertexcover the value of a solution $s$ for undirected graph $G$ is equal to $|S|$ if $s$ encodes a vertex cover $S$ of $G$, and $+\infty$ otherwise.
    
    A \textit{path} $P$ of length $k$ in graph $G$ is a sequence of distinct vertices $(v_0, \dots, v_k)$ such that in $G$ there exist the edges $\{v_0, v_1\}, \{v_1, v_2\}, \dots, \{v_{k-1}, v_k\}$ if $G$ is undirected, respectively edges $(v_0, v_1), (v_1, v_2), \dots, (v_{k-1}, v_k)$ if $G$ is directed. The vertices $v_0$ and $v_k$ are called the \textit{endpoints} of $P$ and the vertices $v_1 \dots, v_{k-1}$ its internal vertices. We also say that the path $P$ lies \textit{between} $v_0$ and $v_k$. With $k$ being odd or even, we say that $P$ has \textit{odd} or \textit{even length}, respectively. We remark that an odd length path consists of an even number of vertices. A set $S \subseteq V(G)$ is called a \textit{(multiway) cut} for a graph $G$ and a set $T = \{t_1, \dots, t_k\} \subseteq V(G)$ of given \textit{terminals} if in graph $G' = G-S$ for every pair $u, v \in T$ there is no path between $u$ and $v$ in $G-S$. More generally, if we are given graph $G$ and a tuple of terminal sets $\mathcal{T} = (T_1, \dots, T_k)$, then a cut for $(G, \mathcal{T})$ is a set $S \subseteq V(G)$ such that for any $i \neq j \in [k]$ and any $u \in T_i, v \in T_j$ there is no path between $u$ and $v$ in $G-S$. We remark that unless stated otherwise, a cut $S$ might intersect the set $T$ of terminals, respectively the set $\bigcup_{T \in \mathcal{T}}T$ in the general case.

    A set $S \subseteq V(G)$ is said to be \emph{closest} to some set $T \subseteq V$ in $G$ if $S$ is the unique $(T, S)$-min cut in $G$, and hence, there exist $|S|$-many vertex-disjoint paths from $T$ to $S$. A \emph{$T$-closest cut} between $T$ and $S$ is a $(T, S)$-min cut that is closest to $T$. Kratsch and Wahlstr\"om \cite{KratschWahlstrom2020-RepresentativeSetsIrrelevantVertices} describe a simple efficient algorithm for computing a $T$-closest cut.
    In the minimization problem \smultiwaycut the value of a solution $x$ for undirected graph $G$ and set $T = \{t_1, \dots, t_s\}$ of terminals, is equal to $|S|$ if $x$ encodes a cut $S$ for $(G, T)$, which is furthermore disjoint from $T$, and $+\infty$ otherwise. In contrast to this, in the minimization problem \dtmultiwaycut we are given an arbitrarily large set $T$ of terminals, but now $x$ is allowed to encode a cut $S$ for $(G, T)$ which might intersect the terminal set $T$, in which case the value of $x$ is also equal to $|S|$.
    
    A \textit{cycle} $C$ of length $k$ in graph $G$ is a sequence of vertices $(v_1 \dots, v_k)$ such that $(v_1, \dots, v_k)$ is a path in $G$ and additionally $G$ contains the edge $\{v_k, v_1\}$, respectively $(v_k, v_1)$ if $G$ is directed. As with paths, the \textit{length} of a cycle is defined by the number of edges and thus $C$ is an \textit{odd cycle} if $k$ is odd and otherwise $C$ is an \textit{even cycle}. Note that we started counting the vertices of a path by zero, while we started counting with one for the cycle. Also observe that for $C$ being an odd cycle, for any $i \leq j \in [k]$ there exist the paths $(v_i, v_{i+1}, \dots, v_{j-1}, v_j)$ and $(v_i, v_{i-1}, \dots, v_1, v_k, v_{k-1}, \dots, v_{j+1}, v_j)$ one of which has odd length and the other has even length. A set $S \subseteq V(G)$ is called an \textit{odd cycle transversal} for a graph $G$ if in graph $G' = G-S$ there exists no odd cycle.  A graph $G$ without any odd cycles is called \emph{bipartite} because its vertex set can be partitioned into the sets $L$ and $R$ such that $G[L]$ ad $G[R]$ are independent sets, i.e., all edges of $G$ have one endpoint in $L$ and the other in $R$. 
    In the minimization problem \oddcycletransversal the value of a solution $s$ for undirected graph $G$ is equal to $|S|$ if $s$ encodes an odd cycle transversal $S$ for $G$, and $+\infty$ otherwise.
    
    A \emph{maximum matching} for a graph $G$ is a set $M \subseteq E(G)$ of edges in $G$ such that no two edges of $M$ share any endpoint. If for some disjoint vertex sets $U, W \subseteq V(G)$ every edge $e \in M$ is incident with one vertex in $U$ and one in $W$, then $M$ is said to be a matching \emph{between} $U$ and $W$. Furthermore, if for every vertex $w \in W$ the matching $M$ contains an edge that is incident with $w$, we say that $W$ is \emph{saturated} by $M$. A crown in a graph $G$ is a pair of disjoint vertex sets $(I, H)$ where (i) $I$ is an independent set of $G$; (ii) $H = N(I)$; (iii) there exists a matching $M$ between $I$ and $H$ which saturates $H$. The problem \problem{Maximum Matching} is a maximization problem on undirected graphs. Here the value of a solution $s$ is equal to $|M|$ if $s$ encodes a matching $M$ for $G$, and $-\infty$ otherwise. It is known under the name \emph{K\H{o}nig's Theorem} that in a bipartite graph, the optimum values for \vertexcover and \problem{Maximum Matching} are equal.
    
    \subparagraph{Boundaried graphs and boundaried kernelization.}
    A \emph{boundaried graph} is a special kind of annotated graph $G_B = (G, B)$, where we call $G$ the \emph{underlying graph} and $B$ the \emph{boundary}. We define the problems \vertexcover, \smultiwaycut, \dtmultiwaycut, \oddcycletransversal, and \problem{Maximum Matching} on boundaried graphs in the same way as on their underlying graphs, i.e., we ignore the boundary for these problems. Especially in situations when we are only interested in the underlying graphs, we will refer to boundaried graphs simply as graphs. For two boundaried graphs $G_B, H_C$ we define the operation of \emph{gluing} these graphs together, denoted by $G_B \oplus H_C$, which results in a new boundaried graph with boundary $B \cup C$ and consisting of the disjoint union of the vertex and edges sets of $G$ and $H$, under the identification of the vertices in $B \cap C$. Throughout this paper we will tacitly assume that $V(G)$ and $V(H)$ intersect exactly at $B \cap C$, in which case the underlying graph of $G_B \oplus H_C$ can simply be seen as the union of each the vertex sets and the edge sets of $G$ and $H$. Clearly, graph gluing is commutative and associative.
    
    Based on regular graph gluing, we additionally define gluing of boundaried graphs with additional vertex annotations, which we will call \emph{(vertex) annotated boundaried graphs}: For two such annotated boundaried graphs $(G_B, T_1, \dots, T_r)$ and $(H_C, U_1, \dots, U_r)$ for some $r \in \N$ and with $V(G) \cap V(H) \subseteq B \cap C$, the result of $(G_B, T_1, \dots, T_r) \oplus (H_C, U_1, \dots, U_r)$ is the annotated boundaried graph $(G_B \oplus H_C, W_1, \dots, W_r)$ where $W_i = T_i \cup U_i$ for every $i \in [r]$.
    
    We say that the instances $(G_B, \mathcal{T})$ and $(G'_B, \mathcal{T}')$ are \emph{gluing equivalent} with respect to an optimization problem $\Pi$ on annotated graphs and boundary $B$, denoted as $(G_B, \mathcal{T}) \equiv_{\Pi, B} (G'_B, \mathcal{T}')$, if there exists some constant $\Delta \in \Z$ such that for every boundaried graph $H_B$ and tuple $\mathcal{U}$ of $V(H)$-subsets, it holds that $\OPT_\Pi((G_B, \mathcal{T}) \oplus (H_B, \mathcal{U})) = \OPT_\Pi((G'_B, \mathcal{T}') \oplus (H_B, \mathcal{U})) + \Delta$. If $\Pi$ and $B$ are clear from context, we will omit them and write $\equiv$ instead. The optimization problem $\Pi$ has \emph{finite integer index} if the number of equivalence classes of $\equiv_{\Pi, B}$ is upper bounded by some function $f(|B|)$.
    
    For the sake of completeness, let us also define the gluing equivalence and finite index for $\Pi$ being a decision problem, although we will only work with optimization problems later on: With $\Pi$ being a decision problem, the instances $(G_B, \mathcal{T})$ and $(G'_B, \mathcal{T}')$ are said to be gluing equivalent with respect to $\Pi$ and $B$, if for every boundaried graph $H_B$ and tuple $\mathcal{U}$ of $V(H)$-subsets it holds that $(G_B, \mathcal{T}) \oplus (H_B, \mathcal{U}) \in \Pi$ if and only if $(G'_B, \mathcal{T}') \oplus (H_B, \mathcal{U}) \in \Pi$. If the number of equivalence classes of $\equiv_{\Pi, B}$ is upper bounded by some function $f(|B|)$, then the decision problem $\Pi$ is said to have \emph{finite index}.
    
    Let $\Pi$ be an optimization or decision problem and $\rho$ a minimization problem, both on annotated graphs. A \emph{boundaried kernelization} for the parameterized problem $\Pi[\rho]$ is a polynomial-time algorithm $\mathcal{A}$ which is given as input an annotated boundaried graph $(G_B, \mathcal{T})$ and feasible solution $s$ for $\rho$ on $(G_B, \mathcal{T})$ together with the value $\ell$ of $s$; and $\mathcal{A}$ outputs an annotated boundaried graph $(G'_B, \mathcal{T}')$, such that $(G'_B, \mathcal{T}')$ is gluing equivalent to $(G_B, \mathcal{T})$ and the size of $(G'_B, \mathcal{T}')$ is bounded by $f(|B| + \ell)$ for some computable function $f$. Additionally, if $\Pi$ is an optimization problem, then $\mathcal{A}$ also needs to output the offset $\Delta$ by which the optimum value for $(G_B, \mathcal{T}) \oplus (H_B, \mathcal{U})$ differs from that for $(G_B', \mathcal{T}') \oplus (H_B, \mathcal{U})$ for any boundaried graph $H_B$ and tuple $\mathcal{U}$ of $V(H)$-subsets. Analogously to (regular) kernelization, we define the size of $\mathcal{A}$ and the notions of a (randomized) (polynomial) boundaried kernelization and kernel. By minor adaptation of a result by Antipov and Kratsch \cite[Lemma 4]{AntipovKratsch2025-BoundariedKernelization} and under the assumption that $\Pi$ and $\rho$ are \NP-optimization problems, the existence of a (randomized) (polynomial) boundaried kernelization for parameterized problem $\Pi[\rho]$ also implies the existence of a (randomized) (polynomial) kernelization for $\Pi[\rho]$. Additionally we define a boundaried kernelization for $\Pi[\text{local solution}]$ to be a boundaried kernelization for $\Pi$ parameterized by itself, i.e., the input is an annotated boundaried graph $(G_B, \mathcal{T})$ and a feasible solution $s$ for $\Pi$ on $(G_B, \mathcal{T})$, together with the value $\ell$ of $s$.
    
    We will construct boundaried kernelization by the use of reduction rules. We say that a reduction rule is \emph{gluing safe}, if it gets as input an instance $(G_B, \mathcal{T})$ and outputs a gluing equivalent instance $(G'_B, \mathcal{T}')$ and the corresponding shift $\Delta$ in solution value, if $\Pi$ is an optimization problem and $\Delta \neq 0$.
    Furthermore, we will make use of the following two lemmas.
    
    \begin{lemma}[{\cite[Lemma 1]{AntipovKratsch2025-BoundariedKernelization}}]\label{lem:bk_superset}
        Let $\Pi$ be a pure graph problem, $G$ and $G'$ graphs, and $B, C$ vertex subsets of both $V(G)$ and $V(G')$ such that $B \subseteq C$. Then $G_C \equiv_{\Pi, C} G'_C$ implies $G_B \equiv_{\Pi, B} G'_B$, with the same offset $\Delta$ for these two equivalences, if $\Pi$ is an optimization problem.
    \end{lemma}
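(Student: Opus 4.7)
The plan is to reduce the claimed $B$-equivalence to the given $C$-equivalence by lifting any test graph $H_B$ to an enlarged boundaried graph $H'_C$ whose boundary is extended by exactly the vertices $C \setminus B$. Concretely, I would define $H'$ to have vertex set $V(H) \cup (C \setminus B)$ and edge set $E(H)$, with boundary $C$; since by assumption $C \subseteq V(G) \cap V(G')$ and the tacit convention $V(G) \cap V(H) = B$ applies to the input gluing, one checks that $V(G) \cap V(H') = C$, so the gluing $G_C \oplus H'_C$ is well defined (and the same holds for $G'_C \oplus H'_C$).

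The crucial observation is that the underlying graph of $G_C \oplus H'_C$ coincides with that of $G_B \oplus H_B$: no edges were added to $H'$, and the additional boundary vertices $C \setminus B$ of $H'$ are identified in the gluing with exactly the vertices $C \setminus B$ in $G$, which were already present (as non-boundary vertices of $G$) in $G_B \oplus H_B$. The same identity holds for $G'$. Since $\Pi$ is a pure graph problem, its optimum depends only on the underlying graph, so $\OPT_\Pi(G_B \oplus H_B) = \OPT_\Pi(G_C \oplus H'_C)$ and $\OPT_\Pi(G'_B \oplus H_B) = \OPT_\Pi(G'_C \oplus H'_C)$.

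Applying the hypothesis $G_C \equiv_{\Pi, C} G'_C$ with its offset $\Delta$ to the pair $H'_C$ then yields
\[\OPT_\Pi(G_B \oplus H_B) = \OPT_\Pi(G_C \oplus H'_C) = \OPT_\Pi(G'_C \oplus H'_C) + \Delta = \OPT_\Pi(G'_B \oplus H_B) + \Delta,\]
which is exactly $G_B \equiv_{\Pi, B} G'_B$ with the same shift $\Delta$. The only delicate point is the bookkeeping of vertex identifications during gluing; once one fixes the convention that the $C \setminus B$ vertices added to $H'$ are literally the same vertices as in $G$ (and $G'$), everything identifies correctly and no further effort is needed.
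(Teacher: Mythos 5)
Your proof is correct and is the natural argument for this lemma: lift any test graph $H_B$ to $H'_C$ by adding the vertices of $C \setminus B$ as isolated boundary vertices, check that $V(G) \cap V(H') = V(G') \cap V(H') = C$ so the tacit gluing convention holds, observe that the underlying graphs of $G_B \oplus H_B$ and $G_C \oplus H'_C$ (and likewise for $G'$) coincide, and invoke the $C$-equivalence with its offset $\Delta$. The paper does not reprove this statement (it cites Lemma~1 of Antipov--Kratsch), but this lifting argument is the expected proof and your bookkeeping of the vertex identifications is correct.
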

    
    \begin{lemma}[{\cite[Lemma 5]{AntipovKratsch2025-BoundariedKernelization}}]\label{lem:bk_lb}
        Let $\Pi$ be a problem on (annotated) graphs. Let $f$ be a computable function such that \parameterizedproblem{$\Pi$}{local solution} admits a boundaried kernelization of size $f$. Further fix some vertex set $B$ and family $\mathcal{C}$ of boundaried $\Pi$-instances such that for each $(G_B, \mathcal{T}) \in \mathcal{C}$ there exists some $\Pi$-solution $s$ with value at most $g(|B|)$ for some function $g$. Then the number of equivalence classes of $\equiv^\mathcal{C}_{\Pi, B}$ is upper bounded by $\Oh(2^{f(|B| + g(|B|))^2})$.
    \end{lemma}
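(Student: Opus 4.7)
The plan is to apply Lemma~\ref{lem:bk_lb} to an explicit infinite family of annotated boundaried \smultiwaycut-instances that share a fixed boundary $B$ and admit a trivial local solution, yet lie in pairwise distinct equivalence classes of $\equiv_{\Pi, B}$; since the bound from that lemma is finite for every computable size function $f$, this rules out a boundaried kernelization of any size, not just a polynomial one. Concretely, fix $B = \{b_1, b_2\}$ and, for each $n \in \N$, let $G_B^n$ be the boundaried graph with boundary $B$, vertex set $B \cup \{v_1, \dots, v_n\}$, and edge set $\{b_1 v_i, v_i b_2 : i \in [n]\}$, so that $G^n$ contains exactly $n$ internally vertex-disjoint $b_1$-$b_2$-paths. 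Annotate with the empty terminal set $T_{G^n} := \emptyset$; then $\emptyset$ is a feasible \smultiwaycut-solution of value $0$ on $(G_B^n, \emptyset)$, so Lemma~\ref{lem:bk_lb} applies with $g(|B|) = 0$.

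To show $(G_B^n, \emptyset) \not\equiv_{\Pi, B} (G_B^m, \emptyset)$ whenever $n \neq m$, I use two test gluings. The first, with $H_B^1$ the boundaried graph consisting only of $B$ (no extra vertices or edges) and annotation $\{b_1, b_2\}$, yields the glued instance $(G_B^n, \{b_1, b_2\})$; by Menger's theorem applied to the $n$ internally vertex-disjoint $b_1$-$b_2$-paths, its optimum cut disjoint from $\{b_1, b_2\}$ has size $n$. The second, with $H_B^2$ having vertex set $B \cup \{t_1, w, t_2\}$, edge set $\{t_1 w, w t_2\}$, and annotation $\{t_1, t_2\}$, yields an instance whose only $t_1$-$t_2$-path runs through $w$ and is entirely disjoint from $V(G^n)$, so the optimum equals $1$ independently of $n$. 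Hence the differences in optima between $n$ and $m$ are $n - m$ under the first gluing and $0$ under the second; no single constant $\Delta \in \Z$ can reconcile both equalities, establishing inequivalence.

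Consequently the family $\mathcal{C} = \{(G_B^n, \emptyset) : n \in \N\}$ has infinitely many $\equiv^{\mathcal{C}}_{\Pi, B}$-classes while $|B| = 2$ and $g(|B|) = 0$, contradicting the bound $\Oh(2^{f(2)^2})$ that Lemma~\ref{lem:bk_lb} would force on any putative boundaried kernelization of size $f$. The main subtlety is designing the two distinguishing gluings so that their optima react differently to $n$: placing the terminals onto $B$ in the first gluing forces the internal $b_1$-$b_2$-connectivity of $G^n$ to be reflected in the optimum, while placing the terminals strictly outside $B$ in the second keeps $G^n$ irrelevant. This asymmetry is precisely what prevents any bounded summary of $G^n$ from encoding the unbounded connectivity between the boundary vertices, mirroring the informal obstruction highlighted in the paragraph preceding the theorem.
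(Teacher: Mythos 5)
Your proposal does not prove the stated lemma; it proves a different statement that merely \emph{uses} the lemma. Lemma~\ref{lem:bk_lb} is a positive counting claim: \emph{assuming} a boundaried kernelization of size $f$ exists, the number of $\equiv^{\mathcal{C}}_{\Pi,B}$-classes is at most $\Oh(2^{f(|B|+g(|B|))^2})$. What you wrote is a lower-bound construction for \smultiwaycutlocal --- an explicit infinite family $K_{2,n}$ of pairwise inequivalent boundaried instances with local solution value $0$ --- which you then play off \emph{against} the bound of Lemma~\ref{lem:bk_lb} to conclude that no boundaried kernelization exists. That is essentially the paper's proof of Theorem~\ref{thm:s-mwc_BTH_noBK}, an application of the lemma, and it is circular as a proof of the lemma itself: you invoke the very bound you were asked to establish.

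The missing argument is a straightforward counting one. Run the assumed boundaried kernelization on each $(G_B,\mathcal{T})\in\mathcal{C}$, feeding it the promised solution of value at most $g(|B|)$; by definition it outputs in polynomial time an instance $(G'_B,\mathcal{T}')$ that is gluing equivalent to the input (with some offset $\Delta$) and has size at most $s:=f(|B|+g(|B|))$. The number of distinct annotated boundaried graphs on at most $s$ vertices over the fixed boundary $B$ is $2^{\Oh(s^2)}$ (choices of edges dominate; annotations contribute only further factors of $2^{\Oh(s)}$). Since $\equiv_{\Pi,B}$ is an equivalence relation and offsets compose under transitivity, any two members of $\mathcal{C}$ mapped to the same output are gluing equivalent to each other; hence the number of equivalence classes of $\equiv^{\mathcal{C}}_{\Pi,B}$ is at most the number of possible outputs, i.e.\ $\Oh(2^{f(|B|+g(|B|))^2})$. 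Your two-gluing distinguishing argument for $K_{2,n}$ is fine as far as it goes, but it belongs to the proof of Theorem~\ref{thm:s-mwc_BTH_noBK}, not here.
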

    
    \section{Matroid tools for kernelization}\label{section:matroid}
    In this section we recall some of the results of Kratsch and Wahlstr\"om~\cite{KratschWahlstrom2020-RepresentativeSetsIrrelevantVertices} regarding the use of matroids for polynomial kernelization, as well as the corresponding definitions.
    
    A \emph{matroid} is a pair $M = (E, \cI)$ of a \emph{ground set} $E$ and a collection of \emph{independent sets} $\cI \subseteq 2^E$, which are subsets of $E$, such that: (i) $\emptyset \in \cI$; (ii) if $I_1 \subseteq I_2$ and $I_2 \in \cI$, then also $I_1 \in \cI$; and (iii) if $I_1, I_2 \in \cI$ and $|I_2| > |I_1|$, then there exists some $x \in I_2 \setminus I_1$ such that $I_1 \cup \{x\} \in \cI$. An independent set $B \in \cI$ is called a \emph{basis} of $M$ if no superset of $B$ is independent. One can also define matroid $M$ by its set of bases. The \emph{rank} $r(X)$ of a subset $X \subseteq E$ is the largest cardinality of an independent set $I \subseteq X$. The rank of $M$ is $r(M) := r(E)$.
    Some matroid $M$ is said to be \emph{linear}, if there exists some $m \times n$ matrix over a field $\mathbb{F}$, such that $M = (E, \mathcal{I})$, where $E$ is the set of columns of $A$, and $\mathcal{I}$ contains those subsets of $E$ that are linearly independent over $\mathbb{F}$. In such a case we also say that $A$ \emph{represents} $M$, and $A$ \emph{is a representation of} $M$.
    
    Let $D = (V, A)$ be a directed graph and let $S, T \subseteq V$. We say that $T$ is \emph{linked} to $S$ in $D$ if there exist $|T|$-many vertex-disjoint paths from $S$ to $T$, also allowing paths of length zero. In particular, any set is linked to itself. Given a directed graph $D = (V, A)$, a set $S \subseteq V$ of source vertices, and a set $U \subseteq V$ of sink vertices, we define a special case of matroids, called \emph{gammoid}, by the sets $T \subseteq U$ that are linked to $S$ in $D$. We will only use a further special case thereof with $U = V$, called \emph{strict gammoid}, and slightly abuse notation by simply calling these gammoids. Due to Perfect~\cite{Perfect1968-ApplicationsMengersGraph} and Marx~\cite{Marx2009-ParameterizedViewMatroid}, a representation of a (strict) gammoid can be computed in randomized polynomial time, with one-sided error that can be made exponentially small in the size of the graph.
    
    Let $M = (V, \cI)$ be a matroid and let $X$ be an independent set in $M$. A set $Y \subseteq V$ is said to \emph{extend} $X$ in $M$ if it holds that $X \cap Y = \emptyset$ and $X \cup Y \in \cI$. For a collection of $V$-subsets $\mathcal{Y} \subseteq 2^V$ we say that a subset $\mathcal{Y}' \subseteq \mathcal{Y}$ is $r$-representative for $\mathcal{Y}$ if for every set $X \subseteq V$ of size at most $r$, the existence of a set $Y \in \mathcal{Y}$ that extends $X$ in $M$, implies the existence of a set $Y' \in \mathcal{Y}'$ that also extends $X$ in $M$.
    
    Next, we state a result by Marx \cite{Marx2009-ParameterizedViewMatroid} dubbed as the \emph{representative sets lemma} by Kratsch and Wahlstr\"om, and follow that by results that build up on it.
    
    \begin{lemma}[\cite{Marx2009-ParameterizedViewMatroid,KratschWahlstrom2020-RepresentativeSetsIrrelevantVertices}]
        Let $M$ be a linear matroid of rank $r+s$, and let $\mathcal{Y}$ be a collection of independent sets of $M$, each of size $s$. There exists a set $\mathcal{Y}'\subseteq \mathcal{Y}$ of size at most $\binom{r+s}{s}$ that is $r$-representative for $\mathcal{Y}$. Furthermore, given a representation $A$ of $M$, we can find such a set $\mathcal{Y}'$ in time $(m+||A||)^{\Oh(s)}$, where $m = |\mathcal{Y}|$ and $||A||$ denotes the encoding size of $A$.
    \end{lemma}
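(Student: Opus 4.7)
The plan is to prove the statement via the classical exterior algebra / wedge product approach of Lovász, which turns the combinatorial ``extension'' condition into a purely linear-algebraic nonvanishing condition, and then exploit this to run a greedy linear-independence selection.

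First I would fix a representation $A$ of $M$ over $\mathbb{F}$ and, for each element $e \in E$, identify $e$ with its column vector $a_e \in \mathbb{F}^{r+s}$. For a set $Z = \{e_1, \dots, e_{|Z|}\} \subseteq E$ write $\omega_Z = a_{e_1} \wedge \dots \wedge a_{e_{|Z|}}$ in the exterior algebra. The key fact I would invoke (and briefly justify by expanding the wedge product as a signed sum of determinants of column submatrices) is that for disjoint $X$ and $Y$ with $|X|+|Y| = r(M)$, the set $X \cup Y$ is independent in $M$ iff $\omega_X \wedge \omega_Y \neq 0$. So ``$Y$ extends $X$'' translates precisely into nonvanishing of $\omega_X \wedge \omega_Y \in \Lambda^{r+s}(\mathbb{F}^{r+s})$, after padding $X$ with a dummy basis of a complementary flat if $|X| < r$; this reduction to the $|X| = r$ case is the main technical wrinkle I would want to be careful about.

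For the existence part, I would consider the subspace $U \subseteq \Lambda^s(\mathbb{F}^{r+s})$ spanned by $\{\omega_Y : Y \in \mathcal{Y}\}$ and pick greedily a basis $\{\omega_{Y'} : Y' \in \mathcal{Y}'\}$ from this set. Since $\dim \Lambda^s(\mathbb{F}^{r+s}) = \binom{r+s}{s}$, we have $|\mathcal{Y}'| \leq \binom{r+s}{s}$. To show $r$-representativeness, fix an $X$ of size at most $r$ and a $Y \in \mathcal{Y}$ extending $X$; writing $\omega_Y = \sum_{Y' \in \mathcal{Y}'} c_{Y'} \omega_{Y'}$ and wedging with (the padded) $\omega_X$, which is linear, we get $0 \neq \omega_X \wedge \omega_Y = \sum_{Y'} c_{Y'} (\omega_X \wedge \omega_{Y'})$, so some summand is nonzero, giving a $Y' \in \mathcal{Y}'$ that extends $X$.

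For the algorithmic part I would realize $\Lambda^s$ explicitly as $\mathbb{F}^{\binom{r+s}{s}}$, with basis indexed by $s$-subsets of the rows of $A$, and compute each $\omega_Y$ as the vector of $s \times s$ minors of the column submatrix indexed by $Y$. This needs $\binom{r+s}{s} \leq (r+s)^s$ many $s \times s$ determinants per $Y$, and then one greedy Gaussian elimination over $m$ such vectors, which fits comfortably in $(m + \|A\|)^{\Oh(s)}$. The main obstacle I expect is being precise about the $|X| < r$ case in the characterization ``independent iff wedge nonzero'' (where one must complete $X$ to a set of size $r$ inside an appropriate flat without altering the existence of an extending $Y$), and about handling degeneracies that arise when $\mathcal{Y}$ contains sets that are not of full rank in the matroid; everything else is routine exterior algebra and linear-algebra bookkeeping.
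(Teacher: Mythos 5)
The paper cites this lemma without supplying its own proof, but your proposal matches the exterior-algebra argument of Lov\'asz used in the cited source (Marx~2009), and it is correct. One remark: the ``main technical wrinkle'' you flag is not actually there. You do not need to pad $X$ to size $r$: for any $|X| \leq r$, the map $\omega_X \wedge \cdot \colon \Lambda^s(\mathbb{F}^{r+s}) \to \Lambda^{|X|+s}(\mathbb{F}^{r+s})$ is linear, and the nonvanishing $\omega_X \wedge \omega_Y \neq 0$ already captures both $X \cap Y = \emptyset$ (a repeated column forces the wedge to vanish) and the independence of $X \cup Y$ in the represented matroid; wedging both sides of $\omega_Y = \sum_{Y' \in \mathcal{Y}'} c_{Y'} \omega_{Y'}$ with $\omega_X$ then gives representativeness verbatim, with no flat-completion step and hence no concern about whether padding preserves the existence of an extending $Y$. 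Similarly, the worry about members of $\mathcal{Y}$ ``not of full rank'' is moot here: the hypothesis already requires every $Y \in \mathcal{Y}$ to be an independent set of size $s$, so every $\omega_Y$ is nonzero; any dependent $s$-sets, if one allowed them, would contribute the zero vector and could simply be discarded at the outset.
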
 
    
    \begin{lemma}[{\cite[Theorem 5]{Kratsch2014-RecentDevelopmentsKernelization}}]\label{lem:repset}
        Let $M$ be a gammoid and let $\mathcal{A} = \{A_1, \dots, A_m\}$ be a collection of independent sets, each of size $p$. We can find in randomized polynomial time a set $\mathcal{A}' \subseteq \mathcal{A}$ of size at most $\binom{p + q}{p}$ that is $q$-representative for $\mathcal{A}$.
    \end{lemma}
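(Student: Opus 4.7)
The plan is to combine the two ingredients recalled earlier in this section. First, since $M$ is a (strict) gammoid, invoke the randomized Perfect--Marx construction to obtain a linear representation $A$ of $M$ over a sufficiently large field $\mathbb{F}$ in polynomial time, with one-sided error exponentially small in the size of the underlying digraph. From this point on the problem is reduced to applying the representative sets lemma for linear matroids to the family $\mathcal{A}$ of $p$-element independent sets via the representation $A$.

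The one mismatch is that the representative sets lemma is stated for a matroid of rank exactly $p+q$, while a gammoid may have rank much larger than that. The standard remedy is to compute a \emph{truncation} $M'$ of $M$ to rank $p+q$: pick a random $(p+q) \times r(M)$ matrix $R$ over a sufficiently large extension field and set $A' := RA$. A Schwartz--Zippel argument shows that, except with probability exponentially small in the size of the input, every set of at most $p+q$ columns of $A'$ is linearly independent if and only if the corresponding columns of $A$ are; hence $A'$ represents $M'$. Since each $A_i$ has size $p$ and every test set $X$ has size at most $q$, all relevant independence queries concern sets of size at most $p+q$, so a subfamily $\mathcal{A}' \subseteq \mathcal{A}$ is $q$-representative for $\mathcal{A}$ in $M$ if and only if it is $q$-representative in $M'$. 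Applying the representative sets lemma to $M'$ through $A'$ then yields $\mathcal{A}' \subseteq \mathcal{A}$ of size at most $\binom{p+q}{p}$ within the claimed time.

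The main obstacle I would expect is the bookkeeping around the two randomized linear-algebraic constructions. One has to choose the field size (for instance $2^{\Theta(n)}$ with $n$ the size of the input) large enough that both the Perfect--Marx representation and the Schwartz--Zippel truncation fail with probability only $\Oh(2^{-\Theta(n)})$, while keeping all matrix entries of polynomial encoding size so that the subsequent invocation of the representative sets lemma stays within the claimed running time. Once this setup is fixed, correctness and the size bound follow directly from the representative sets lemma applied to $A'$.
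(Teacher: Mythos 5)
The paper does not give its own proof of this lemma; it cites it directly from Kratsch's survey \cite{Kratsch2014-RecentDevelopmentsKernelization}. Your derivation is nevertheless the standard one and is correct: obtain a linear representation of the gammoid by the Perfect--Marx randomized construction, truncate it to rank $p+q$ by left-multiplying with a random $(p+q)\times r(M)$ matrix over a sufficiently large field (with a Schwartz--Zippel union bound over all $\leq(p+q)$-subsets of columns certifying that the truncation preserves exactly the relevant independences), observe that $q$-representativity is invariant under this truncation because every query involves a set of size at most $p+q$, and then apply the representative sets lemma for linear matroids. Two small things worth making explicit if you write this out in full: (1) if $r(M) < p+q$ you should pad the representation with free coloops rather than truncate, so that the representative sets lemma's rank hypothesis is met; (2) the ``polynomial time'' claim rests on $p$ being a fixed constant (here $p=3$ in the one application), since the representative sets lemma's running time is $(m+\lVert A\rVert)^{\Oh(p)}$ and the Schwartz--Zippel field size must be about $n^{\Theta(p+q)}$.
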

    
    \begin{lemma}[{\cite[Theorem 1.2]{KratschWahlstrom2020-RepresentativeSetsIrrelevantVertices}}]\label{lem:cutset_small} 
        Let $G = (V, E)$ be a directed graph and let $S, T \subseteq V$. Let $r$ denote the size of a minimum $(S, T)$-vertex cut (which may intersect $S$ and $T$). There exists a set $Z \subseteq V$ with $|Z| = \Oh(|S| \cdot |T| \cdot r)$, such that for any $A \subseteq S$ and $B \subseteq T$, it holds that $Z$ contains a minimum $(A, B)$-vertex cut as a subset. We can find such a set in randomized polynomial time with failure probability $\Oh(2^{-|V|})$.
    \end{lemma}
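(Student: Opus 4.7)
I would reduce the cut-cover lemma to the representative-sets machinery (Lemma~\ref{lem:repset}) by encoding each potential minimum $(A,B)$-cut as an independent set in a suitable gammoid. Construct an auxiliary directed graph $G'$ on two copies of $V$: a forward layer $V^1$ preserving the arcs of $G$, a reverse layer $V^2$ with all arcs reversed, and for each $v \in V$ a pairing vertex $\hat v$ reachable by two arcs $v^1 \to \hat v$ and $v^2 \to \hat v$. Let $M$ be the strict gammoid on $G'$ with source set $S^1 \cup T^2$, where $S^1 = \{s^1 : s \in S\}$ and $T^2 = \{t^2 : t \in T\}$. This matroid has rank $|S| + |T|$, is linear, and by Perfect's theorem a representation is computable in randomized polynomial time with failure probability $\Oh(2^{-|V|})$.

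Next, I would verify the key correspondence. For every $A \subseteq S$ and $B \subseteq T$, fix a minimum $(A,B)$-cut $C_{A,B}$ in $G$ of size $s_{A,B} \le r$. Menger's theorem gives $s_{A,B}$ internally vertex-disjoint paths from $A$ to $C_{A,B}$ in $G$ and the same from $C_{A,B}$ to $B$; lifting the first family into the forward copy, the second into the reverse copy, and terminating both at the pairing vertices shows that $\widehat{C}_{A,B} := \{\hat v : v \in C_{A,B}\}$ is linked to $A^1 \cup B^2 \subseteq S^1 \cup T^2$ in $G'$, hence an independent set of $M$ of size at most $r$. A basis-exchange argument then shows that $C_{A,B}$ is actually a \emph{minimum} $(A,B)$-cut in $G$ iff $\widehat{C}_{A,B}$ extends $(S \setminus A)^1 \cup (T \setminus B)^2$ to a basis of $M$, which converts min-cut queries into standard matroid extension queries.

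For the final step, collect the family $\mathcal{Y} = \{\widehat{C}_{A,B} : A \subseteq S,\, B \subseteq T\}$ and apply Lemma~\ref{lem:repset}, so that the projection $Z \subseteq V$ of the union of the resulting representative subfamily has the covering property. A single naive application yields a subfamily of size $\binom{|S|+|T|+r}{r}$, which is not polynomial in $r$. To obtain the sharp $\Oh(|S| \cdot |T| \cdot r)$ bound I would instead iterate over pairs $(a,b) \in S \times T$: for each pair build an analogous gammoid whose effective rank is $\Oh(r)$, extract a representative set of size $\Oh(r)$ covering a minimum $(a,b)$-cut, and take the union of the $|S| \cdot |T|$ resulting sets. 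The main obstacle is showing that this per-pair union still covers a min $(A,B)$-cut for arbitrary $A,B$; I expect to close this gap via an uncrossing argument built on submodularity of the vertex-cut function, decomposing an $(A,B)$-min-cut along its Menger paths into fragments each charged to a single pair $(a,b)$ and then reassembling a cut entirely inside $Z$.
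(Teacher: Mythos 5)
Note first that the paper itself gives no proof of this lemma: it is imported verbatim from Kratsch and Wahlstr\"om, so the comparison is against their argument rather than anything in the present document.

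Your plan has a genuine gap, and you flag it yourself: after observing that a single application of Lemma~\ref{lem:repset} to the family $\{\widehat{C}_{A,B}\}$ overshoots the bound, you propose to iterate over pairs $(a,b)\in S\times T$, take a union of per-pair cut covers of size $\Oh(r)$, and then invoke an unproved ``uncrossing'' step to argue that this union still covers a minimum $(A,B)$-cut for arbitrary $A,B$. That step \emph{is} the theorem: a minimum $(A,B)$-cut can contain vertices that lie on no minimum $(a,b)$-cut for any single pair, and submodularity does not obviously let you decompose a min $(A,B)$-cut into single-pair fragments and reassemble a cut inside $Z$. Nothing in the proposal closes this. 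There is also a smaller technical error upstream: the claimed ``iff'' that $C_{A,B}$ is a minimum $(A,B)$-cut exactly when $\widehat{C}_{A,B}$ extends $(S\setminus A)^1\cup(T\setminus B)^2$ to a basis of your rank-$(|S|+|T|)$ gammoid fails on a dimension count, since $|\widehat{C}_{A,B}|+|S\setminus A|+|T\setminus B|=|S|+|T|-|A|-|B|+|C_{A,B}|$ and Menger gives $|C_{A,B}|\le\min(|A|,|B|)<|A|+|B|$, so the set can never reach rank $|S|+|T|$; and even the weaker independence statement only certifies disjoint paths reaching each $v\in C$ from $A$ or from $B$, not that $C$ actually separates $A$ from $B$. (Also, the members of $\{\widehat{C}_{A,B}\}$ have varying sizes, so Lemma~\ref{lem:repset} does not apply to that family as written.)

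The Kratsch--Wahlstr\"om argument that this lemma cites sidesteps both problems with a different device. The family is vertex-indexed, not cut-indexed: one set $\{v_1,v_2,v_3\}$ per vertex $v$, built from sink-only copies, living in a \emph{direct sum} of three gammoids whose ranks are $|S|$, $|T|$, and $r$ (sources $S$, sources $T$ on the reversed graph, and a third part of rank $r$ tied to a fixed closest $(S,T)$-min-cut). The refined representative-set bound for direct sums multiplies the per-summand binomials, giving a representative subfamily of size $\Oh(|S|\cdot|T|\cdot r)$ in one shot. A path-packing argument then shows that whenever $v$ lies in a (closest) minimum $(A,B)$-cut, the triple $\{v_1,v_2,v_3\}$ extends an appropriate independent set, so the kept vertices cover some minimum $(A,B)$-cut for every $A,B$. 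The multiplicative direct-sum trick and the vertex-indexed family are the two ideas missing from your proposal, and they are precisely what make a per-pair iteration and uncrossing unnecessary.
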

    
    \begin{lemma}[{\cite[Corollary 1.5]{KratschWahlstrom2020-RepresentativeSetsIrrelevantVertices}}]\label{lem:cutset}
        Let $G = (V, E)$ be an undirected graph and $X \subseteq V$. Let $s$ be a constant. Then there is a set $Z \subseteq V$ of $\mathcal{O}(|X|^{s+1})$ vertices such that for every partition $\mathcal{X} = (X_0, X_X, X_1, \dots, X_s)$ of $X$ into $s+2$ possibly empty parts, the set $Z$ contains a minimum multiway cut of $(X_1, \dots, X_s)$ in the graph $G - X_X$ as a subset. We can compute such a set in randomized polynomial time with failure probability $\mathcal{O}(2^{-|V|})$.
    \end{lemma}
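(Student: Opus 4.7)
The plan is to derive Corollary~\ref{lem:cutset} from Lemma~\ref{lem:cutset_small} by a vertex-splitting reduction followed by an inductive argument on $s$ that reduces $s$-multiway cuts to bipartite vertex cuts.

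First, I would convert $G$ into a directed graph $D$ via the standard vertex-splitting trick: replace each $v$ by $v^-, v^+$ joined by an arc $(v^-, v^+)$, and each undirected edge $\{u,v\}$ by the antiparallel arcs on the split copies. Undirected vertex cuts in $G$ correspond bijectively to directed vertex cuts in $D$. I would then apply Lemma~\ref{lem:cutset_small} to $D$ with $S = T$ equal to the split copies of $X$; since the min $(X, X)$-vertex cut in $D$ is at most $|X|$, this yields a set $Z_0$ of size $\mathcal{O}(|X|^3)$ that contains a minimum $(A, B)$-vertex cut for every $A, B \subseteq X$. Crucially, Lemma~\ref{lem:cutset_small} permits the cut to intersect $S$ and $T$, so cuts passing through arbitrary $X$-vertices are already covered; this is what lets us handle the forbidden set $X_X$, since a minimum cut in $G - X_X$ can be regarded as a cut in $G$ that ``absorbs'' $X_X$.

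I would then extend the coverage from bipartite $(A,B)$-cuts to $s$-multiway cuts by induction on $s$. The base case $s \leq 2$ is immediate from $Z_0$. For the inductive step, given a minimum multiway cut $S^*$ of $(X_1, \dots, X_s)$ in $G - X_X$, let $C_1$ be the $X_1$-closest component of $X_1$ in $(G - X_X) - S^*$. Its boundary $\partial C_1$ is a minimum $(X_1, \bigcup_{j \geq 2} X_j)$-vertex cut in $G - X_X$, and an uncrossing argument combined with the $(A,B)$-coverage of $Z_0$ lets us replace $\partial C_1$ by an equivalent cut inside $Z_0$ without increasing the total cut value. The residual $S^* \setminus \partial C_1$ is then a minimum $(s{-}1)$-multiway cut of $(X_2, \dots, X_s)$ in $(G - X_X) - (C_1 \cup \partial C_1)$; by the inductive hypothesis this residual is covered by a further precomputed set, and combining across the canonical choices yields the final $\mathcal{O}(|X|^{s+1})$ bound.

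The main obstacle is carrying out this induction uniformly across all partitions $(X_0, X_X, X_1, \dots, X_s)$ with a single precomputed $Z$ rather than a separate cover per partition. This relies on a canonical choice of the ``first'' component (the $X_1$-closest cut, which depends only on the triple $(X_1, \bigcup_{j \geq 2} X_j, X_X)$ and not on the finer partition into $X_0, X_2, \dots, X_s$), on careful bookkeeping so that the recursion accumulates only a single extra factor of $|X|$ per terminal part, and on verifying that the uncrossing step preserves cut minimality in the punctured graph $G - X_X$ rather than in the original $G$.
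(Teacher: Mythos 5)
This lemma is cited from Kratsch and Wahlstr\"om (their Corollary~1.5) rather than proved in the paper, so I compare against the cited source. Your vertex-splitting reduction and the $s \le 2$ case are fine. The inductive step for $s \ge 3$, however, has two real gaps. First, the residual $S^* \setminus \partial C_1$ is a multiway cut of $(X_2,\dots,X_s)$ in the graph $(G - X_X) - (C_1 \cup \partial C_1)$, and $C_1 \cup \partial C_1$ is not a subset of $X$; this is therefore not an instance of the statement for $s-1$ parts on $G$, and the inductive hypothesis simply does not apply. The residual graph depends on the cut (not only on the partition $\mathcal{X}$), so no single precomputed cover handles it, regardless of how canonically $C_1$ is chosen. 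Second, the claim that $\partial C_1$ can be ``uncrossed into $Z_0$ without increasing the total cut value'' is false in general, because a set $Z_0$ satisfying only the $(A,B)$-cut-cover property of Lemma~\ref{lem:cutset_small} need not contain any minimum multiway cut once $s \ge 3$.

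A concrete counterexample to the uncrossing step: let $G$ be a spider with center $c$ and $s \ge 3$ legs, each a path of length $n \ge 2$ ending in a leaf $\ell_i$; set $X = \{\ell_1,\dots,\ell_s\}$, $X_i = \{\ell_i\}$, and $X_0 = X_X = \emptyset$. The unique minimum multiway cut is $\{c\}$, so any valid $Z$ must contain $c$. Yet $Z_0 := X \cup \{\text{the neighbor of } \ell_i : i \in [s]\}$ already satisfies the conclusion of Lemma~\ref{lem:cutset_small} with $S = T = X$ (for disjoint $A,B \subseteq X$ a minimum $(A,B)$-cut of size $\min(|A|,|B|)$ lies entirely on the leaf-adjacent vertices, and $A \cap B$ is covered by $X$), and $c \notin Z_0$. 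Starting from $S^* = \{c\}$, pushing $\partial C_1 = \{c\}$ to a representative inside $Z_0$ yields the neighbor of $\ell_1$, after which the $s-1$ remaining legs still meet at $c$, so the combined cut has size $2$, strictly larger. This is exactly why the bound in the statement is $\Oh(|X|^{s+1})$ and not $\Oh(|X|^3)$: the cited proof does not reduce the multiway cut to bipartite pieces at all. It instead computes a single representative family of $(s{+}1)$-element sets (a vertex together with $s$ sink-only copies) in the gammoid with sources $X$, in the same spirit as the \dtmultiwaycut argument in Section~\ref{section:dtmwc}, and the exponent $s+1$ comes directly from the set size $p$ in the $\binom{p+q}{p}$ bound of the representative-sets lemma.
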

    
    Such a set $Z$, as described in the preceding two lemmas, is called a \emph{cutset} for the pair of vertex sets $(S, T)$ (Lemma~\ref{lem:cutset_small}), respectively for the vertex set $X$ (Lemma~\ref{lem:cutset}).
    
    The main operation for reducing our graphs will be that of \emph{bypassing} a vertex $v$ in a graph $G$ (also called making vertex $v$ \emph{undeletable} in $G$). In this operation one removes the vertex $v$ from $G$ and adds shortcut edges between the neighbors of $v$. In other words, for all paths $(u, v, w)$ in $G$, one adds the edge $(u, w)$ or $\{u, w\}$ (depending on whether the graph is (un)directed or mixed) unless already present. Effectively, this operation forbids to take $v$ into a cut, while preserving the separation achieved by any vertex cuts that avoid $v$. For more detail, see \cite{KratschWahlstrom2020-RepresentativeSetsIrrelevantVertices}.
    Under the term \emph{bypassing a set $W \subseteq V(G)$}, we mean the repeated operation of bypassing vertices of $W$ one after another, in any order. Observe that by repeated use, the following lemma also holds when bypassing a vertex set $W$, if $X \subseteq V \setminus W$.
    
    \begin{lemma}[{\cite[Proposition 2.3]{KratschWahlstrom2020-RepresentativeSetsIrrelevantVertices}}]\label{lem:bypass}
        Let $G = (V, E)$ be an undirected, directed, or mixed graph and let $G'$ be the result of bypassing some vertex $v \in V$ in $G$. Then, for any set $X \subseteq V \setminus \{v\}$ and any vertices $s, t \in V \setminus (X \cup \{v\})$, there is a path from $s$ to $t$ in $G - X$ if and only if there is a path from $s$ to $t$ in $G' - X$.
    \end{lemma}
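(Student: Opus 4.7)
The plan is to prove both directions of the equivalence by translating paths between $G-X$ and $G'-X$, relying on the defining property of bypassing: every edge added to $G'$ corresponds to a length-two path through $v$ in $G$, and the only edges of $G$ missing from $G'$ are those incident with $v$.

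For the direction ``path in $G-X$ implies path in $G'-X$'', fix a simple $s$-$t$ path $P$ in $G-X$. If $P$ avoids $v$, then none of its edges is touched by the bypass operation and $P$ itself is a path in $G'-X$. Otherwise $P$, being simple, traverses $v$ exactly once as some two-edge subpath $(u, v, w)$ whose orientation (if any) matches the direction of travel along $P$. By definition of bypassing, the corresponding shortcut edge (the undirected edge $\{u, w\}$, or the arc $(u, w)$ with $u$ an in-neighbor and $w$ an out-neighbor of $v$) is present in $G'$. Since $u, w \notin X$, substituting this shortcut for the subpath $(u, v, w)$ yields an $s$-$t$ walk in $G'-X$, from which a simple $s$-$t$ path can be extracted in the standard way.

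For the direction ``path in $G'-X$ implies path in $G-X$'', take a simple $s$-$t$ path $P'$ in $G'-X$. Every edge of $P'$ is either an original edge of $G$ (necessarily not incident with $v$, since such edges were removed) or a shortcut edge. Each shortcut edge used, say from $u$ to $w$, arose from a length-two path through $v$ in $G$; because $v \notin X$ by assumption, this detour survives in $G-X$. Expanding every shortcut edge of $P'$ into its corresponding detour through $v$ produces an $s$-$t$ walk in $G-X$, and a simple $s$-$t$ path is then obtained by short-cutting any repeated vertices.

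The only real subtlety is bookkeeping in the directed and mixed cases: each occurrence of $v$ on a contracted or expanded path must be matched with shortcut arcs pointing the correct way, from a predecessor of $v$ along the path to a successor of $v$ along the path. The bypass definition is set up precisely so that both the expansion and contraction steps respect these orientations, and beyond this the argument reduces to routine walk-to-path extraction; I expect no deeper obstacle.
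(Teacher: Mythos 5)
Your proof is correct. The paper does not give its own proof of this statement---it is imported verbatim as Proposition~2.3 of Kratsch and Wahlstr\"om---so there is no in-paper argument to compare against. Your expand/contract argument is the standard one: a simple $s$-$t$ path in $G-X$ uses $v$ at most once as an internal vertex, and that occurrence $(u,v,w)$ is precisely the pattern the bypass operation turns into a shortcut edge with the matching orientation; conversely, every shortcut edge used in $G'-X$ can be re-expanded into its length-two detour through $v$, which is available since $v\notin X$; in both directions one gets a walk and extracts a simple path. The one point worth stating explicitly in a polished write-up is the orientation bookkeeping you already flagged for the directed and mixed cases, and also that the endpoints $u,w$ of the replaced subpath lie outside $X\cup\{v\}$ because they already lie on the given path, so the shortcut edge really is usable in $G'-X$.
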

    
    \section{$s$-Multiway Cut[local solution]}\label{section:smwc}
    As our first result, we construct a boundaried kernelization variant of the randomized polynomial kernel for the \smultiwaycut problem by Kratsch and Wahlstr\"om \cite{KratschWahlstrom2020-RepresentativeSetsIrrelevantVertices}. First, let us take a look at the main idea of their kernelization algorithm. Let us be given a graph $G$ and a terminal set $T$. We can assume that no two terminals are adjacent and that $|T| \leq s$, else there exists no feasible solution and we can output a constant-size instance without any feasible solution. Now, let us additionally fix any feasible solution $Y$. It is apparent that for each terminal $t \in T$ every neighbor $x \in N(t)$ is contained either in $Y$ or in the same component as $t$ in $G - Y$. That is where Kratsch and Wahlstr\"om make use of their cut cover result Lemma~\ref{lem:cutset} in order to find a set $Z$ of size $\Oh(|N(T)|^{s+1})$ such that for any combination of forcing vertices from $N(T)$ into the solution or into the components of individual terminals, there exists an optimum solution corresponding to that choice, which is in addition totally contained in $Z$. This leads to the correctness of bypassing any vertex which is not contained in $Z \cup T \cup N(T)$. By an additional reduction through an LP-based approach of Guillemot \cite{Guillemot2011-FPTAlgorithmsPathtransversal}, Kratsch and Wahlstr\"om bound the size of $N(T)$ by $\Oh(k)$ beforehand, with $k$ being the sought solution size. This results in a total bound of $\Oh(k^{s+1})$ for the vertex size of the kernel.
    
    In the case of boundaried kernelization, we are only given a local part $G_B$ of the graph and only a subset $T_G$ of the terminals, while the rest $H_B$ of the graph and $T_H$ of the terminal set are not known to the kernelization algorithm. This means that we cannot directly work with every combination of forcing neighbors of terminals into the solution or into the terminals' corresponding components. Actually, this even leads us to Theorem~\ref{thm:s-mwc_BTH_noBK} which rules out any boundaried kernelization for \smultiwaycutlocal in the general case. However, if we assume that any terminals in $B$ must already be contained in $T_G$, i.e., for the gluing equivalence we only consider instances $(H_B, T_H)$ with $T_H \cap B = \emptyset$, then we can prepare for any combination of: (i) forcing neighbors of $T_G$-terminals into the solution or into the corresponding components; and (ii) additionally forcing boundary vertices outside of $T_G$ into the solution or the terminals' components. Again, this step is performed through Lemma~\ref{lem:cutset}, that is, we find a set $Z$ of size $\Oh((|B| + |N(T)|)^{s+1})$ that contains the best local cut for each choice. For bounding the size of $N_G(T_G)$ beforehand, we substitute the LP-based approach by Reduction Rule~\ref{rr:smwc_neighbor}, and get a total size of $\Oh((|B| + \ell)^{s+1})$ vertices for the randomized boundaried kernelization.
    
    \begin{theorem}\label{thm:s-mwc_BTH_noBK}
        In general, \smultiwaycutlocal does not admit a boundaried kernelization,
    \end{theorem}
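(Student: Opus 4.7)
The plan is to invoke Lemma~\ref{lem:bk_lb}: it suffices to exhibit a family $\mathcal{C}$ of boundaried \smultiwaycut instances sharing a common boundary $B$ of fixed size, each admitting a locally feasible solution of value bounded by some $g(|B|)$, while $\equiv^{\mathcal{C}}_{\Pi, B}$ has infinitely many equivalence classes. Since Lemma~\ref{lem:bk_lb} bounds the number of equivalence classes by $\Oh(2^{f(|B| + g(|B|))^2})$, which is a finite constant for fixed $|B|$ and fixed $g$, producing an infinite family already rules out any boundaried kernelization, not just a polynomial one.

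For the construction I would fix $B = \{b_1, b_2\}$ and, for every $n \in \mathbb{N}$, define the boundaried graph $G^{(n)}_B$ consisting of $b_1$, $b_2$, and $n$ internally vertex-disjoint $b_1$-$b_2$ paths of length two (each using its own fresh internal vertex). Set $T^{(n)}_G = \emptyset$ in the case $s = 2$, and for general $s \geq 2$ add $s - 2$ further isolated vertices as terminals in $T^{(n)}_G$. In either case the terminals of the local instance are pairwise disconnected in $G^{(n)}$, so the empty set is a locally feasible cut of value $0$, giving $g(|B|) = 0$ uniformly across $\mathcal{C} = \{(G^{(n)}_B, T^{(n)}_G) : n \in \mathbb{N}\}$.

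To separate distinct instances I would probe with two glue partners on the same boundary $B$. First, let $H^{(1)}_B$ be the edgeless boundaried graph on $B$ with terminal annotation $T^{(1)}_H = \{b_1, b_2\}$; then in $G^{(n)}_B \oplus H^{(1)}_B$ the vertices $b_1, b_2$ are terminals connected by exactly $n$ internally vertex-disjoint paths, so the optimum cut value equals $n$ (the dummy isolated terminals contribute nothing), and analogously $m$ for $G^{(m)}_B \oplus H^{(1)}_B$. Any admissible offset $\Delta$ for the putative equivalence $G^{(n)}_B \equiv_{\Pi, B} G^{(m)}_B$ must therefore satisfy $\Delta = n - m$. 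Second, let $H^{(2)}_B$ be the edgeless boundaried graph with $T^{(2)}_H = \emptyset$; then all terminals of the glued instance are mutually disconnected, both optima are $0$, and $\Delta = 0$ is forced. These two requirements are incompatible whenever $n \neq m$, so the family realizes infinitely many equivalence classes, and Lemma~\ref{lem:bk_lb} yields the claim.

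The only subtlety I expect to have to address is the convention for $s$-\smultiwaycut instances whose total terminal count after gluing is strictly less than $s$; this is already permitted by the paper's convention $|T| \leq s$, and using it is essential because our construction places no terminals in $T_G$ beyond dummies. This also clarifies why the construction falls outside the setting of Theorem~\ref{thm:main}: the glue partner $H^{(1)}_B$ has $T^{(1)}_H = \{b_1, b_2\} \subseteq B$, which is exactly what the ``minor restriction'' $T_H \cap B = \emptyset$ forbids, and it is through this loophole that $G^{(n)}_B$ can hide unbounded min-cut information between two boundary vertices inside its local part.
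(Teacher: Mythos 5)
Your construction is precisely the one in the paper: $K_{2,n}$ on boundary $B=\{x,y\}$ (i.e., $n$ internally disjoint length-two $x$--$y$ paths), probed once with the edgeless $H_B$ carrying no terminals (optimum $0$, forcing $\Delta=0$) and once with the edgeless $H_B$ carrying $T_H=\{x,y\}$ (optimum $n$, forcing $\Delta=n-m$), then concluding via Lemma~\ref{lem:bk_lb}. The only additions --- dummy isolated terminals for $s>2$ and the remark tying the failure to $T_H\cap B\neq\emptyset$ --- are cosmetic, so this is essentially the paper's proof.
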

    \begin{proof}
        Let $B = \{x, y\}$. We construct two instances $(G^i_B,\emptyset), (G^j_B,\emptyset)$ for each distinct pair $i, j \in \mathbb{N}$ such that it holds that $(G^i_B,\emptyset) \not\equiv (G^j_B,\emptyset)$ and both have local solution size zero. By Lemma~\ref{lem:bk_lb} this rules out a boundaried kernelization.
        Let $G^i$ (resp. $G^j$) be a complete bipartite graph $K_{2, i}$ (resp. $K_{2, j}$) with parts $\{x, y\} = B$ and $\{v_1, \dots, v_i\}$ (resp. $\{v_1, \dots, v_j\}$). In order to exclude gluing equivalence of these annotated graphs, we define the graph $H$ which only consists of the independent set $B = \{x, y\}$. 
        Now, observe that it holds that $\OPT((G^i_B, \emptyset) \oplus (H_B, \emptyset)) = 0 = \OPT((G^j_B, \emptyset) \oplus (H_B, \emptyset))$, while $\OPT((G^i_B, \emptyset) \oplus (H_B, \{x, y\}) = i \neq j = \OPT((G^j_B, \emptyset) \oplus (H_B, \{x, y\})$. Hence, for any $i \neq j$ it holds that $(G^i_B, T_G) \not\equiv (G^j_B, T_G)$.
    \end{proof}
    
    \begin{theorem}\label{thm:smultiwaycut}
        The parameterized problem \smultiwaycutlocal admits a randomized polynomial boundaried kernelization with  $\Oh((|B| + \ell)^{s+1})$ vertices, with failure probability $\Oh(2^{-|V|})$, and under the restriction that externally glued instances do not contain  $B$-vertices as terminals.
    \end{theorem}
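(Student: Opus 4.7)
The plan is to lift the randomized polynomial kernel of Kratsch and Wahlström for \smultiwaycut to the boundaried setting by using the boundary $B$ as additional anchors alongside $T_G$ and its neighborhood. First I handle trivial cases: if $|T_G| > s$, or if two terminals in $T_G$ are adjacent in $G$, then for every legal gluing $(H_B, T_H)$ with $T_H \cap B = \emptyset$ the combined instance is infeasible, so I output a constant-size infeasible instance. Otherwise, I apply the problem-specific Reduction Rule~\ref{rr:smwc_neighbor} (the boundaried replacement for Guillemot's LP-based neighborhood shrinking) to guarantee $|N_G(T_G)| = \Oh(|B| + \ell)$, which gives an anchor set $X := B \cup T_G \cup N_G(T_G)$ of size $\Oh(|B| + \ell)$.

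Next, with $s$ being the (constant) number of terminals of the problem, I invoke Lemma~\ref{lem:cutset} on $G$ with input $X$ to obtain in randomized polynomial time, with failure probability $\Oh(2^{-|V|})$, a cutset $Z \subseteq V(G)$ of size $\Oh(|X|^{s+1}) = \Oh((|B| + \ell)^{s+1})$ such that for every $(s+2)$-partition $(X_0, X_X, X_1, \ldots, X_s)$ of $X$, the set $Z$ contains a minimum multiway cut of $(X_1, \ldots, X_s)$ in $G - X_X$. I then repeatedly bypass, via Lemma~\ref{lem:bypass}, every vertex of $V(G) \setminus (Z \cup X)$. This yields an annotated boundaried mixed graph $(G'_B, T_G)$ with $\Oh((|B| + \ell)^{s+1})$ vertices; since bypassing preserves reachability between non-bypassed vertices over cuts avoiding the bypassed vertex, the offset is $\Delta = 0$, the boundary is preserved, and one direction of the gluing equivalence (namely, that any combined cut in the bypassed instance lifts to a combined cut of equal size in the original) is immediate.

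The main technical step — and the main obstacle — is to show that every combined optimum $Y$ for $(G_B, T_G) \oplus (H_B, T_H)$ with $T_H \cap B = \emptyset$ admits an equally-sized combined cut supported on $Z \cup B \cup T_G \cup N_G(T_G)$ (from which gluing safety under bypassing follows). Writing $Y = Y_G \cup Y_H$ with $Y_G \cap Y_H = X_X := Y \cap X$, I define the $(s+2)$-partition of $X$ via the combined components of $(G \cup H) - Y$: each vertex of $X$ is placed in $X_X$ if it lies in $Y$, in $X_i$ if it lies in the combined component of the $i$-th terminal of $T_G \cup T_H$ (padding unused parts by $\emptyset$), or in $X_0$ otherwise. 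Since $Y_G \setminus X_X$ is itself a multiway cut of $(X_1, \ldots, X_s)$ in $G - X_X$, the cutset property gives some $Y' \subseteq Z$ of the same type with $|Y'| \le |Y_G \setminus X_X|$, whence $|Y' \cup X_X \cup Y_H| \le |Y|$. Validity of the replacement combined cut is then argued by contradiction: any hypothetical path between two distinct terminals in $(G \cup H) - (Y' \cup X_X \cup Y_H)$ decomposes at $B$-vertices into maximal $G$-segments (whose endpoints share the same $X$-class label in $\{0,1,\ldots,s\}$ by the multiway cut property of $Y'$) and maximal $H$-segments (whose endpoints share the same original combined-component label, since $H - Y_H$ is a subgraph of $(G \cup H) - Y$). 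The key structural observation, powered by the restriction $T_H \cap B = \emptyset$, is that an $X_0$-vertex cannot reach any terminal nor any $X_i$-vertex ($i \in [s]$) along $H - Y_H$ — otherwise its combined component would have contained that terminal, contradicting its $X_0$ membership — so the label sequence along $P$ cannot transition between two distinct terminal indices, yielding the desired contradiction. Finally, Theorem~\ref{thm:s-mwc_BTH_noBK} confirms that this $T_H \cap B = \emptyset$ restriction is unavoidable, since dropping it forces the preprocessing to record unbounded information about potential $B$-terminal pair connectivities.
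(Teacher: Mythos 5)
Your overall structure matches the paper's proof of Theorem~\ref{thm:smultiwaycut}: apply Reduction Rule~\ref{rr:smwc_neighbor} to bound $|N_G(T_G)|$, invoke Lemma~\ref{lem:cutset} on the anchor set $X$ to obtain a cut cover $Z$, bypass every vertex outside $Z \cup B \cup T_G \cup N_G(T_G)$, and argue gluing safety by replacing $Y_G$ with a cut from $Z$ chosen via a partition of $X$ induced by the components of the combined solution. The easy direction via Lemma~\ref{lem:bypass} and the size accounting are also the same. So far so good.

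The gap is in your validity argument for the replacement cut. You claim that a maximal $G$-segment of a hypothetical surviving path ``has endpoints sharing the same $X$-class label in $\{0,1,\ldots,s\}$ by the multiway cut property of $Y'$.'' This does not follow. The cut $Y' \subseteq Z$ supplied by Lemma~\ref{lem:cutset} for the partition $(X_0, X_X, X_1, \ldots, X_s)$ is a multiway cut for $(X_1, \ldots, X_s)$ in $G - X_X$: it guarantees that no two vertices with labels $i, j \in [s]$, $i \neq j$, are connected in $G - X_X - Y'$, but it places no constraint whatsoever on $X_0$-vertices. A single $G$-segment may therefore carry a $B$-vertex labeled $i_1$ into the same component as a $B$-vertex labeled $0$, and a \emph{different} $G$-segment (lying in a different component of $G - X_X - Y'$) may connect a $0$-labeled $B$-vertex to an $i_2$-labeled one. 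Your ``key structural observation'' about $X_0$-vertices and $H$-segments only shows that $H$-segments preserve the original label, which does not close this hole: the sequence $i_1 \to 0$ ($G$-segment) $\to 0$ ($H$-segment) $\to i_2$ ($G$-segment) is not ruled out by anything you have asserted, because the two $G$-segments can live in distinct $G$-components even though the intervening $H$-segment links their endpoints. You need an additional argument to exclude label transitions out of $X_0$ via $G$-segments, or to show that the sequence cannot reach $i_2 \neq i_1$ — for example, by tracking both the original label and the label determined by the component structure of $G - X_X - Y'$, and arguing about how these two labelings interact; or by modifying the partition passed to Lemma~\ref{lem:cutset}. The paper's own proof of Reduction Rule~\ref{rr:smwc_cutcover} is explicitly sensitive to this point: it does not claim that $G$-segments preserve labels, but rather notes that they may move from some $X_h$ into $X_0$, and builds a more delicate label-tracking argument around that; your stated invariant for $G$-segments is simply stronger than what the cut cover gives you, so the contradiction you derive from it is not justified as written.

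One minor stylistic inconsistency that would need cleaning up if the above is fixed: you write ``$Y = Y_G \cup Y_H$ with $Y_G \cap Y_H = X_X := Y \cap X$,'' but $Y_G \cap Y_H = Y \cap B$, whereas $Y \cap X$ may also contain vertices of $N_G(T_G) \setminus B$; the size accounting still goes through, but as stated the identification is incorrect.
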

    
    As partly indicated above, the input to the boundaried kernelization is a tuple $(G_B, T_G, S, \ell)$ such that $|T_G| \leq s$ and $S$ is a multiway cut of size at most $\ell$ for $T_G$ in $G$. In particular it follows that $B' := S \cup (B \setminus T_G)$ is also a multiway cut for $T_G$ in $G$, i.e., every terminal $t \in T_G$ is contained in its own component $C_t$ in $G - B'$.
    
    \begin{rrule}\label{rr:smwc_neighbor}
        For any $t \in T_G$, if $|N(t)| > |B'|$, then choose a minimum size cut $X$ between $N(t)$ and $B'$ (allowing terminal-deletion) which is closest to $N(t)$. Remove the edges incident with $t$, and make $X$ the new neighborhood of $t$.
    \end{rrule}
    \begin{lemma}
        Reduction Rule~\ref{rr:smwc_neighbor} is gluing-safe.
    \end{lemma}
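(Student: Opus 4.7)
The plan is to prove gluing-safety with offset $\Delta = 0$: for every boundaried graph $H_B$ and every $T_H \subseteq V(H) \setminus B$, the glued instances $\tG := G_B \oplus H_B$ and $\tG' := G'_B \oplus H_B$ admit optimal \smultiwaycut solutions of equal size with respect to $T := T_G \cup T_H$. The central structural input is the closest-cut property of $X$: as the unique minimum $(N(t), B')$-cut in $G$ closest to $N(t)$, Menger's theorem supplies $|X|$ pairwise internally vertex-disjoint paths $P_1, \ldots, P_{|X|}$ in $G$ from distinct vertices of $N(t)$ to distinct vertices of $X$, with internal vertices outside $N(t) \cup X$. In particular, every $(t, B')$-cut in $G$ that avoids $t$ has size at least $|X|$, and a standard exchange along the $P_i$ allows moving any such cut onto $X$ without growing it.

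For the direction $\OPT(\tG') \leq \OPT(\tG)$, I would start from an optimum multiway cut $Y$ in $\tG$. Let $R$ denote the set of $G$-vertices reachable from $t$ in $G - (Y \cap V(G))$ via $G$-edges only, and let $\partial := N_G(R) \setminus R$, which lies in $Y \cap V(G)$. Either $R$ already contains all of $B'$, in which case a direct argument using the disjoint paths $P_i$ shows that $X$ can be inserted while removing a size-equivalent portion of $Y \cap V(G)$; or $\partial$ is an honest $(t, B' \setminus R)$-cut in $G$ of size at least $|X|$, in which case $Y' := (Y \setminus \partial) \cup X$ satisfies $|Y'| \leq |Y|$. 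The $H$-part of $Y$ is carried over unchanged, and since $t$'s only $G'$-escape is through $X \subseteq Y'$, the component of $t$ in $\tG' - Y'$ differs from that in $\tG - Y$ only inside $V(G) \setminus B$, so it still avoids $T \setminus \{t\}$.

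For the converse $\OPT(\tG) \leq \OPT(\tG')$, given an optimum $Y'$ for $\tG'$, I would build $Y$ for $\tG$ of the same size. The key observation is that any path in $\tG$ from $t$ to $T \setminus \{t\}$ that uses an edge $\{t, v\}$ with $v \in N(t) \setminus X$ must subsequently cross the cut $X$ before reaching a terminal, since $X$ is an $(N(t), B')$-cut in $G$, the set $B'$ itself multiway-separates $T_G$ in $G$, and $T_H$ lies entirely outside $B$ by the restriction of Theorem~\ref{thm:smultiwaycut}. One then re-routes via the $P_i$: if $Y'$ blocks the direct $t$-$X$-variant of some path in $\tG'$, then by swapping along the corresponding $P_i$ one produces a witness that $Y := Y'$, possibly locally exchanged on $X$ using the endpoints of the $P_i$, already cuts the $\tG$-variant.

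The main obstacle will be the careful bookkeeping of this exchange step, especially when $t \in B$ (so that $t$ also has $H$-neighbors in the glued graph) or when $X$ contains vertices of $B$. Both cases are handled by leaving the $H$-portion of the cut verbatim and confining all $G$-side swaps to the routes traced by $P_1, \ldots, P_{|X|}$, relying on the facts that $B'$ already multiway-separates $T_G$ inside $G$ and that $T_H$ cannot pull $t$'s component across $B$ from the $H$-side without crossing the $H$-part of $Y$, which is identical in both instances.
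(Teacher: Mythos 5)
There is a genuine gap in the direction $\OPT(\tG') \leq \OPT(\tG)$. Your case split distinguishes $B' \subseteq R$ from $B' \not\subseteq R$, and in the second case you assert that $\partial := N_G(R)$, being a $(t, B' \setminus R)$-cut, has size at least $|X|$. But $X$ is a minimum $(N(t), B')$-cut, and if $\emptyset \neq B' \cap R \subsetneq B'$ then $\partial$ only needs to block paths to $B' \setminus R$; any vertex of $B' \cap R$ lies on the $t$-side of $\partial$ and is not separated by it. Such a cut can be strictly smaller than $|X|$, so $|\partial| \geq |X|$ is unjustified and $Y' := (Y \setminus \partial) \cup X$ may be larger than $Y$. (Your first case, $B' \subseteq R$, is also only sketched; and your claim that the component of $t$ in $\tG' - Y'$ ``differs only inside $V(G) \setminus B$'' needs a real argument, since replacing $\partial$ by $X$ can reconnect other terminal pairs.)

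The paper circumvents this by intersecting the reachable region with $P$, the component of $t$ in $G - X$: since $X$ separates $N(t)$ from $B'$, we have $B' \cap P = \emptyset$, so for $Q$ the component of $t$ in $\tG - Y$, the set $N_G(P \cap Q)$ is \emph{always} a genuine $(N(t), B')$-cut. The closest-cut property gives $|N_G(P)| \leq |N_G(P\cap Q)|$, and submodularity of $|N_G(\cdot)|$ then yields $|N_G(P \cup Q)| \leq |N_G(Q)|$, which replaces your Menger-based exchange entirely; the verification that $Y' = (Y \setminus N_G(Q)) \cup N_G(P \cup Q)$ remains a multiway cut then rests on the identity $(G_B \oplus H_B) - (P \cup N(P) \cup Y) = (G'_B \oplus H_B) - (P \cup N(P) \cup Y')$. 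For your converse direction $\OPT(\tG) \leq \OPT(\tG')$, the re-routing along the $P_i$ is unnecessary: since $(G_B \oplus H_B) - P = (G'_B \oplus H_B) - P$ and every cross-terminal path in $\tG - Y'$ must enter $P$ through $N_G(P) = N_{G'}(t)$, any feasible $Y'$ for $\tG'$ is already feasible for $\tG$ verbatim.
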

    \begin{proof}
        Let $P$ be the component of $G-X$ which contains $t$, and thus $X = N(P)$. Let $G'$ be the result of the reduction rule. Fix arbitrary $(H, B, T_H)$, and set $T = T_G \cup T_H$. Assume $|T| \leq s$, else there exists no feasible solution for both $(G_B \oplus H_B, T)$ and $(G'_B \oplus H_B, T)$.
        Note that $(G_B \oplus H_B) - P = (G'_B \oplus H_B) - P$, since we only changed edges that are incident with $t$, which is contained in $P$.
        We prove gluing equivalence of $(G, T_G)$ and $(G', T_G)$ by showing that $\OPT(G_B \oplus H_B, T)$ is equal to $\OPT(G'_B \oplus H_B, T)$.
        
        ``$\OPT(G_B \oplus H_B, T) \leq \OPT(G'_B \oplus H_B, T)$'':\\
        Let $Y'$ be any feasible solution for $G'_B \oplus H_B$. We show that $Y'$ is also a solution for $G_B \oplus H_B$.
        For contradiction, assume that there exist two distinct terminals $t_1, t_2 \in T$ with a path between them in $(G_B \oplus H_B) - Y'$. As $(G_B \oplus H_B) - P = (G'_B \oplus H_B) - P$ and $t$ is the only terminal contained in $P$, such a path must go through $P$ and thus through $N_G(P)$. Without loss of generality assume that $t \neq t_1$. Since the assumed path goes through $N_G(P) = N_{G'}(t)$, there is a path from $t_1$ to $t$ in $(G'_B \oplus H_B) - Y'$, a contradiction.
        
        ``$\OPT(G_B \oplus H_B, T) \geq \OPT(G'_B \oplus H_B, T)$'':\\
        Let $Y$ be a feasible solution for $G_B \oplus H_B$ and let $Q$ be the component of $(G_B \oplus H_B) - Y$ that contains $t$. Note that $t$ is the only terminal both in $P$ and in $Q$, i.e., it holds that $T \cap (P \cup Q) = T \cap (P \cap Q) = \{t\}$.
        We define the set $Y' := (Y \setminus N_G(Q)) \cup N_G(P \cup Q) = (Y \setminus (N_G(Q) \cap P)) \cup (N_G(P) \setminus Q)$.
        
        First, note that $|Y'|$ is upper bounded by $|Y|$: By submodularity of $|N_G(\cdot)|$, it holds that  $|N_G(P)| + |N_G(Q)| \geq |N_G(P \cap Q)| + |N_G(P \cup Q)|$. As $t \in (P \cap Q)$, it holds that $N_G(P \cap Q)$ is a cut between $N_G(t)$ and $B'$. Since $X = N_G(P)$ is closest to $N_G(t)$, it follows that $|N_G(P)| \leq |N_G(P \cap Q)|$. Putting this into the earlier equation, we get $|N_G(Q)| \geq |N_G(P \cup Q)|$, which results in $|Y'| \leq |Y| - |N_G(Q)| + |N_G(P \cup Q)| \leq |Y|$.
        
        Next, we make sure that $Y'$ is a $s$-multiway cut for $(G'_B \oplus H_B, T)$: 
        First we show that there is no path between some $t_1 \in T \setminus \{t\}$ and $t$ in $(G'_B \oplus H_B) - Y'$: Assume the contrary is true. Clearly, such a path needs to pass some vertex $x \in N_{G'_B \oplus H_B}(t) = N_G(P) \cup N_H(t)$ (with $N_H(t) = \emptyset$ if $t \in V(G)\setminus B$). Without loss of generality, choose $x$ such that on the subpath from $t_1$ to $x$, there are no other vertices from $N_{G'_B \oplus H_B}(t)$. Since by definition of $Y'$ it holds that $N_G(P) \subseteq Q \cup Y'$ and $N_H(t) \subseteq (Q \cup Y) \cap V(H) \subseteq Q \cup Y'$, it follows that $x \in N_{G'_B \oplus H_B}(t) \setminus Y' \subseteq Q$. By choice of $P$, the terminal $t_1$ is not contained in $P$, and neither are the vertices of the subpath between $t_1$ and $x$ contained in $P \cup N_{G'}(P) = P \cup N_G(P)$. From $Y \bigtriangleup Y' \subseteq P \cup N_G(P)$ it follows that $P \cup N(P) \cup Y = P \cup N(P) \cup Y'$ and thus $(G_B \oplus H_B) - (P \cup N(P) \cup Y) = (G'_B \oplus H_B) - (P \cup N(P) \cup Y')$. But this means that the path from $t_1$ to $x$ also exists in $(G_B \oplus H_B) - Y$, which contradicts the fact that $Y$ is a cut between $t_1$ and $Q$ in $G_B \oplus H_B$.
        In a similar manner we can be sure that no path exists between any distinct pair of terminals $t_1, t_2 \in T \setminus \{t\}$ in $(G'_B \oplus H_B) - Y'$: As $t_1, t_2$ are both not contained in $P$ and it holds that $(G_B \oplus H_B) - (P \cup N(P) \cup Y) = (G'_B \oplus H_B) - (P \cup N(P) \cup Y')$, such a path would need to pass some vertex $x \in N(P)$. But we have seen above already, that a path from $t_1$ to $x$ in $(G'_B \oplus H_B) - Y'$ cannot exist.
    \end{proof}
    
    \begin{rrule}\label{rr:smwc_cutcover}
        Assume that Reduction Rule~\ref{rr:smwc_neighbor} cannot be applied.
        Let $Z \subseteq V(G)$ be a vertex set such that for every partition $\mathcal{X} = (X_0, X_X, X_1, \dots, X_s)$ of $X := B \cup \bigcup_{t \in T_G} N_G(t)$ into $s+2$ possibly empty parts, the set $Z$ contains a minimum multiway cut of $(X_1, \dots, X_s)$ in the graph $G - X_X$. Bypass every vertex from $V(G) \setminus (Z \cup B \cup T_G \cup N_G(T_G))$.
    \end{rrule}
    
    \begin{lemma}
        Reduction Rule~\ref{rr:smwc_cutcover} is gluing-safe.
    \end{lemma}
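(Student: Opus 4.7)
The plan is to prove gluing equivalence of $(G_B, T_G)$ and $(G'_B, T_G)$ with shift $\Delta = 0$, where $G'$ denotes the graph obtained by bypassing $W := V(G) \setminus (Z \cup B \cup T_G \cup N_G(T_G))$. Concretely, I will show $\OPT((G_B, T_G) \oplus (H_B, T_H)) = \OPT((G'_B, T_G) \oplus (H_B, T_H))$ for every boundaried $(H_B, T_H)$ with $T_H \cap B = \emptyset$.

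The easy direction $\OPT(G_B \oplus H_B) \leq \OPT(G'_B \oplus H_B)$ follows from Lemma~\ref{lem:bypass}: any cut $Y'$ in $G'_B \oplus H_B$ is automatically disjoint from the bypassed set $W$, and by repeated application of Lemma~\ref{lem:bypass} retains the same terminal-separating power in $G_B \oplus H_B$.

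For the reverse inequality, I will take an optimal cut $Y$ in $(G_B \oplus H_B, T_G \cup T_H)$ and build a cut $Y^*$ with $|Y^*| \leq |Y|$ disjoint from $W$. First I will set up the partition $(X_X, X_0, X_1, \ldots, X_s)$ of $X := B \cup N_G(T_G)$ induced by $Y$: put $X_X := X \cap Y$; for each $i$ up to the number of terminals, let $X_i := X \cap C_{t_i}$ for $C_{t_i}$ the component of the $i$-th terminal in $(G_B \oplus H_B) - Y$; and let $X_0$ collect the remaining $X$-vertices. A key initial observation is that every non-$Y$ neighbor in $G$ of a $T_G$-terminal lies in that terminal's component, so $X_0 \subseteq B$. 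Moreover, $Y \cap V(G) \setminus X_X$ is itself a multiway cut separating $(X_1, \ldots, X_s)$ in $G - X_X$, because a connecting path avoiding it would survive in $(G_B \oplus H_B) - Y$ and merge two terminal components. By Lemma~\ref{lem:cutset}, the prescribed $Z$ contains a minimum such cut $Z_0$ with $|Z_0| \leq |Y \cap V(G) \setminus X_X|$, and I define $Y^* := (Y \setminus V(G)) \cup X_X \cup Z_0$. A direct accounting gives $|Y^*| \leq |Y|$, and $Y^* \cap V(G) \subseteq X_X \cup Z_0 \subseteq B \cup N_G(T_G) \cup Z$, so $Y^* \cap W = \emptyset$.

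The crux, and the main obstacle, is verifying that $Y^*$ is still a valid multiway cut in $G_B \oplus H_B$. Assuming for contradiction that a path $P$ connects two distinct terminals $t_i, t_j$ in $(G_B \oplus H_B) - Y^*$, I will decompose $P$ at its $X$-vertices into alternating $G$-segments (lying in $G - X_X - Z_0$) and $H$-segments (lying in $H - (Y \cap V(H)) - (Z_0 \cap B)$). Walking along $P$ and tracking which partition class each visited $X$-vertex belongs to, every $H$-segment preserves the class (since $H$-connectivity in $(G_B \oplus H_B) - Y^*$ is no richer than in $(G_B \oplus H_B) - Y$, where connected $B$-vertices already share a partition class), and every $G$-segment is forbidden to transition between $X_i$ and $X_j$ for distinct $i, j \in [s]$ by the multiway-cut property of $Z_0$. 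The delicate case is excluding chains that bridge through $X_0 \subseteq B$ via interleaved $G$- and $H$-segments. Here the assumption $T_H \cap B = \emptyset$ is crucial: it ensures that the partition class of a boundary vertex is determined entirely by the unchanged $H$-side of $Y$, so any such bridging chain in $(G_B \oplus H_B) - Y^*$ can be translated back into a $t_i$-to-$t_j$ path in $(G_B \oplus H_B) - Y$, contradicting that $Y$ was a cut. I expect the bookkeeping here to be the most delicate step, as it requires carefully combining the minimality of $Z_0$, the structure of components in $(G_B \oplus H_B) - Y$, and the restriction on $T_H$ to translate the $X_0$-bridge through $Y^*$ into a genuine terminal-connecting path through $Y$.
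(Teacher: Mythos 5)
Your overall plan matches the paper's: define the partition $(X_0, X_X, X_1, \dots, X_s)$ of $X$ from an optimal cut $Y$, invoke Lemma~\ref{lem:cutset} to obtain a replacement cut $Z_0 \subseteq Z$ for $(X_1, \dots, X_s)$ in $G - X_X$, build $Y^*$ of size at most $|Y|$ avoiding $W$, and verify $Y^*$ is still a multiway cut by tracking $X$-vertices along a hypothetical terminal-connecting path (the observation $X_0 \subseteq B$ is correct and the definition of $Y^*$ with $X_X := X \cap Y$ included explicitly is a reasonable variant).

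However, there is a genuine gap, and you acknowledge it yourself: the verification that $Y^*$ separates the terminals --- the crux of the whole lemma --- is left as a plan ("I expect the bookkeeping here to be the most delicate step") rather than carried out. Worse, the one substantive claim you offer to bridge the gap is not correct: it is not true that "the partition class of a boundary vertex is determined entirely by the unchanged $H$-side of $Y$." A vertex $b \in B$ is assigned to $X_h$ based on which terminal $t_h$ it reaches in the full glued graph $(G_B \oplus H_B) - Y$, and for $t_h \in T_G$ this reach can be entirely through $G$-vertices; the restriction $T_H \cap B = \emptyset$ does not change this. What is actually needed is an argument that a $G$-segment cannot move the class from $X_0$ to a new $X_j$ along the path (since $Z_0$ separates the $X_i$'s from each other but not from $X_0$). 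The paper resolves this by walking along the ordered $B$-vertices $x_1, \dots, x_p$ of the offending path, noting $x_1 \in X_{i_1}$, $x_p \in X_{i_2}$ because they touch the terminals directly, that $H$-segments preserve the class exactly, that $G$-segments keep the class within $X_h \cup X_0$, and concluding $x_p \in X_{i_1} \cup X_0$ --- a contradiction with $x_p \in X_{i_2}$. Your proposal stops before this step, and your proposed substitute for it does not hold.
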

    \begin{proof}
        Let $G_B$ denote the graph before applying the reduction rule, and $G'_B$ the result thereof. Let $H_B$ be the graph for gluing. Let $T_G$ be the given set of terminals for $G$ (resp.\ for $G'$), and $T_H \subseteq V(H) \setminus B$ the set of terminals given together with $H_B$. Let $T = T_G \cup T_H$ and assume without loss of generality that $|T| = s$. If $|T| < s$, we can add isolated terminals to $H$, while $|T| > s$ directly leads to $(G_B \oplus H_B, T)$ having no feasible solution. Likewise, we can assume that no two terminals are adjacent.
        Let $s_G = |T_G|$ and assume some arbitrary ordering of the terminals, such that the vertices of $T_G$ can be denoted by $t_1, \dots, t_{s_G}$ and those from $T_H$ by $t_{s_G + 1}, \dots, t_s$.
        
        ``$\OPT(G_B \oplus H_B, T) \leq \OPT(G'_B \oplus H_B, T)$'':
        Let $Y'$ be a multiway cut for $T$ in $G'_B \oplus H_B$. Since $Y'$ does not contain any bypassed vertices, it follows from Lemma~\ref{lem:bypass} that $Y'$ is a multiway cut for $T$ in $G_B \oplus H_B$.
        
        ``$\OPT(G_B \oplus H_B, T) \geq \OPT(G'_B \oplus H_B, T)$'':
        Let $Y$ be a multiway cut for $T$ in $G_B \oplus H_B$ without terminal deletion. Thus, no component in $(G_B \oplus H_B) - Y$ contains more than one terminal and, obviously, every terminal $t \in T_G$ is in the same component as $N_G(t) \setminus Y$. Furthermore, any vertex in $B \setminus Y$ shares its component with at most one terminal. Based on this we partition the set $X$ into $(X_0, X_X, X_1, \dots, X_s)$ as follows: $X_X$ contains those $B \setminus T$-vertices that are contained in $Y$ and $X_0$ those that are not connected with any terminal in $(G_B \oplus H_B) - Y$; for each $t_i \in T$ put into $X_i$ every $B \setminus Y$-vertex that is connected with $t_i$ in $(G_B \oplus H_B) - Y$, and if further $t_i$ is contained in $T_G$ then also put all vertices from $N_G(t_i) \setminus Y$ into $X_i$.
        
        Let $F = V(G) \setminus B$. Clearly, $Y \cap F$ is a multiway cut for $(X_1, \dots, X_s)$ in $G - X_X$. By choice of $Z$, construction of $G'$ and Lemma~\ref{lem:bypass}, there is also a multiway cut $Y^* \subseteq Z$ of size at most $|Y \cap F|$ for $(X_1, \dots, X_s)$ in $G' - X_X$. We will now verify that $Y' := (Y \setminus F) \cup Y^*$ is a multiway cut for $T$ in $G'_B \oplus H_B$ (without terminal deletion). For that, assume for contradiction that there is a path $P$ between distinct terminals $t_{i_1}, t_{i_2} \in T$ in $(G'_B \oplus H_B) - Y'$. Note that no two terminals from $T$ are connected in either $H - Y' = H - Y$ or $G' - Y'$. Thus, the path cannot go only through $H-B$ or $G'-B$ and contains vertices from $B \setminus (T \cup Y')$. Give these vertices an order along the path in the direction from $t_{i_1}$ to $t_{i_2}$, i.e., $P = (t_{i_1}, \dots, x_1, \dots, x_2, \dots, x_p, \dots, t_{i_2})$, where $x_1, \dots, x_p$ are the $B \setminus (T \cup Y')$-vertices contained in $P$ and the vertices between $x_i, x_j$ with distinct $i, j \in [p]$ are contained in $(V(G) \cup V(H)) \setminus (B \cup T \cup Y')$. If the path between $x_i$ and $x_j$ goes through $H - (B \cup Y')$, then they need to be in the same part of the earlier partition $(X_0, X_X, X_1, \dots, X_s)$: Since it holds that $H -(B \cup Y') = H - (B \cup Y)$, the vertex $x_i$ is connected to $t \in T$ in $(G_B \oplus H_B) - Y$ if and only if $x_j$ is connected to $t$. However, if the path goes through $G' - (B \cup Y')$ and if, say, $x_i$ is in the part $X_h$ with $t_h \in T_H$, then it might also be the case that $x_j$ is in the part $X_0$, since $Y^*$ does not have to separate $X_h$-vertices from $X_0$-vertices. Furthermore, if terminal $t_h \in T$ is connected with $x_i$ (either through $H - (B \cup Y')$ or $G' -(B \cup Y')$), then $x_i$ needs to be contained in the part $X_h$. Combining these points we come to the fact that $x_1 \in X_{i_1}$ and $x_p \in X_{i_2}$, while at the same time it holds that $x_p \in X_{i_1} \cup X_0$ by the transversal from $x_1$ to $x_p$ through $H - (B \cup Y')$ and $G' -(B \cup Y')$ between consecutive $B$-vertices $x_i, x_{i+1}$ for every $i \in [p-1]$. A contradiction.
    \end{proof}
    
    We are now ready to describe the complete algorithm.
    \begin{proof}[Proof of Theorem~\ref{thm:smultiwaycut}]
        Let $G_B$ be the graph, $T_G$ the terminal set and $S$ the local solution of size at most $\ell$ to \smultiwaycut  on $(G, T_G)$. We can assume without loss of generality that $|T_G| \leq s$, since otherwise there exists no feasible solution $S$. Let $F = V(G) \setminus B$ and $B' = (B \cup S) \setminus T_G$. In polynomial time, apply exhaustively Reduction Rule~\ref{rr:smwc_neighbor} to ensure that $|N_G(t)|$ is upper bounded by $|B'| \leq |B| + \ell$. As a next step, apply Lemma~\ref{lem:cutset} in order to compute, with success probability $1-\Oh(2^{-|V|})$, a cut cover $Z$ for $G$ and $X = B \cup \bigcup_{t \in T_G} N_G(t)$ of size at most $\Oh((|B| + \ell)^{s+1})$ for Reduction Rule~\ref{rr:smwc_cutcover}. From the gluing-safeness of the reduction rule and under the assumption of correct computation of cut cover $Z$, we end up with a gluing-equivalent instance $(G'_B, T_G)$ with $\Oh((|B| + \ell)^{s+1})$ vertices.
    \end{proof}
    
    \section{Deletable Terminal Multiway Cut[local solution]}\label{section:dtmwc}
    Quite closely related to the \smultiwaycut problem is \dtmultiwaycut, in which the number of input terminals is arbitrary, but one is also allowed to take them into the solution. 
    Given graph $G$ and terminal $T$, Kratsch and Wahlstr\"om~\cite{KratschWahlstrom2020-RepresentativeSetsIrrelevantVertices} reduce the instance to $\Oh(|T|^3)$ vertices. Using an LP-argument by Guillemot~\cite{Guillemot2011-FPTAlgorithmsPathtransversal} they further reduce the terminal set $T$ to size $\Oh(k)$ with $k$ being the sought solution size. While it is easy to see that this LP-argument does not work for boundaried kernelization on $(G_B, T_G)$, as we have no access to the complete instance, we are still able to apply an argument due to Razgon~\cite{Razgon2011-LargeIsolatingCuts} in order to reduce the size of $T_G$ to $\Oh((|B| + \ell)^2)$, where $\ell$ is the size of a given local \dtmultiwaycut solution $S$ on $(G, T_G)$. The corresponding Reduction Rules~\ref{rr:dtmwc_iso}, \ref{rr:dtmwc_high_deg_F}, and \ref{rr:dtmwc_high_deg_B} remind of removing isolated vertices and those with too high degree for \vertexcover kernelization. Afterwards, we adapt the reduction to $\Oh(|T|^3)$ vertices by Kratsch and Wahlstr\"om and get a bound of $\Oh((|B|+|T_G|)^3) = \Oh((|B|+\ell)^6)$ vertices for the boundaried case.
    
    We give a short overview of the original algorithm and subsequently describe our adaption. Kratsch and Wahlstr\"om show that for any optimal solution $X$ that additionally maximizes the intersection size with terminal set $T$, and for any $x \in X$, there exists a path packing of size $|X|+2$ from $T$ to $X$ and two sink-only copies $x', x''$ of $x$. They compute a representative set $\mathcal{Z}'$ of the set $\mathcal{Z} := \{\{x, x', x''\} \mid x \in V \setminus T\}$ on the gammoid $\mathcal{M}$ defined on $G$ with two sink-only copies for each vertex, with sources $T$, and show, using that earlier connectedness between $T$ and $X$, that for all vertices $x \in X$, it holds that $\{x, x', x''\}$ is contained in $\mathcal{Z}'$. This allows them to bypass any vertex $v \in V \setminus T$ for which $\{v, v', v''\}$ is not contained in $\mathcal{Z}'$, resulting in a reduced instance with $\Oh(k^3)$ vertices.
    
    Now to our boundaried kernelization. Let $(G_B, T_G)$ be our given local instance. Using an argument by Kratsch~\cite{Kratsch2014-RecentDevelopmentsKernelization}, we show that for any graph $H_B$ and terminal set $T_H$, for any optimal solution $X$ for $(G_B \oplus H_B, T_G \cup T_H)$ that additionally maximizes the intersection size with terminal set $T := T_G \cup T_H$, and for any vertex $x \in X_0 \cap F$, there exists a set of $|X_0 \cap F| + 2$ pairwise vertex-disjoint paths from $(B \cup T_G) \setminus X$ to $(X_0 \cap F) \cup \{x', x''\}$ with $F := V(G) \setminus B$, $X_0 = X \setminus T$, and $x', x''$ being sink-only copies of $x$. Using this result, we again show that on a similarly defined gammoid with sources $B \cup T_G$, a representative set $\mathcal{Z}'$ of $\mathcal{Z} = \{\{x, x', x''\} \mid x \in F\}$ with size $\Oh((|B|+|T_G|)^3)$ contains the triplet $\{x, x', x''\} \in \mathcal{Z}'$ for any $x \in X_0 \cap F$. Consequently, we then bypass all vertices not contained in $B \cup T_G \cup \{v \mid \{v, v', v''\} \in \mathcal{Z}'\}$. Together with the earlier bound on $|T_G|$ we thus get the following.
    
    \begin{theorem}\label{thm:pbk_dtmwc}
        The parameterized problem \dtmultiwaycutlocal admits a randomized polynomial boundaried kernel with $\Oh((|B|+\ell)^6)$ vertices.
    \end{theorem}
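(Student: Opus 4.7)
The plan is to follow the two-phase strategy sketched in the paragraphs preceding the theorem. In the first phase I bound $|T_G|$ via reduction rules in the spirit of \vertexcover kernelization, adapted using Razgon's large isolating cuts argument. Concretely, Reduction Rule~\ref{rr:dtmwc_iso} should discard terminals with no interaction with the rest of the graph; Reduction Rule~\ref{rr:dtmwc_high_deg_F} should handle terminals $t \in T_G \cap F$ whose $F$-connectivity to the rest of $T_G$ is so large that any feasible global cut must contain $t$ itself, so we can safely put $t$ into the shift $\Delta$; and Reduction Rule~\ref{rr:dtmwc_high_deg_B} should deal analogously with terminals that are heavily connected to $B$. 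The correctness arguments for the latter two rules rely on the fact that if many pairwise vertex-disjoint isolating cuts exist around $t$, then their total cost already exceeds $\ell + |B|$, forcing $t \in X$ in any solution extending our local one. A packing argument in the spirit of Razgon shows that once all three rules are inapplicable, $|T_G| = \Oh((|B|+\ell)^2)$.

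In the second phase, I adapt the gammoid/representative-sets construction of Kratsch and Wahlstr\"om. Build the directed graph $\tG$ from $G$ in which each $v \in F$ is replaced by an in/out split and two additional sink-only copies $v', v''$, and consider the gammoid $\mathcal{M}$ on $\tG$ with sources $B \cup T_G$. Compute in randomized polynomial time (via Lemma~\ref{lem:repset}) a $(|B|+|T_G|)$-representative subset $\mathcal{Z}'$ of $\mathcal{Z} := \{\{x,x',x''\} \mid x \in F\}$ of size $\Oh((|B|+|T_G|)^3)$, and bypass every $v \in F \setminus (T_G \cup \{v \mid \{v,v',v''\} \in \mathcal{Z}'\})$. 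Combined with Phase 1 this yields $\Oh((|B|+\ell)^2 + (|B|+|T_G|)^3) = \Oh((|B|+\ell)^6)$ vertices.

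For correctness under gluing, fix an arbitrary $(H_B, T_H)$ with $T = T_G \cup T_H$, and pick an optimal cut $X$ for $(G_B \oplus H_B, T)$ that additionally maximizes $|X \cap T|$. Writing $X_0 := X \setminus T$, the key claim I must prove is: for every $x \in X_0 \cap F$, the set $(X_0 \cap F) \cup \{x', x''\}$ is linked to $(B \cup T_G) \setminus X$ in $\tG$, giving an independent set of size $|X_0 \cap F| + 2$ in $\mathcal{M}$. The proof is by contradiction through Menger: a smaller separator $Y$ on $\tG$ would let us replace $X_0 \cap F$ by $Y$, producing either a strictly smaller cut in $G_B \oplus H_B$ (contradicting optimality) or an equally good cut with strictly more terminals (contradicting maximality of $|X \cap T|$); the extra $+2$ slack comes from $x', x''$ and enforces that the swap leaves $x$ strictly outside the new cut, which is essential for the counting. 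Representativeness then guarantees some $\{y, y', y''\} \in \mathcal{Z}'$ whose removal preserves the path packing, so we may swap $x$ for $y$ and iterate to obtain an equivalent optimum supported on $B \cup T_G \cup \{v \mid \{v,v',v''\} \in \mathcal{Z}'\}$; combining with Lemma~\ref{lem:bypass} yields gluing-safeness of the bypassing step.

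The main obstacle is the path-packing claim. Unlike in the non-boundaried setting, the sources include $B$ rather than only terminals, and $B \cap X$ can be non-empty under arbitrary gluing, so the usual swap argument must cope with both terminal-deletion optimality and with the possibility that paths from $B \setminus X$ route through $H$; this is where I would invoke the argument of Kratsch on extending sources with modulators, carefully restricting everything to $\tG$ by exploiting that $B$ is a separator of $G_B \oplus H_B$. The other routine but non-trivial verifications — that each terminal-reduction rule is gluing-safe with an explicit shift $\Delta$, and that the randomized construction of $\mathcal{Z}'$ inherits the $\Oh(2^{-|V|})$ failure bound from the gammoid representation — I would handle by direct analogy with Theorem~\ref{thm:smultiwaycut}.
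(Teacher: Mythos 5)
Your two-phase structure matches the paper exactly: first bound $|T_G|$ by $\Oh((|B|+\ell)^2)$ via Razgon-style reduction rules (isolated components, high-degree internal terminals with a shift $\Delta$, high-degree boundary terminals with pendant dummy terminals), then apply the gammoid/representative-sets machinery on the sink-only-copy auxiliary graph with sources $B \cup T_G$ to bypass all but $\Oh((|B|+|T_G|)^3)$ internal vertices. The key claim you single out — that for $x \in X_0 \cap F$ the set $(X_0 \cap F)\cup\{x',x''\}$ is linked to $(B\cup T_G)\setminus X$ — is indeed the crux and is also what the paper establishes, via Kratsch's path-packing result truncated at the last $B$-vertex of each path.

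However, there are two genuine issues. First, the quantitative parameter: you compute a $(|B|+|T_G|)$-representative set, but the independent set that $\{x,x',x''\}$ must extend is $X_x = ((X_0\cap F)\setminus\{x\})\cup Y$ where $Y=B\setminus U$ collects the sources that feed the truncated paths; $|X_x|$ is only bounded by $2|B|+|T_G|-1$, not by $|B|+|T_G|$. The paper therefore computes a $(2|B|+|T_G|-1)$-representative family. Your choice would fail to certify inclusion for the larger $X_x$ (the asymptotic cube is the same, but the argument would not go through as written).

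Second and more importantly, your ``Representativeness then guarantees some $\{y,y',y''\}\in\mathcal{Z}'$ \dots\ so we may swap $x$ for $y$ and iterate'' step is a gap. The representative-set guarantee only yields \emph{some} triplet extending $X_x$; that $(X\setminus\{x\})\cup\{y\}$ is still an optimal multiway cut is not a consequence of the path-packing/independence property and would need a separate (nontrivial) argument. The paper instead proves a \emph{uniqueness} claim: no $v\neq x$ with $v\in F$ can have $\{v,v',v''\}$ extend $X_x$. For $v\in X_0\setminus\{x\}$ this is because $v\in X_x$, and for $v\in F\setminus X$ the three vertex-disjoint paths from $B\cup T_G$ to $\{v,v',v''\}$ avoiding $X_x$ would route $v$ to two distinct terminals in $Q-X$, contradicting that $X$ is a cut. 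Uniqueness forces $\{x,x',x''\}\in\mathcal{Z}'$ directly, with no swapping needed, and this is what makes the bypassing step gluing-safe via Lemma~\ref{lem:bypass}. You need to add this uniqueness argument (or supply a correctness proof for the swap, which is harder).
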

    
    Let us first reduce the number of terminals by adapting arguments due to Razgon~\cite[Theorem 5]{Razgon2011-LargeIsolatingCuts}.
    
    \begin{rrule}\label{rr:dtmwc_iso}
        Assume $C$ to be a connected component of $G$ with $|C \cap T_G| \leq 1$ and $C \cap B = \emptyset$. Then remove $C$.
    \end{rrule}
    \begin{lemma}
        Reduction Rule~\ref{rr:dtmwc_iso} is gluing safe.
    \end{lemma}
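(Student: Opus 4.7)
The plan is to show that removing such a component $C$ preserves gluing equivalence with offset $\Delta = 0$. The key structural observation is that $C \cap B = \emptyset$, combined with the fact that gluing only identifies vertices on the shared boundary, forces $C$ to be disjoint from $V(H)$ for every glued boundaried graph $H_B$; hence $C$ remains a connected component of $G_B \oplus H_B$ disconnected from all of $V(H)$, and any terminal set $T_H \subseteq V(H)$ satisfies $T_H \cap C = \emptyset$. Consequently, the terminals of the glued instance that lie inside $C$ are exactly $C \cap T_G$, which by assumption has size at most one.

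From this fact I would derive both inequalities between the two optima. Let $G' := G - C$ and $T_G' := T_G \setminus C$. For the direction $\OPT((G_B, T_G) \oplus (H_B, T_H)) \geq \OPT((G'_B, T_G') \oplus (H_B, T_H))$ I would take an arbitrary feasible multiway cut $X$ in the original glued instance and show that $X \setminus C$ is feasible in the reduced glued instance with size at most $|X|$: since $C$ is an isolated component containing at most one terminal, no path through $C$ can ever connect two distinct terminals, and $(G_B \oplus H_B) - (C \cup X)$ is a subgraph of $(G_B \oplus H_B) - X$, so removing $C$-vertices from the cut cannot create any new forbidden terminal-terminal path outside of $C$ either. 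Conversely, any feasible cut for the reduced instance remains feasible for the original one, because reintroducing the isolated piece $C$ only adds at most one terminal that is trivially separated from all others.

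The one point that deserves care, specific to \dtmultiwaycut as opposed to \smultiwaycut, is that terminals themselves may be taken into the solution, so the original optimum might happen to delete the lone terminal $t \in C$. This is absorbed by the same $X \setminus C$ argument: since $t$'s component in $G_B \oplus H_B$ is contained in $C$ and cannot reach any other terminal, deleting $t$ is never necessary, and dropping $t$ together with any other $C$-vertex from the cut preserves feasibility without increasing the cut size. Combining both inequalities yields equality of the two optima for every $(H_B, T_H)$, hence gluing equivalence with $\Delta = 0$. I do not expect any real obstacle here; the argument is essentially that an isolated terminal-free or single-terminal component is invisible to any multiway-cut objective, and the only thing to verify is that this invisibility is preserved under arbitrary gluing along $B$, which is immediate from $C \cap B = \emptyset$.
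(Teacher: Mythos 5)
Your proof is correct and takes essentially the same approach as the paper's: both rely on the observation that $C \cap B = \emptyset$ forces $C$ to remain an isolated connected component of $G_B \oplus H_B$ with at most one terminal, so no path in $C$ can ever connect two distinct terminals, making $C$ irrelevant to any multiway cut. You spell out the two directions and the deletable-terminal corner case more explicitly than the paper does, but the underlying argument is identical.
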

    \begin{proof}
        Note that no matter what graph $H_B$ and terminal set $T_H$ we consider, it holds that $C$ remains a connected components of $G_B \oplus H_B$ and contains at most one terminal from $T_G \cup T_H$. Thus no vertex of $C$ connects at least two terminals, and $C$ is irrelevant for any optimum solution $X$ on $(G_B \oplus H_B, T_G \cup T_H)$.
        
        The other direction is trivial, as any solution for $(G_B \oplus H_B, T_G \cup T_H)$ is also feasible for $((G_B-C) \oplus H_B, T_G \cup T_H)$.
    \end{proof}
    
    \begin{rrule}\label{rr:dtmwc_high_deg_F}
        Assume some vertex $v \in F := V(G) \setminus B$, such that there exist paths from $v$ to $\ell + |B| + 2$ many $T_G$-terminals, and these paths only intersect at $v$. Then remove $v$ and increase $\Delta$ by one.
    \end{rrule}
    \begin{lemma}
        Reduction Rule~\ref{rr:dtmwc_high_deg_F} is gluing safe.
    \end{lemma}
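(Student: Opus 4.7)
The plan is to establish the offset $\Delta = 1$ by showing, for every glued instance $(H_B, T_H)$ and with $T := T_G \cup T_H$, that $\OPT(G_B \oplus H_B, T) = \OPT((G-v)_B \oplus H_B, T) + 1$. The inequality $\OPT(G_B \oplus H_B, T) \le \OPT((G-v)_B \oplus H_B, T) + 1$ is immediate, since appending $v$ to any multiway cut $Y'$ for the reduced glued instance yields a multiway cut for the original one.

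For the reverse direction I would show that there is always an optimum multiway cut $X$ for $(G_B \oplus H_B, T)$ containing $v$, so that $X \setminus \{v\}$ certifies $\OPT((G-v)_B \oplus H_B, T) \le \OPT(G_B \oplus H_B, T) - 1$. Start with any optimum $X$ and suppose $v \notin X$, letting $C_v$ denote the component of $v$ in $(G_B \oplus H_B) - X$. Since $X$ is a multiway cut, $C_v$ contains at most one terminal, so of the $\ell + |B| + 2$ internally disjoint paths from $v$ to distinct $T_G$-terminals, at most one has its terminal endpoint inside $C_v$; each of the remaining at least $\ell + |B| + 1$ paths contributes a distinct vertex of $X$ different from $v$, giving $|X \cap V(G)| \ge \ell + |B| + 1$.

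Next, I would build $X^\ast := (X \cap V(H)) \cup B \cup S \cup \{v\}$ and verify that it is a multiway cut for $T$ in $G_B \oplus H_B$. Because $B \subseteq X^\ast$, any would-be connection in $(G_B \oplus H_B) - X^\ast$ between distinct terminals must live entirely inside either $V(G) \setminus B$ or $V(H) \setminus B$: on the $G$-side the set $S$ already separates all $T_G$-terminals by hypothesis, while on the $H$-side such a path would also survive in $(G_B \oplus H_B) - X$, contradicting $X$'s multiway cut property. The size bound $|X^\ast| \le |X \cap (V(H) \setminus B)| + |B| + |S| + 1 \le |X \cap (V(H) \setminus B)| + |B| + \ell + 1 \le |X|$ combines directly with the lower bound from the previous paragraph, so $X^\ast$ is itself optimum and contains $v$, as desired.

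The main obstacle will be the careful case analysis for verifying $X^\ast$ is a multiway cut, including pairs with a terminal in $T \cap B$ (which then lies in $X^\ast$ outright) and pairs whose terminals are split across the two sides (separated because $B$ is entirely removed), but all cases reduce to using $B \subseteq X^\ast$ together with the cut properties of $S$ on the $G$-side and of $X$ on the $H$-side.
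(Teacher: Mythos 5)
Your proof is correct and follows essentially the same route as the paper's. In both cases, the key step for the harder inequality is to show that any optimum cut $X$ with $v\notin X$ is at least as large as $(X\cap (V(H)\setminus B))\cup B\cup S$ (plus $\{v\}$, in your version), using the $\ell+|B|+2$ near-disjoint paths to force $|X\cap V(G)|\ge \ell+|B|+1$ and the given local solution $S$ together with $B$ to certify feasibility; you phrase this as constructing an optimum cut containing $v$, the paper as deriving a strict contradiction from $v\notin X$, but the computation is identical. One small imprecision: you keep the terminal set $T=T_G\cup T_H$ unchanged on the reduced side, but if $v\in T_G$ then $v$ is no longer a vertex of $(G-v)_B\oplus H_B$; the paper fixes this cleanly by passing to $T'_G:=T_G\setminus\{v\}$, which you should do as well.
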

    \begin{proof}
        Fix any graph $H_B$ and terminal set $T_H \subseteq V(H)$. Let $Q = G_B \oplus H_B$ and $T = T_G \cup T_H$. Let $G' = G-v$, $Q' = G'_B \oplus H_B$, $T'_G = T_G \setminus \{v\}$ (which might be equal to $T_G$ in the case that $v \in F \setminus T_G$), and $T' = T'_G \cup T_H$. We show correctness of the equality $\OPT(Q, T) = \OPT(Q', T') + 1$.
        
        The direction $\OPT(Q, T) \leq \OPT(Q', T') + 1$ follows from the simple observation that for any feasible solution $X' \subseteq V(Q')$ it holds that $Q' - X' = Q - (X' \cup \{v\})$ and $T' \setminus X' = T \setminus (X' \cup \{v\})$. Hence, by the fact that there is no path between two terminals of $T' \setminus X'$ in $Q' - X'$, we know that there is no path between two terminals of $T \setminus (X' \cup \{v\})$ in $Q - (X' \cup \{v\})$, implying that $X' \cup \{v\}$ is a feasible solution for $(Q, T)$ of size $|X'| + 1$.
        
        Now we show the direction ``$\OPT(Q, T) \geq \OPT(Q', T') + 1$'':
        Let $X$ be an optimum solution for $(Q, T)$. If $v \in X$, then we are done, hence assume the opposite case. Then $X$ contains one $V(G)$-vertex for each but one of the $\ell + |B| + 2$ many paths from $v$ to $T_G$-terminals, that only intersect at $v$. Note that $\tilde{X} := (X \setminus V(G)) \cup B \cup S$ (with $S$ being the given solution for $(G, T_G)$ of size at most $\ell$) is also a feasible solution for $Q$: any $B$-intersecting paths between terminals of $T$ are hit by $B$, any paths between $T_H$-terminals are hit by $X \setminus V(G)$, and any paths between $T_G$-terminals are hit by $S$. But also it holds that $|\tilde{X}| \leq |X| - (\ell + |B| + 1) + |B| + \ell < |X|$, which contradicts optimality of $X$.
    \end{proof}
    
    \begin{rrule}\label{rr:dtmwc_high_deg_B}
        Assume some boundary vertex $v \in B$, such that there exist paths from $v$ to $\ell + |B| + 2$ many $T_G$-terminals, and these paths only intersect at $v$. Then remove all edges incident with $v$, add to $T_G$ two new terminals $t_{v, 1}, t_{v, 2}$ and make them adjacent to $v$.
    \end{rrule}
    \begin{lemma}
        Reduction Rule~\ref{rr:dtmwc_high_deg_B} is gluing safe.
    \end{lemma}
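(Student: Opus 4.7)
The plan is to fix an arbitrary gluing partner $(H_B, T_H)$ and set $Q := G_B \oplus H_B$, $T := T_G \cup T_H$, $Q' := G'_B \oplus H_B$, and $T' := T \cup \{t_{v,1}, t_{v,2}\}$, and then establish $\OPT(Q, T) = \OPT(Q', T')$, so that the rule is gluing safe with offset $\Delta = 0$. The strategy is to first argue that in any optimum solution $X$ for $(Q, T)$ we may assume $v \in X$, which then makes both inequalities fall out from a direct comparison of $Q - X$ and $Q' - X$.

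The first step is to carry over the key lemma from the proof of Reduction Rule~\ref{rr:dtmwc_high_deg_F}: for any optimum $X$ with $v \notin X$, the $\ell + |B| + 2$ paths from $v$ to distinct $T_G$-terminals (vertex-disjoint except at $v$) force $|X \cap V(G)| \geq \ell + |B| + 1$, since each such path either has its terminal endpoint in $X$ or, for any two paths whose $T_G$-endpoints lie outside $X$, the combined path through $v$ must be hit by an internal $V(G)$-vertex. Then replacing $X \cap V(G)$ by $B \cup S$ (where $S$ is the given local solution of size at most $\ell$) yields a feasible solution for $(Q, T)$ of size strictly smaller than $|X|$, contradicting optimality. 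The argument for feasibility of the replacement is verbatim that of Reduction Rule~\ref{rr:dtmwc_high_deg_F}, using that $B \cup S$ cuts $G$ into the terminal components and that $B$ separates $V(G)$ from $V(H)$ in $Q$.

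With $v$ assumed in $X$, the forward inequality $\OPT(Q', T') \leq \OPT(Q, T)$ follows by noticing that the same set $X$ is feasible for $(Q', T')$: the new terminals $t_{v,1}, t_{v,2}$ are isolated in $Q' - X$ because their only neighbour $v$ is deleted, and all modifications from $G$ to $G'$ involve only $v$ or the new terminals, so $Q - X$ and $Q' - X$ coincide on $V(Q) \setminus X$ and every separation of $T$-terminals in $Q - X$ persists in $Q' - X$. For the reverse inequality, I would take any feasible $X'$ for $(Q', T')$; since $v$ is the unique possible internal vertex on any path between $t_{v,1}$ and $t_{v,2}$ in $Q'$, we must have $X' \cap \{v, t_{v,1}, t_{v,2}\} \neq \emptyset$. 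Setting $X := (X' \cap V(Q)) \cup \{v\}$, a short case distinction on whether $v \in X'$ gives $|X| \leq |X'|$, and the same coincidence of $Q - X$ and $Q' - X'$ on $V(Q) \setminus X$ shows that $X$ is feasible for $(Q, T)$.

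The main obstacle I expect is the careful bookkeeping across the two graphs: one must verify that deleting $v$ genuinely trivialises the edge modifications of $G'$, and that the size comparison in the backward direction accounts symmetrically for the sub-cases $v \in X'$ and $v \notin X'$ (where in the latter case at least one $t_{v,i} \in X'$ absorbs the extra cost of adding $v$ into $X$). A minor additional check is the degenerate case $v \in T_G$: there $v \in T$ is allowed in the solution since \dtmultiwaycut permits terminal deletion, and both the counting lemma and the feasibility arguments treat $v$ uniformly whether or not it is a terminal, so no adjustment is needed.
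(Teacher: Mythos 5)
Your proof is correct and follows the paper's approach: argue that $v$ must lie in any optimum solution for $(Q,T)$ via the counting argument from Reduction Rule~\ref{rr:dtmwc_high_deg_F}, use the identity $Q - v = Q' - \{v, t_{v,1}, t_{v,2}\}$ to transfer solutions both ways, and swap a deleted $t_{v,i}$ for $v$ when reading back a solution of $(Q', T')$. You also correctly compute $\Delta = 0$; the ``$+1$'' in the paper's two inequality headers appears to be a stray carry-over from the proof of Reduction Rule~\ref{rr:dtmwc_high_deg_F}, whereas its actual derivations, like yours, establish $\OPT(Q,T) = \OPT(Q',T')$.
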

    \begin{proof}
        Here, the idea is the same as in the proof for gluing safeness of Reduction Rule~\ref{rr:dtmwc_high_deg_F}. The difference is that instead of directly putting $v$ into the solution, we remember that $v$ needs to be in the solution. Again, fix any graph $H_B$ and terminal set $T_H$, let $Q = G_B \oplus H_B$, $T = T_G \cup T_H$, as well as $G'$ the resulting graph from $G$ by applying the reduction rule on vertex $v$, let $Q' = G'_B \oplus H_B$, $T'_G = T_G \cup \{t_{v, 1}, t_{v, 2}\}$, and $T' = T'_G \cup T_H$.
        
        ``$\OPT(Q, T) \leq \OPT(Q', T') + 1$'':
        Let $X'$ be a feasible solution for $(Q', T')$. Without loss of generality we can assume that $v$ is contained in $X'$: Otherwise, it holds that at least one of $t_{v, 1}, t_{v, 2}$ is contained in $X'$ and thus $(X' \setminus \{t_{v, 1}, t_{v, 2}\}) \cup \{v\}$ would be a feasible solution of at most the same size. As it holds that $Q - v = Q' - \{v, t_{v, 1}, t_{v, 2}\}$, it follows that $X'$ is also a feasible solution for $(Q, T)$.
        
        ``$\OPT(Q, T) \geq \OPT(Q', T') + 1$'':
        Let $X$ be an optimum solution for $(Q, T)$. By the argumentation in the same direction of the proof for Reduction Rule~\ref{rr:dtmwc_high_deg_F}, we can assume that $v$ is contained in $X$. By the equality $Q - v = Q' - \{v, t_{v, 1}, t_{v, 2}\}$ and the fact that terminals $t_{v, 1}, t_{v, 2}$ of $T'$ are only adjacent to $v$ in $Q'$, it thus follows that $X$ is also feasible for $(Q', T')$.
    \end{proof}
    
    \begin{lemma}\label{lem:dtmwc_T_S}
        If neither of Reduction Rules~\ref{rr:dtmwc_iso}, \ref{rr:dtmwc_high_deg_F}, and \ref{rr:dtmwc_high_deg_B} can be applied, then it holds that $|T_G| \in \Oh((|B|+\ell)^2)$.
    \end{lemma}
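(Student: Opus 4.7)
The plan is to exhaustively apply all three reduction rules and then obtain a clean structural counting bound. From Reduction Rule~\ref{rr:dtmwc_iso} being inapplicable, every connected component of $G$ either contains at least two $T_G$-terminals or intersects $B$. From Reduction Rules~\ref{rr:dtmwc_high_deg_F} and~\ref{rr:dtmwc_high_deg_B} being inapplicable, by Menger's theorem every vertex $v \in V(G)$ admits at most $\ell + |B| + 1$ internally vertex-disjoint paths from $v$ to distinct $T_G$-terminals (other than $v$ itself), regardless of whether $v \in F$ or $v \in B$.

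Next, I set $B' := B \cup S$, so $|B'| \leq |B| + \ell$, and split $T_G$ into $T_G \cap B'$ and $T_G \setminus B'$. The first group contributes at most $|B| + \ell$ terminals. For the second group, I use that $S \subseteq B'$ is a multiway cut for $T_G$: any two terminals in $T_G \setminus B' \subseteq T_G \setminus S$ must lie in different connected components of $G - S$ and therefore in different components of $G - B'$. Hence $|T_G \setminus B'|$ equals the number of connected components of $G - B'$ that contain a $T_G$-terminal.

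It then remains to bound that number by $\Oh((|B|+\ell)^2)$. For each such component $C'$ with unique terminal $t_{C'}$, let $K$ be the connected component of $G$ containing $C'$. By the consequence of Reduction Rule~\ref{rr:dtmwc_iso}, $K$ contains another $T_G$-terminal or a $B$-vertex; either way, the connection from $t_{C'}$ to this witness must leave $C'$ through some vertex of $B'$, so $N_G(C') \cap B' \neq \emptyset$. I assign $C'$ to an arbitrary $b_{C'} \in N_G(C') \cap B'$. If some $b \in B'$ were assigned $k$ different components $C'_1,\dots,C'_k$, then picking one neighbor $w_i \in C'_i \cap N_G(b)$ and an internal path $w_i \rightsquigarrow t_{C'_i}$ inside $C'_i$ yields $k$ internally vertex-disjoint paths from $b$ to $k$ distinct terminals (since the internal vertices live in pairwise disjoint components of $G - B'$). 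The inapplicability of Reduction Rules~\ref{rr:dtmwc_high_deg_F} and~\ref{rr:dtmwc_high_deg_B} therefore forces $k \leq \ell + |B| + 1$. Summing gives at most $|B'|(\ell + |B| + 1) \in \Oh((|B|+\ell)^2)$ components, and hence $|T_G| \in \Oh((|B|+\ell)^2)$ in total.

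The main subtlety, which I expect to be the bulk of the verification but not a real obstacle, is interaction between the rules: applying Reduction Rule~\ref{rr:dtmwc_high_deg_F} can create fresh isolated components eligible for Reduction Rule~\ref{rr:dtmwc_iso}, and Reduction Rule~\ref{rr:dtmwc_high_deg_B} enlarges $T_G$ by two degree-one terminals. Both are handled by understanding ``inapplicable'' as the fixed point of iterated application; the newly introduced degree-one terminals fit the counting above automatically, sitting in singleton components of $G - B'$ whose only neighbor is the corresponding $B$-vertex.
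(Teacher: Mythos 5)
Your proof is correct, but it takes a genuinely different (and in some ways cleaner) route from the paper's. The paper fixes an arbitrary glued instance $Q := G_B \oplus H_B$ with terminal set $T := T_G \cup T_H$ and an \emph{optimal global} multiway cut $X$ for $(Q,T)$, argues $|X \cap V(G)| \leq |B|+\ell$ (by swapping $X \cap V(G)$ for $B \cup S$), and then sorts $T_G$-terminals into three buckets: those in $X$, those in the at most $|B|$ components of $Q-X$ meeting $B$, and those in components adjacent to $X \cap V(G)$, with the per-vertex bound of $\ell+|B|+1$ coming from Reduction Rules~\ref{rr:dtmwc_high_deg_F} and~\ref{rr:dtmwc_high_deg_B}. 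You instead stay entirely local: you set $B' := B \cup S$ with the \emph{given} local solution $S$, observe that $S$ being a multiway cut forces each component of $G - B'$ to hold at most one $T_G$-terminal, use Reduction Rule~\ref{rr:dtmwc_iso} to anchor every such component to some $B'$-vertex via $N_G(C') \subseteq B'$, and then bound the anchoring multiplicity by $\ell+|B|+1$ via the same two high-degree rules. The two countings are structurally parallel (a size-$\Oh(|B|+\ell)$ ``separator'' plus components hanging off it), but your version avoids the hypothetical $H_B, T_H$ and the optimal global cut entirely, which is arguably a cleaner presentation for what is a statement purely about the local instance. Two small remarks: the invocation of Menger's theorem is unnecessary -- the reduction rules already speak directly of paths pairwise intersecting only at $v$, which is exactly what you build; and your final paragraph's concern about $S$ remaining a valid local solution after Reduction Rule~\ref{rr:dtmwc_high_deg_B} is resolved by noting that the rule's precondition plus $|S| \leq \ell$ forces $v \in S$ (otherwise $S$ would have to hit $\geq \ell+|B|+1$ of the internally disjoint paths), so $S$ stays valid at the fixed point.
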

    \begin{proof}
        Fix $H_B$ and $T_H$, assume optimal global solution $X$. Let $Q$ be the resulting global graph $G_B \oplus H_B$ and $T$ the total terminal set $T_G \cup T_H$. Obviously, there are at most $|B|$-many components of $Q-X$ that intersect $B$. As by assumption we cannot apply Reduction Rule~\ref{rr:dtmwc_iso}, it follows that all other components of $Q-X$ that contain $T_G$-terminals, must be adjacent to $X \cap V(G)$. Since also Reduction Rules~\ref{rr:dtmwc_high_deg_F} and \ref{rr:dtmwc_high_deg_B} cannot be applied, each vertex from $X \cap V(G)$ can be adjacent to at most $\ell + |B| + 1$-many components of $Q-X$ that do not intersect $B$ and contain $T_G$-terminals. 
        
        Altogether, there are: (i) at most $\ell + |B|$-many $T_G$-terminals that are contained in $X$ (as in the proof for gluing safeness of Reduction Rule~\ref{rr:dtmwc_high_deg_F}, we can assume that $|X \cap V(G)|$ is upper bounded by $\ell + |B|$); (ii) at most $|B|$-many $T_G$-terminals contained in the components of $Q - X$ that intersect $B$; and (iii) at most $(\ell + |B|)(\ell + |B| + 1)$-many $T_G$-terminals whose corresponding components in $Q-X$ do not intersect $B$ and are adjacent to $X \cap V(G)$. Summing up, we get at most $(\ell + |B|)(\ell + |B| + 3) \in \Oh((|B| + \ell)^2)$ terminals contained in $T_G$.
    \end{proof}
    
    We are now ready to reduce the vertex set to size $\Oh(|B|+|T_G|)^3)$.
    Let $G^*$ be the mixed graph obtained from $G$ by adding vertices $v', v''$ for each $v \in F = V(G) \setminus B$, and edges $(u, v'), (u, v'')$ for each edge $\{u, v\} \in E(G)$, i.e., $v'$ and $v''$ are \emph{sink-only copies} of $v$ in $G^*$. 
    
    \begin{rrule}\label{rr:dtmwc}
        Let $\mathcal{M}$ be the gammoid for $G^*$ with sources $B \cup T_G$ and let $\mathcal{Z} := \{\{v, v', v''\} \mid v \in F\}$. Further, let $\mathcal{Z}'$ be a $(2|B|+|T_G|-1)$-representative subset of $\mathcal{Z}$. Then bypass every vertex $v \in F \setminus T_G$ for which $\{v, v', v''\}$ is not contained in $\mathcal{Z}'$.
    \end{rrule}
    
    \begin{lemma}\label{lem:dtmwc}
        Reduction Rule~\ref{rr:dtmwc} is gluing safe.
    \end{lemma}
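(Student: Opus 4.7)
The plan is to prove $(G_B, T_G) \equiv (G'_B, T_G)$ with offset $\Delta = 0$. Fix any $H_B$ and $T_H \subseteq V(H) \setminus B$, set $Q := G_B \oplus H_B$, $Q' := G'_B \oplus H_B$, $T := T_G \cup T_H$, and show $\OPT(Q, T) = \OPT(Q', T)$. The direction $\OPT(Q, T) \leq \OPT(Q', T)$ is routine: every bypassed vertex lies in $F \setminus T_G$, and since $T_H \cap V(G) = \emptyset$, no bypassed vertex is a terminal in $T$. Hence for any solution $X'$ for $(Q', T)$, Lemma~\ref{lem:bypass} applied repeatedly shows that reachability between $T$-vertices in $Q - X'$ coincides with that in $Q' - X'$, so $X'$ remains a feasible solution for $(Q, T)$ of the same size.

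For the reverse direction, I would pick an optimal $X$ for $(Q, T)$ that lexicographically maximizes first $|X \cap T|$ and then $|X \cap (B \cup T_G \cup Z)|$, where $Z := \{v \in F \mid \{v, v', v''\} \in \mathcal{Z}'\}$. Assume for contradiction that there exists $x \in X \cap F \setminus (T_G \cup Z)$, and note $x \in X_0 \cap F$ for $X_0 := X \setminus T$. The key step is a Menger-style connectivity claim: the set $(X_0 \cap F) \cup \{x', x''\}$ is linked from $(B \cup T_G) \setminus X$ in $G^*$. The $|X_0 \cap F|$ paths to $X_0 \cap F$ itself come from the minimality of $|X|$ as a $T$-multiway cut (any missing path would expose a strictly smaller cut), and the two extra paths to the sink-only copies $x', x''$ come from the maximality of $|X \cap T|$: if all out-of-$X$ neighbors of $x$ lay on a single side of the cut, one could swap $x$ for a suitable $T$-vertex, increasing $|X \cap T|$. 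Translating to $\mathcal{M}$, this says $\{x, x', x''\} \in \mathcal{Z}$ extends $A := (X_0 \cap F) \setminus \{x\}$. Since the rank of $\mathcal{M}$ is at most $|B| + |T_G|$, we get $|A| \leq |B| + |T_G| - 3 \leq 2|B| + |T_G| - 1$, and $(2|B| + |T_G| - 1)$-representativeness yields $\{w, w', w''\} \in \mathcal{Z}'$ that also extends $A$; since $x \notin Z$, we conclude $w \in Z$ with $w \notin X_0 \cap F$.

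The main obstacle is then to verify that the swap $X^* := (X \setminus \{x\}) \cup \{w\}$ remains a feasible $T$-multiway cut of the same size, which contradicts the lex-maximal choice by strictly increasing $|X \cap (B \cup T_G \cup Z)|$. Cardinality is immediate; feasibility I would handle by an exchange argument in the spirit of Kratsch--Wahlstr\"om~\cite{KratschWahlstrom2020-RepresentativeSetsIrrelevantVertices}: the $|A| + 3$ source-to-sink vertex-disjoint paths in $G^*$ witnessing independence of $A \cup \{w, w', w''\}$ end, in two of them, via distinct neighbors of $w$, so $w$ can play the structural role that $x$ did in the cut. An uncrossing step between these new paths and the original min-cut witnesses for $X$ then shows that no $T$-terminal pair becomes connected in $Q - X^*$. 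Consequently the lex-maximal choice forces $X \cap F \subseteq T_G \cup Z$, so $X \subseteq V(Q')$ and therefore $\OPT(Q', T) \leq |X| = \OPT(Q, T)$.
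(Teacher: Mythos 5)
Your proposal diverges from the paper's argument at the crucial step and the gap there is not cosmetic. The first direction (via Lemma~\ref{lem:bypass}) and the overall framing (choose $X$ minimizing $|X|$ and maximizing $|X\cap T|$, then argue no bypassed vertex lies in $X$) match the paper. However, the paper does \emph{not} perform any swap of $x$ for a different vertex $w$. Instead, it proves \emph{uniqueness}: it constructs a specific independent set $X_x = ((X_0 \cap F)\setminus\{x\}) \cup Y$, where $Y = B\setminus U$ is the part of $B$ not used as a truncated starting point of the localized path packing, and then shows that $\{x,x',x''\}$ is the \emph{only} element of $\mathcal{Z}$ extending $X_x$. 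Representativeness then forces $\{x,x',x''\}\in\mathcal{Z}'$ directly, so no $X$-vertex was bypassed and $X$ itself is a solution for $G'_B\oplus H_B$. The two sink-only copies exist precisely to power this uniqueness argument, not a swap.

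Concretely, your plan has two gaps. First, your set $A := (X_0\cap F)\setminus\{x\}$ omits the $Y$ part. Without $Y$ in the set being extended, a hypothetical second extender $v$ could be linked to $B\cup T_G$ via paths passing through $Y$-vertices (some of which may lie in $X$), so you cannot derive the contradiction that $v$ joins two terminals in $Q-X$; uniqueness is lost, and the step where you conclude ``$w\in Z$ with $w\notin X_0\cap F$'' does not yet help you, since you must still certify the swap. Second, and decisively, the feasibility of $X^* := (X\setminus\{x\})\cup\{w\}$ as a $T$-multiway cut is exactly what a full proof must supply and what your proposal leaves as ``the main obstacle'' with only a gesture at uncrossing. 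There is no reason a priori that the vertex $w$, returned by the representative-set oracle for the abstract linkage problem, can structurally replace $x$ in the concrete cut $X$; the Kratsch--Wahlstr\"om machinery is engineered so that one never needs such a replacement. You would also need to carry out the localization from the global graph $Q^*$ (where Kratsch's path-packing lemma applies) to the local $G^*$, which the paper handles by truncating paths at their last $B\setminus X$-vertex and defining $U$ accordingly; your proposal treats this as immediate. In short, the overall skeleton is right, but the core step should be a uniqueness argument over $X_x$ (with $Y$ included), not an exchange argument over $A$.
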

    \begin{proof}
        Let $G'$ be the resulting graph. Fix $H_B, T_H$. Let $T := T_G \cup T_H$.
        
        ``$\OPT(G_B \oplus H_B, T) \leq \OPT(G'_B \oplus H_B, T)$'':
        Let $X'$ be a solution for $(G'_B \oplus H_B, T)$. Obviously, $X'$ does not contain any bypassed vertices and thus by Lemma~\ref{lem:bypass} it is also a solution for $(G_B \oplus H_B, T)$.
        
        ``$\OPT(G_B \oplus H_B, T) \geq \OPT(G'_B \oplus H_B, T)$'':
        Let $X$ be a solution for $Q := G_B \oplus H_B$ with terminal set $T$, such that $|X|$ is minimized but among such $|X \cap T|$ is maximized. Let $X_F := X \cap F$, $X_0 := X \setminus T$, and $T_0 := T \setminus X$. Further, define the mixed graph $Q^*$ on the base of $Q$, but with two additional sink-only copies $v', v''$ for every vertex $v \in V(Q)$. 
        Kratsch~\cite{Kratsch2014-RecentDevelopmentsKernelization} has shown that in $Q^* - (T \cap X)$ there exists a path packing with $|X_0| + 2$ paths from $T_0$ to $X_0 \cup \{x', x''\}$ for each $x \in X_0$. Note that this implies existence of a path packing in $G^* - (T \cap X)$ with $|X_0 \cap F| + 2$ paths from $(B \cup T_G) \setminus X$ to $(X_0 \cap F) \cup \{x', x''\}$ for each $x \in X_0 \cap F$: Take the path packing from $T_0$ to $X_0 \cup \{x', x''\}$ in $Q^* - (T \cap X)$, throw away any path going to $X_0 \setminus F$, and note that the remaining paths are either totally contained in $G^*[F \cup \{x', x''\}]$ or must intersect $B \setminus X$. For each path of the latter case we fix its last vertex $u$ contained in $B \setminus X$ and use the subpath from $u$ to $(X_0 \cap F) \cup \{x', x''\}$ for the path packing.
        
        Now we show for any fixed $x \in X_0 \cap F$ that $\{x, x', x''\}$ is contained in $\mathcal{Z}'$. Fix the path packing of size $|X_0 \cap F| + 2$ from $(B \cup T_G) \setminus X$ to $(X_0 \cap F) \cup \{x', x''\}$ in $G^* - (T \cap X)$ and let $U$ be the set of $B \setminus X$-vertices from which the paths go to $(X_0 \cap F) \cup \{x', x''\}$ without touching $B$ any further. Set $Y := B \setminus U$ and observe that there exists a path packing of $|X_0 \cap F| + |Y| + 2$ paths from $B \cup T_G$ to $(X_0 \cap F) \cup Y \cup \{x', x''\}$ (including the length-$0$ path $(y)$ for each $y \in Y$) in $G^*$, i.e., $\{x, x', x''\}$ extends $X_x := ((X_0 \cap F) - \{x\}) \cup Y$ in $\mathcal{M}$. Let us make sure that $|X_x|$ is upper bounded by $2|B| + |T_G| - 1$, thus leading to the fact that $\{x, x', x''\}$ is a correct candidate for $\mathcal{Z}'$: If it were true that $|X_x| > 2|B| + |T_G|-1$, it would follow thereof that $|X_F| > |B| + |T_G|$ and thus that $\tilde{X} := (X \setminus V(G)) \cup B \cup T_G$ is a multiway cut for $Q$ of smaller size, contradicting minimality of $|X|$. For this last point we remark that a potential path between terminals in $Q - \tilde{X}$ would need to be a path between terminals from $T_H \setminus B$ and contained entirely inside of $H - X$, contradicting the fact that $X$ is a multiway cut for $(Q, T)$.
        
        Having seen that $\{x, x', x''\}$ is a candidate for $\mathcal{Z}'$ as an extension of $X_x$, we show that there is no other vertex $v \in F$ which extends $X_x$. Assume for contradiction that such a vertex $v$ exists. First for any $v \in X_0 \setminus \{x\}$ it holds that $\{v, v', v''\} \cap X_x = \{v\} \neq \emptyset$ and thus that $\{v, v', v''\}$ does not extend $X_x$. Hence we consider the case that $v \in F \setminus X$. We remark that by choice of $U$ we have that $U \cap X = \emptyset$, and that each of its vertices $u \in U$ is reachable from some terminal $t \in T$ in $Q - X$, with no two $U$-vertices sharing the terminal. Since by assumption there exist three paths from $B \cup T_G$ to $\{v, v', v''\}$ and at most one of them goes through $x$, two of these paths go directly through $G - X_x$. In particular this means that these two paths go from $U \cup T_G$ to $\{v, v', v''\}$, implying that $v$ is connected with two different terminals in $Q - X$, a contradiction to $X$ being a multiway cut for $(Q, T)$.
        
        As a result, for any fixed $x \in X_0 \cap F$ it must hold that $\{x, x', x''\}$ is contained in $\mathcal{Z}'$ and thus no $X$-vertices were bypassed while constructing $G'$. By Lemma~\ref{lem:bypass} this means that $X$ is also a multiway cut for $(G'_B \oplus H_B, T)$.
    \end{proof}
    
    \begin{proof}[Proof of Theorem~\ref{thm:pbk_dtmwc}]
        We are given a boundaried graph $G_B$ and set $T_G$ of at most $\ell$-many local terminals. Apply Reduction Rules~\ref{rr:dtmwc_iso}, \ref{rr:dtmwc_high_deg_F}, and \ref{rr:dtmwc_high_deg_B} exhaustively, and remember that by Lemma \ref{lem:dtmwc_T_S} it holds that $|T_G| \in \Oh(|B| + \ell)^2$. Next, using Lemma~\ref{lem:repset} we compute in randomized polynomial time the $(2|B| + |T_G| - 1)$-representative subset $\mathcal{Z}'$ of $\mathcal{Z} = \{\{v, v', v''\} \mid v \in F\}$ with respect to the gammoid $\mathcal{M}$ for $G^*$ with sources $B \cup T_G$, which we then apply Reduction Rule~\ref{rr:dtmwc} on. Note that $\mathcal{Z}'$ has size at most $\binom{3+(2|B| + |T_G| - 1)}{3} \in \Oh((|B| + |T_G|)^3)$. Since we have bypassed all vertices in $F \setminus T_G$ for which $\{v, v', v''\}$ is not contained in $\mathcal{Z}'$, the gluing equivalent (by Lemma~\ref{lem:dtmwc}) boundaried graph $G'_B$ has $\Oh((|B| + |T_G|)^3) = \Oh((|B| + \ell)^6)$ vertices, which we output together with the terminal set $T_G$.
    \end{proof}
    
    \section{Odd Cycle Transversal[local]}\label{section:oct}
    While in the previous sections the representative sets framework on gammoids was used in order to obtain boundaried kernelizations for problems, which are defined through cuts, we will now also apply it for the problem \oddcycletransversal, where we only want to hit every odd cycle, but otherwise connectedness is still allowed for the remaining graph. Reed et al.\ \cite{ReedEtAl2004-FindingOddCycle} have used a connection between odd cycle transversals and cuts, and this connection is what allows us to use representative sets on gammoids. The main part is the following: Assume that we are given some odd cycle transversal $X$ (not necessarily optimal) for graph $G$. Let $G^*$ be an auxiliary graph with dedicated vertices $X^*$ corresponding to $X$-vertices. Then an optimal solution for $G$ can be deduced by computing for each partition $S \cup T \cup R$ of $X^*$ a minimum $(S, T)$-cut in $G^*-R$. 
    
    Kratsch and Wahlstr\"om \cite{KratschWahlstrom2014-CompressionMatroids} used this connection in order to get a polynomial compression in the form of a matroid with $\Oh(|X|)$ elements and randomized encoding size of $\Oh(|X|^3)$, which includes information on the min-cuts for each partition $S \cup T \cup R$ of $X^*$. Later the same authors showed a randomized polynomial kernelization for the problem \parameterizedproblem{$\Gamma$-Feedback Vertex Set}{$k$} \cite{KratschWahlstrom2020-RepresentativeSetsIrrelevantVertices}, which includes \parameterizedproblem{Odd Cycle Transversal}{$k$} as a special case. A written out kernel specifically for \parameterizedproblem{Odd Cycle Transversal}{$k$} can be found in \cite{Fomin_Lokshtanov_Saurabh_Zehavi_2019}.
    
    \begin{theorem}\label{thm:pbk_oct}
        The parameterized problem \oddcycletransversallocal admits a randomized polynomial boundaried kernel with $\Oh((|B| + \ell)^6)$ vertices.
    \end{theorem}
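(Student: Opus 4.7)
The plan is to adapt the approach used for \dtmultiwaycutlocal in Section~\ref{section:dtmwc} to the \oddcycletransversal setting, exploiting the classical reduction of Reed, Smith, and Vetta that turns OCT-finding into a vertex-cut problem in an auxiliary mixed graph $G^*$. Let $(G_B, X, \ell)$ denote the input with $X$ being the given local OCT of size at most $\ell$, and set $F := V(G) \setminus B$.

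First, I would construct $G^*$ by splitting each vertex $x \in X$ into a left-copy $x_L$ and a right-copy $x_R$, and redirecting edges according to a reference bipartition of $G - X$. Crucially, each boundary vertex $b \in B$ is also split into $b_L$ and $b_R$, while the remaining vertices of $G$ keep their identity. Under this construction, for any graph $H_B$ glued with $G_B$ and any optimal global OCT $Y$ of the resulting graph, the local part $Y \cap V(G)$ corresponds to a minimum $(S, T)$-vertex cut in $G^* - R$ for some partition $(S, T, R)$ of $X^* \cup B^*$; here $R$ encodes those vertices of $X \cup B$ put into the OCT, and $S, T$ those assigned to the two bipartition sides of $(G_B \oplus H_B) - Y$. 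I then add two sink-only copies $v', v''$ for each non-source vertex of $G^*$, define the gammoid $\mathcal{M}$ on this extended graph with sources $X^* \cup B^*$, and apply Lemma~\ref{lem:repset} to the family $\mathcal{Z} = \{\{v, v', v''\} : v \in F \setminus X\}$ with parameter $q = \Oh(|B| + |X|)$. Every vertex whose triple is not in the resulting representative set $\mathcal{Z}'$ will be bypassed.

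The gluing-safeness analysis then mirrors that of Lemma~\ref{lem:dtmwc}: for any fixed glued $H_B$, I pick an optimal global OCT $Y$ that, among minimum ones, maximally intersects $X \cup B$ in an appropriate sense; the path-packing argument due to Kratsch~\cite{Kratsch2014-RecentDevelopmentsKernelization}, lifted through the Reed-Smith-Vetta correspondence to $G^*$, should yield for each $x \in (Y \cap F) \setminus X$ a packing of $|(Y \cap F) \setminus X| + 2$ pairwise vertex-disjoint paths from $(X \cup B)^* \setminus Y$ to $\{x, x', x''\}$ in $G^* - (Y \cap (X \cup B))$, forcing $\{x, x', x''\}$ to be retained in $\mathcal{Z}'$. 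To reach the claimed size $\Oh((|B|+\ell)^6)$, I would precede the matroid step by reduction rules bounding the relevant extended OCT-modulator by $\Oh((|B|+\ell)^2)$, analogously to Reduction Rules~\ref{rr:dtmwc_iso}-\ref{rr:dtmwc_high_deg_B}; the representative-set lemma then produces at most $\Oh((|B|+|X|)^3) = \Oh((|B|+\ell)^6)$ triples and hence this many surviving vertices.

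The hardest part I expect is the correct handling of the bipartition inside the combined triple-and-path-packing argument: the two copies $x_L, x_R$ behave very differently in $G^*$, and the extension argument must simultaneously distinguish whether a vertex in $X \cup B$ is assigned to the left side, the right side, or the solution. A secondary difficulty is designing the boundaried analog of the LP- or crown-based preprocessing that usually bounds the OCT modulator, since the standard proof invokes global half-integral optima; most likely a local crown-like argument combined with high-degree reductions in the spirit of Section~\ref{section:dtmwc} will suffice.
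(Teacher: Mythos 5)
Your high-level plan — split boundary vertices into left/right copies, reduce OCT to cuts in an auxiliary graph $G^*$, find a small vertex set covering the relevant minimum cuts, and throw away the rest — is the right shape, and the paper does exactly this. But there are two substantive gaps in your proposal, and you also pick a different (and in places confused) tool where the paper uses a more direct one.

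\textbf{The bypass operation breaks bipartiteness.} In \dtmultiwaycut you can freely bypass a vertex $v$ in $G$ because the problem only cares about connectivity. For \oddcycletransversal that is not safe: if $v \in F$ lies on, say, side $L$ of the bipartition of $G-B$, then bypassing $v$ in $G$ inserts edges inside $R$, and $G'-B$ is no longer bipartite. Since the output of the boundaried kernelization must again be a boundaried graph for the same problem, and the whole analysis of $G'$ rests on $G'-B'$ being bipartite, this cannot be repaired after the fact. The paper's Reduction Rule~\ref{rr:oct} does \emph{not} bypass in $G$ at all. Instead it deletes $F \setminus Z$ and re-inserts parity-preserving surrogates: a single edge for each odd-length $F\setminus Z$-internal path between two surviving vertices, a length-two path for each even-length one, and a triangle for each odd path that returns to the same boundary vertex. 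This substitution is what makes the reduced graph a legitimate OCT instance, and the correctness proof (Claim~\ref{claim:oct}) establishes the analog of Lemma~\ref{lem:bypass} for these surrogates. Your write-up makes no provision for parity, and you yourself flag this as the part you expect to be hardest; the paper resolves it precisely at this point.

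\textbf{You bound the wrong quantity and pay for it in the exponent.} You propose spending effort to shrink some ``extended OCT-modulator'' to $\Oh((|B|+\ell)^2)$ by analogy with Reduction Rules~\ref{rr:dtmwc_iso}--\ref{rr:dtmwc_high_deg_B}. That preprocessing is needed for \dtmultiwaycut because the terminal set $T_G$ can be arbitrarily large; for \oddcycletransversallocal the given local solution $X$ already satisfies $|X| \le \ell$, so the natural source set $B \cup X$ (and its left/right copies) has size $\Oh(|B|+\ell)$ with no extra work. The paper simply redefines $B := B \cup S$ (sound by Lemma~\ref{lem:bk_superset} and heredity of bipartiteness) and moves on. Your plan hits the advertised $\Oh((|B|+\ell)^6)$ only by feeding a spurious quadratic modulator into $\binom{q+3}{3}$; the paper's $\Oh((|B|+\ell)^6)$ arises instead from a genuine source, namely that the substitution step introduces up to one fresh vertex per pair of surviving vertices, i.e.\ quadratically many new vertices on top of the $\Oh((|B|+\ell)^3)$ cut cover.

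\textbf{Different lemma.} You reach for the gammoid-plus-sink-only-copies machinery (Lemma~\ref{lem:repset}), as in Section~\ref{section:dtmwc}. The paper instead invokes the undirected cut-cover result (Lemma~\ref{lem:cutset}) on $G^*$ with $X = B_L \cup B_R$. That lemma already quantifies over all partitions $(X_0, X_X, X_1, \dots, X_s)$ of the source set, which is exactly the structure Lemma~\ref{lem:oct} requires: for every partition of $B$ into solution vertices, left side, and right side, the cut cover contains a minimum cut between the induced $B_\sim$ and $B_\Delta$. This directly sidesteps the ``simultaneously distinguish left/right/solution'' difficulty you identify for the path-packing argument. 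Your route is not obviously unworkable, but you would have to re-derive the path-packing bound through the Reed–Smith–Vetta correspondence and verify that $\Oh(|B|+\ell)$ is the right representativity rank; the paper's choice avoids that entirely.
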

    
    Let us be given boundaried graph $G_B$ and odd cycle transversal $S$ for $G$. By Lemma~\ref{lem:bk_superset} and since the class of bipartite graphs is hereditary, we can simply assume that $S$ is a subset of $B$ and redefine $B := B \cup S$ otherwise. Let $F := V(G) \setminus B$ and note that by previous assumption it holds that $G[F] = G-B$ is a bipartite graph.
    Fix some arbitrary bi-coloring $(L, R)$ of $G[F]$ and define an auxiliary graph $G^*$ with vertices $B_L \cup B_R \cup F$, where $B_\side := \{x_\side \mid x \in B\}$ for $\side \in \{L, R\}$. The graph $G^*$ contains edges such that $G^*[F] = G[F]$ and for all $x \in B$ vertex $x_L$ is adjacent to $N_G(x) \cap R$ and $x_R$ is adjacent to $N_G(x) \cap L$. In addition, for any pair $y, x \in B$ that are adjacent in $G$, add the edges $\{y_L, x_R\}$ and $\{y_R, x_L\}$ to $G^*$.
    
    Assuming to be given some graph $H_B$, we show that any optimum solution for $G_B \oplus H_B$ corresponds to the combination of an odd cycle transversal $X$ of $H$ and a minimum cut between some vertex sets $B_\sim, B_\Delta \subseteq B$ in $G^* - D$, where $B_\sim \cup B_\Delta \cup D$ is a partition of $B_L \cup B_R$ that is compatible with the local odd cycle transversal $X$ of $H$. Afterwards, we use the gammoid on $G^*$ with sources $B_L \cup B_R$ and its cut cover $Z$ by Lemma~\ref{lem:cutset} in order to bypass all vertices in $F \setminus Z$ and subdivide created paths of length one, preserving bipartiteness of $G-B$.
    
    \begin{lemma}\label{lem:oct}
        Let $G_B$ and $G^*$ be as above and let $H_B$ be one more boundaried graph. The size of a minimum odd cycle transversal for $G_B \oplus H_B$ is equal to the minimum over the sum of the sizes of an odd cycle transversal $X$ for $H$ and of a minimum cut between $B_\sim := (B_L \cap K^*_L) \cup (B_R \cap K^*_R)$ and $B_\Delta := (B_L \cap K^*_R) \cup (B_R \cap K^*_L)$ in $G^* - \{x_L, x_R \mid x \in B \cap X\}$, where $(K_L, K_R)$ is an arbitrary bi-coloring of $H-X$ and $K^*_\side := \{x_L, x_R \mid x \in B \cap K_\side\}$ for $\side \in \{L, R\}$.
    \end{lemma}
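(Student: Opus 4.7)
My plan is to establish the equality by proving both inequalities between the minimum OCT on the left and the minimum on the right-hand side.

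For the direction ``$\geq$'', take any minimum OCT $Y$ of $G_B \oplus H_B$ and a bi-coloring $(L', R')$ of the bipartite graph $(G_B \oplus H_B) - Y$. Set $X := Y \cap V(H)$ (an OCT of $H$), let $(K_L, K_R)$ be the restriction of $(L', R')$ to $H - X$, and set $Z := Y \cap F$; then $|Y| = |X| + |Z|$. It suffices to show $Z$ is a valid cut between $B_\sim$ and $B_\Delta$ in $G^* - \{x_L, x_R \mid x \in B \cap X\}$. Suppose to the contrary there is a path $P$ in that graph from some $x_\alpha \in B_\sim$ to some $y_\beta \in B_\Delta$. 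Projecting $P$ by forgetting the $L/R$-copy labels yields a walk $W$ of the same length in $G - (Y \cap V(G)) \subseteq (G_B \oplus H_B) - Y$ with endpoints $x, y \in B \setminus X$. Because $(G_B \oplus H_B) - Y$ is bipartite and our $K$-colors coincide with $(L', R')$-colors, $W$ is even iff $x$ and $y$ share their $K$-color. On the other hand, $G^*$ is bipartite with parts $L \cup B_L$ and $R \cup B_R$, so $P$ is even iff $x_\alpha, y_\beta$ lie on the same side, and inspecting the definitions of $B_\sim$ and $B_\Delta$ this happens precisely when $x, y$ have \emph{different} $K$-colors. Since $|P| = |W|$ this is a contradiction.

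For the direction ``$\leq$'', fix any $X$, $(K_L, K_R)$, and minimum cut $Z$, and define
\[ Y := X \cup \{x \in B \setminus X : \{x_L, x_R\} \cap Z \ne \emptyset\} \cup (Z \cap F). \]
The three summands are pairwise disjoint and the middle one has size at most $|Z \cap B_L| + |Z \cap B_R|$, so $|Y| \le |X| + |Z|$. To verify that $(G_B \oplus H_B) - Y$ is bipartite, I produce an explicit two-coloring $c$. Define $\sigma(u) \in \{0, 1, 2\}$ for every vertex $u$ of $G^* - \{x_L, x_R \mid x \in B \cap X\} - Z$ according to whether $u$ is reachable from $B_\sim \setminus Z$, from $B_\Delta \setminus Z$, or from neither, respectively; $\sigma$ is well-defined by the cut property and constant on connected components. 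For an $F$-vertex $v \notin Z$, set $c(v)$ to the natural $(L, R)$-side of $v$ in $G[F]$ if $\sigma(v) \in \{0, 2\}$, and to the flipped side if $\sigma(v) = 1$; for a remaining $B$-vertex $x$, set $c(x) := L$ iff $x \in K_L$. Edges within $H - (Y \cap V(H))$ are properly two-colored by $(K_L, K_R)$, and edges within $F \setminus Z$ by bipartiteness of $G[F]$ combined with $\sigma$ being constant on components. The mixed edges $\{x, v\}$ with $x \in B \setminus Y$, $v \in F \setminus Z$, and $\{x, y\}$ with $x, y \in B \setminus Y$, are handled by inspecting the corresponding $G^*$-edges ($\{x_L, v\}$ or $\{x_R, v\}$; $\{x_L, y_R\}$ and $\{x_R, y_L\}$) and observing that, since none of these edges is cut by $Z$, the $\sigma$-values propagate along them and combine with the definitions of $B_\sim$ and $B_\Delta$ to force opposite $c$-colors at the two endpoints.

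The hard part will be this case analysis in the ``$\leq$'' direction, which requires carefully distinguishing between the inherent $L/R$-label of a vertex in $G^*$'s bipartition and the $L/R$-side assigned by $c$ in the new two-coloring of $(G_B \oplus H_B) - Y$. The parity argument in the ``$\geq$'' direction is short, but relies on the non-obvious fact that $B_\sim$ and $B_\Delta$ each mix $L$-labeled and $R$-labeled $G^*$-vertices, which is exactly what makes their separation encode the consistency between the local and global OCT bi-colorings.
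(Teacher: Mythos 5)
Your proof is correct, and the two directions differ in how closely they track the paper. In the ``$\geq$'' direction you use essentially the same idea as the paper (parity of a path in $G^*$ versus parity of the corresponding traversal in $(G_B \oplus H_B) - Y$), but your presentation is cleaner: by noting that $G^*$ is itself bipartite with parts $L \cup B_L$ and $R \cup B_R$, and by projecting the $G^*$-path to a \emph{walk} (not necessarily a path) in $G - (Y \cap V(G))$, you replace the paper's case analysis over which copies $u, v$ lie in with a single uniform parity comparison; this is also slightly more careful, since the paper's phrase ``the same path also exists \ldots after replacing $u, v$ by $x, y$'' glosses over the fact that the projection may repeat vertices, which your walk-based argument handles directly.

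The ``$\leq$'' direction is where you genuinely depart from the paper. The paper shows $S := X \cup C^*$ is an OCT by contradiction: it assumes an odd cycle $Q$ in $(G_B \oplus H_B) - S$, splits on how many $B$-vertices $Q$ contains, and maps the relevant arc of $Q$ to a $B_\sim$--$B_\Delta$ path in $G^*$ that survives $C$. You instead \emph{construct} a proper two-coloring $c$ of $(G_B \oplus H_B) - Y$ outright, flipping the inherited $(L,R)$-side of an $F$-vertex exactly when its component in $G^* - \{x_L, x_R \mid x \in B \cap X\} - Z$ is reachable from $B_\Delta \setminus Z$, and coloring surviving $B$-vertices and $H$-vertices by $(K_L, K_R)$. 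Both approaches are valid; your constructive route has the advantage of localizing the argument to single edges rather than arbitrary odd cycles, which sidesteps a point the paper leaves implicit (namely, when $Q$ visits at least two $B$-vertices, the ``odd'' and ``even'' arcs of $Q$ can travel through $H$ and hence do not directly correspond to paths in $G^*$). The cost is a careful case check on mixed edges: for a $B$--$B$ edge $\{x, y\}$ present in $E(G)$ but not in $E(H)$, your argument implicitly uses that if $x, y$ were in the same $K$-class, the $G^*$-edge $\{x_L, y_R\}$ (with $x_L \in B_\sim$, $y_R \in B_\Delta$) would already contradict $Z$ being a cut, so that case cannot arise; this is the kind of detail worth spelling out when you write the full version, but it is consistent with your stated plan.
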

    \begin{proof}
        ``$\geq$'':
        Let $S$ be an odd cycle transversal for $G_B \oplus H_B$. Set $X = S \cap V(H), C = S \cap F$ and choose arbitrary bipartition $(K_L, K_R)$ of $H-X$. Assume for contradiction that $C$ is not a cut between $B_\sim$ and $B_\Delta$, i.e., there exists a path between some vertices $u \in B_\sim$ and $v \in B_\Delta$ in $G^* - (C \cup \{x_L, x_R \mid x \in B \cap X\})$.
        First consider, without loss of generality, the case that $u \in \{B_L \cap K^*_L\}$ and $v \in \{B_L \cap K^*_R\}$, and hence also that $u = x_L, v = y_L$ for some $x \in B \cap K_L, y \in B \cap K_R$. Since both $u$ and $v$ are of the same color $B_L$, the path between them in $G^* - (C \cup \{x_L, x_R \mid x \in B \cap X\})$ has even length. By choice of $C$ the same path also exists in $(G_B \oplus H_B - S)$ (after replacing $u, v$ by $x, y$), but by assumption we have given different colors to $x$ and $y$, which leads to a contradiction.
        Next consider the case that $u \in \{B_L \cap K^*_L\}$ and $v \in \{B_R \cap K^*_L\}$, i.e., $u = x_L, v = y_R$ for some $x, y \in B \cap K_L$. Again, the same path exists in $(G_B \oplus H_B - S)$, but this time $x, y$ have the same color while the path has odd length, again a contradiction.
        
        ``$\leq$'':
        Let $X$ be an odd cycle transversal for $H$, let $(K_L, K_R)$ be a bi-coloring of $V(H) \setminus X$ and let $C$ be a cut between $B_\sim$ and $B_\Delta$ in $G^* - \{x_L, x_R \mid x \in B \cap X\}$. Let $C^* := (F \cap C) \cup \{x \mid \{x_L, x_R\} \cap C \neq \emptyset\}$. We show that $S := X \cup C^*$ is an odd cycle transversal for $G_B \oplus H_B$. Assume for contradiction the existence of an odd cycle $Q$ in $(G_B \oplus H_B) - S$. This cycle needs to contain a $B$-vertex $x$, as $X \subseteq S$ is an odd cycle transversal for $H$ and $G[F]$ is bipartite. Consider the case that $Q$ contains exactly one $B$-vertex, i.e., the cycle is totally contained in $G$ (again, $H-X$ is bipartite). Assume without loss of generality that $x$ is contained in $K_L$. Due to $Q$, there exists a path between $x_L$ and $x_R$ in $G^* - (C \cup \{x_L, x_R \mid x \in B \cap X\})$, and thus also between $B_L \cap K^*_L$ and $B_R \cap K^*_L$, which is a contradiction to the fact the $C$ is a cut between $B_\sim$ and $B_\Delta$.
        Now consider the case that $Q$ contains at least two $B$-vertices, i.e., there exists some $y \in (B \cap Q) \setminus \{x\}$. Since $Q$ is an odd cycle, there exists an odd and an even path between $x$ and $y$. Now if $x,y$ are contained in the same color for $H-X$, w.l.o.g.\ $K_L$, then we get a contradiction through the analogue of the odd-length path of $Q$ between $x_L \in B_L \cap K^*_L \subseteq B_\sim$ and $y_R \in B_R \cap K^*_L \subseteq B_\Delta$.
        On the other hand, if $x$ and $y$ are contained in different colors for $H-X$, say, $x \in K_L$ and $y \in K_R$, then the contradiction arises due to the analogue of the even-length path of $Q$ between $x_L \in B_L \cap K^*_L \subseteq B_\sim$ and $y_L \in B_L \cap K^*_R \subseteq B_\Delta$.
    \end{proof}
    
    \begin{rrule}\label{rr:oct}
        Let $G_B, G^*$ and the corresponding vertex sets be as above. Let $Z$ be a cut cover for $G^*$ and the set $B_L \cup B_R$. Then remove all vertices in $F \setminus Z$ in $G$. For any two vertices $u, v$ in $B \cup (F \cap Z)$, if there was an odd-length path between $u$ and $v$ with all inner vertices in $F \setminus Z$, then add an edge between $u$ and $v$; and if there was such an even-length path between $u$ and $v$, then add a path of length two between $u$ and $v$. It might be the case that we add both an edge and a length-2 path between $u$ and $v$. Also, for any $x \in B$, if there was an odd path consisting only of $x$ and vertices from $F \setminus Z$, then add a triangle with $x$ one of its vertices.
    \end{rrule}
    \begin{lemma}
        Reduction Rule~\ref{rr:oct} is gluing-safe.
    \end{lemma}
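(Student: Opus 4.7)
The plan is to apply Lemma~\ref{lem:oct} to both $G_B \oplus H_B$ and $G'_B \oplus H_B$, thereby reducing the gluing-safeness claim to an equality of min cut sizes in the two auxiliary graphs $G^*$ and $G'^*$. First I would check that the bi-coloring $(L, R)$ of $G[F]$ extends to a bi-coloring $(L', R')$ of $G'[F']$: surviving $F \cap Z$ vertices keep their colors, each new middle vertex of an even-parity length-$2$ substitute takes the unique color forced by its two endpoints, and each pair of new triangle vertices is given opposite colors. Since the substitutes mirror the parities of the paths they replace, no coloring conflict arises and $G'[F']$ is bipartite, so Lemma~\ref{lem:oct} applies to $G'_B$ in parallel with its application to $G_B$. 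It then suffices to show that for every OCT $X$ of $H$ and every bi-coloring $(K_L, K_R)$ of $H-X$, the min cuts between $B_\sim$ and $B_\Delta$ in $G^* - R$ and in $G'^* - R$ have the same size, where $R = \{x_L, x_R : x \in B \cap X\}$.

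For the direction ``$G'^*$ admits no smaller cut'', I start from a min cut $C$ in $G^* - R$ separating $B_\sim$ from $B_\Delta$ and appeal to the cut cover property of $Z$ (Lemma~\ref{lem:cutset} with $s=2$ applied to the partition $(V(G^*)\setminus(B_L \cup B_R), R, B_\sim, B_\Delta)$) to assume $C \subseteq Z$, so in particular $C \cap (F \setminus Z) = \emptyset$ and $C \subseteq V(G'^*)$. Any $B_\sim$-$B_\Delta$ path in $G'^* - R$ pulls back to a walk in $G^* - R$ by expanding each added shortcut (direct edge, length-$2$ path, or triangle detour) into the corresponding $F \setminus Z$-segment of the original $G$-path; $C$ cuts that walk inside $V(G'^*)$, so $C$ also cuts the original path in $G'^* - R$. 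In the reverse direction I would take any cut $C'$ in $G'^* - R$, keep the vertices already in $B_L \cup B_R \cup (F \cap Z)$, and replace each new middle or triangle vertex by an arbitrary inner vertex of the corresponding $F \setminus Z$-segment; any $B_\sim$-$B_\Delta$ path in $G^* - R$ decomposes into subpaths lying either fully inside $V(G'^*)$ (cut by $C'$ directly) or entirely through $F \setminus Z$ between two such endpoints (represented in $G'^*$ by a shortcut cut by $C'$, so the substituted vertex cuts the subpath). Combining the two inequalities via Lemma~\ref{lem:oct} yields $\OPT(G_B \oplus H_B) = \OPT(G'_B \oplus H_B)$ for every $H_B$, which gives gluing safeness with shift $\Delta = 0$.

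The main obstacle I foresee is the parity bookkeeping underlying these correspondences: I would need to verify carefully that every path in $G$ with both endpoints in $B \cup (F \cap Z)$ and all inner vertices in $F \setminus Z$ lifts to a path of matching parity between the appropriate $B_L/B_R/F$-copies in $G^*$, and that the substitute prescribed by Reduction Rule~\ref{rr:oct} (edge for odd length, length-$2$ path for even length, triangle for an odd closed walk at a boundary vertex) induces exactly the same type of $B_L/B_R$ connection in $G'^*$ under $(L', R')$. Without this parity alignment, the pullback and substitution maps between cuts in $G^* - R$ and $G'^* - R$ could shift min cut sizes, breaking the equality and hence the correctness of the rule.
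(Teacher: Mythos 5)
Your overall strategy matches the paper's: you invoke Lemma~\ref{lem:oct} on both $G_B \oplus H_B$ and $G'_B \oplus H_B$ to reduce gluing-safeness to comparing $(B_\sim, B_\Delta)$-cut sizes in the two auxiliary graphs, and for the direction from $G^*$ to the auxiliary graph of $G'$ (which the paper calls $\tG$) you correctly use the cut-cover property of $Z$ to assume the cut lies in $Z$, hence survives in $\tG$. The paper packages the common combinatorial core as Claim~\ref{claim:oct} (for any $C \subseteq B_L \cup B_R \cup (F \cap Z)$ and $s, t \in B_L \cup B_R$, $s$ reaches $t$ in $G^* - C$ iff in $\tG - C$), which your walk-expansion and walk-contraction arguments are implicitly reproving. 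Minor: the cut-cover lemma (Lemma~\ref{lem:cutset}) applies to partitions of the source set $B_L \cup B_R$ only, so the partition you name should be $(X_0, R, B_\sim, B_\Delta)$ of $B_L \cup B_R$ with $X_0$ the remaining $B_L \cup B_R$-vertices, not a partition that includes $V(G^*) \setminus (B_L \cup B_R)$.

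Your reverse direction, however, has a real gap. You propose to turn a cut $C'$ in $\tG - R$ into a cut in $G^* - R$ by keeping the $B_L \cup B_R \cup (F \cap Z)$-vertices and ``replacing each new middle or triangle vertex by an arbitrary inner vertex of the corresponding $F \setminus Z$-segment.'' That correspondence is not well-defined: a single shortcut between $u$ and $v$ in $G'$ is added when there exists \emph{some} odd (or even) length path through $F\setminus Z$ between $u$ and $v$, but there may be many such segments, and a $B_\sim$-$B_\Delta$ path in $G^* - R$ may traverse a segment different from the one you picked. In that case the substituted vertex misses the path and the resulting set is not a cut. The correct move, which the paper uses (phrased as ``we can assume without loss of generality that none of the vertices created for the substituting paths and cycles are contained in $C'$''), is to exploit that each middle vertex of a length-2 substitute, and each triangle vertex, has all its neighbors among $\{u_\cdot, v_\cdot\}$ (respectively $\{x_L, x_R\}$ and the other triangle vertex); removing such a vertex from the cut and adding one of those shortcut endpoints preserves the cut's size and its separating property, and yields a cut lying entirely in $B_L \cup B_R \cup (F \cap Z)$, to which Claim~\ref{claim:oct} then applies directly. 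With that repair your two directions line up with the paper's two $\OPT$ inequalities, and the parity bookkeeping you rightly flag as delicate is exactly what the case analysis inside Claim~\ref{claim:oct}'s proof carries out.
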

    \begin{proof}
        Let $G'$ be the resulting graph. Observe that $G'-B$ is bipartite as we only added independent edges for the triangles, substituted odd-length paths by edges, and even-length paths by length-2 paths. Construct the graph $\tG$ from $G'$ in the same way as $G^*$ was constructed from $G$. The following claim is analogous to Lemma~\ref{lem:bypass} and used for our proof.
        
        \begin{claim}\label{claim:oct}
            For any set $C \subseteq B_L \cup B_R \cup (F \cap Z)$ and any vertices $s, t \in B_L \cup B_R$, there is a path from $s$ to $t$ in $G^* - C$ if and only if there is a path from $s$ to $t$ in $\tG - C$.
        \end{claim}
        \begin{claimproof}
            For the first direction, assume that $P = (v_1, \dots, v_r)$ is a path from $s = v_1$ to $t = v_r$ in $G^* - C$. If $P$ only contains vertices from $Z' := B_L \cup B_R \cup (F \cap Z)$, then obviously we are done. Hence consider any sequence $p, p+1, \dots, q-1, q$ such that (i) $1 < p \leq q < r$; (ii) each among $v_p, v_{p+1}, \dots, v_{q-1}, v_{q}$ is contained in $F \setminus Z$; and (iii) $v_{p-1}, v_{q+1}$ are contained in $Z'$. If w.l.o.g.\ it holds that $v_{p-1} = x_L, v_{q+1} = x_R$ for some $x \in B$, then we added a length-3 cycle $x, x_1, x_2$ to $G'$, and hence there exists the path $v_{p-1}, x_1, x_2, v_{q+1}$ in $\tG$. Else let $u$ be equal to $v_{p-1}$ if $v_{p-1} \in F$ and equal to $x$ if $v_{p-1} \in \{x_L, x_R\}$. Define $w$ in the same way for $v_{q+1}$. Observe that we have added in $G'$, between $u$ and $w$: an edge if $q-p$ is odd, and a path of length two, if $q-p$ is even. Therefore, by construction of $\tG$, there is either an edge or a length-2 path between $v_{p-1}$ and $v_{q+1}$ in $\tG$. 
            Either way we have thus shown that the subpath $v_p, v_{p+1}, \dots, v_{q-1}, v_{q}$ of $P$ in $G^*$ can been substituted by a path between $v_p$ and $v_q$ in $\tG$. As we can apply this argumentation on each such sequence of $F \cap Z$-vertices along the path $P$, and since no vertices used for these substituting paths are contained in $C \cup (F \setminus Z)$, we know that there exists a path between $s$ and $t$ in $\tG - C$.
            
            Now to the other direction. Assume that $P' = (v_1, \dots, v_r)$ is a path from $s = v_1$ to $t = v_r$ in $\tG - C$. If all edges in $P'$ can also be found in $G^*$, we are done. Thus consider any: (i) edge $\{v_i, v_{i+1}\} \not\in E(G^*)$ with $i \in [r-1]$; (ii) path $(v_i, v_{i+1}, v_{i+2})$ with $v_i, v_{i+2} \in V(G^*)$ but $v_{i+1} \not\in V(G^*)$; and (iii) path $(v_i, v_{i+1}, v_{i+2}, v_{i+3})$ with $\{v_i, v_{i+3}\} = \{x_L, x_R\}$ and $v_{i+1}, v_{i+2} \not\in V(G^*)$ for $i \in [r-3]\}$ and some $x \in B$.
            In cases (i) and (ii) there needed to exist in $G^*$ a path of odd or even length between $v_i$ and $v_{i+1}$ (for (i)), respectively $v_{i+2}$ (for (ii)). In case (iii) there was an odd cycle consisting of $x$ and vertices from $F \setminus Z$ in $G'$. Then there also exists an odd-length path between $x_L$ and $x_R$ in $G^*$. Note that none of these paths contain inner vertices from $B_L \cup B_R \cup (F \cap Z)$. Substituting these paths accordingly, we get a walk from $s$ to $t$ in $G^* - C$, which also implies such a path.
        \end{claimproof}
        
        ``$\OPT(G_B \oplus H_B) \leq \OPT(G'_B \oplus H_B)$'':
        Let $S'$ be an odd cycle transversal for $G'_B \oplus H_B$. By Lemma~\ref{lem:oct} we can divide this solution into an odd cycle transversal $X'$ for $H$ and a minimum size cut $C'$ between $B_\sim$ and $B_\Delta$ in $\tG - \{x_L, x_R \mid x \in B \cap X'\}$. We can assume without loss of generality that none of the vertices created for the substituting paths and cycles are contained in $C'$. Hence $C'$ only contains vertices from $B_L \cup B_R \cup (F \cap Z)$ and it follows from Claim~\ref{claim:oct} that $C'$ is also a cut between $B_\sim$ and $B_\Delta$ in $G^* - \{x_L, x_R \mid x \in B \cap X'\}$. Hence by Lemma~\ref{lem:oct} there exists an odd cycle transversal of size at most $|S'|$ for $G_B \oplus H_B$.
        
        ``$\OPT(G_B \oplus H_B) \geq \OPT(G'_B \oplus H_B)$'':
        Now let $S$ be an odd cycle transversal for $G_B \oplus H_B$ and divide $S$ by Lemma~\ref{lem:oct} into an odd cycle transversal $X$ for $H$ and a minimum size cut $C$ between $B_\sim$ and $B_\Delta$ in $G^* - \{x_L, x_R \mid x \in B \cap X\}$. Since $Z \subseteq V(G')$ contains a minimum cut for any bipartition of $(B_L \cup B_R) \setminus \{x_L, x_R \mid x \in B \cap X\}$ (and $(B_\sim, B_\Delta)$ is such a bipartition), we can assume without loss of generality that $C$ only contains $Z$-vertices. Then by Claim~\ref{claim:oct} it holds that $C$ is also a cut between $B_\sim$ and $B_\Delta$ in $\tG - \{x_L, x_R \mid x \in B \cap X\}$, and we can again apply Lemma~\ref{lem:oct} in order to get an odd cycle transversal of size at most $|S|$ for $G'_B \oplus H_B$.
    \end{proof}
    
    \begin{proof}[Proof of Theorem~\ref{thm:pbk_oct}]
        We get as input a boundaried graph $G_B$ and a local odd cycle transversal $S$ of size at most $\ell$. Let $(L, R)$ be any bi-coloring of $G - (B \cup S)$. We construct a graph $G'$ such that $G'_{B \cup S}$ is gluing-equivalent to $G_{B \cup S}$, and hence by Lemma~\ref{lem:bk_superset} such that $G'_B \equiv G_B$. For that, we construct the auxiliary graph $G^*$ as described above, but for $B' := B \cup S$ as the boundary instead of only $B$ and apply Lemma~\ref{lem:cutset} in order to compute in randomized polynomial time a set $Z \subseteq V(G^*)$ of size $\Oh((|B| + \ell)^3)$ which contains a minimum cut between any two subsets of $B'_L \cup B'_R$. We then apply Reduction Rule~\ref{rr:oct}. The resulting graph $G'$ has at most $\Oh((|B| + \ell)^6)$ vertices in $G'$: each of the vertices in $B \cup (F \cap Z)$, at most two vertices for a substituting triangle for each $x \in B$, and at most one additional vertex for a substituting length-2 path for any distinct pair $u, v \in B \cup (F \cap Z)$.
    \end{proof}
    
    \section{Vertex Cover[oct]}\label{section:vc}
    In this chapter we work with the parameterized problem \vertexcoveroct and we will combine two concepts for this:
    First, the idea of Bodlaender and Jansen \cite{JansenBodlaender2013-VertexCoverKernelizationRevisited} of reducing to a so-called \emph{clean instance}, i.e., such that $G-B$ has a perfect matching and hence by König's Theorem a minimum vertex cover for $G-B$ contains exactly half of the $F$-vertices (with $F := V(G) \setminus B)$. This allows to perform further reduction rules based on the increase of local solution size for $G-B$, when a certain subset $B' \subseteq B$ is explicitly forbidden, which is called the \emph{conflict} caused by $B'$ on $F$. We then construct an auxiliary graph $G^*$, for which we show that conflicts in $G$ correspond to certain cuts in $G^*$. This is where the representative sets framework of Kratsch and Wahlstr\"om comes in and allows us to find a cut cover in $G^*$, using which we choose vertices of $G$ to bypass. 
    
    \begin{theorem}\label{thm:pbk_vc_oct}
        The parameterized problem \vertexcoveroct admits a polynomial boundaried kernelization with $\Oh((|B| + \ell)^3)$ vertices.
    \end{theorem}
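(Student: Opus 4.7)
The plan is to first apply Lemma~\ref{lem:bk_superset}: since bipartiteness is hereditary, I can enlarge the boundary by the given odd cycle transversal $S$, growing $|B|$ by at most $\ell$, so that $G-B$ becomes bipartite. I then reduce to a \emph{clean instance} in the sense of Jansen and Bodlaender, in which $G-B$ has a perfect matching $M$. This is achieved by a crown-decomposition-style rule on the bipartite graph $G-B$ that identifies $F$-vertices whose contribution to every minimum vertex cover extension is forced independently of $X \cap B$; such vertices can be removed while updating the offset $\Delta$. Once $G-B$ has a perfect matching, K\"onig's theorem implies that every minimum vertex cover of $G-B$ uses exactly $|F|/2$ vertices, so for each $B' \subseteq B$ the \emph{conflict} $\conf{G}{B'}$, defined as the extra vertex cover cost in $G-B$ when $N_F(B')$ is forced into the cover, is well-defined and bounded by $\Oh(|B|+\ell)$.

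Next, I construct an auxiliary directed graph $G^*$ from $M$ by splitting each $B$-vertex $b$ into a source copy $b_{\text{in}}$ and a sink copy $b_{\text{out}}$, orienting matching edges one way and all other edges of $G[F]$ the opposite way, and adding arcs between these copies and the appropriate $F$-neighbours so that a minimum vertex cover of $G-B$ forbidding $B'$ (and thus forced to include $N_F(B')$) corresponds to a minimum vertex cut in $G^*$ between $\{b_{\text{in}} : b \in B'\}$ and $\{b_{\text{out}} : b \in B \setminus B'\}$. I would prove this conflict-to-cut correspondence as a lemma. Applying Lemma~\ref{lem:cutset_small} with source set $\{b_{\text{in}} : b \in B\}$ and sink set $\{b_{\text{out}} : b \in B\}$ then yields, in randomized polynomial time with failure probability $\Oh(2^{-|V|})$, a cutset $Z \subseteq V(G^*)$ of size $\Oh(|B|^2 \cdot r) = \Oh((|B|+\ell)^3)$, since the maximum relevant cut value $r$ is bounded by $\Oh(|B|+\ell)$.

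Finally, I bypass every $F$-vertex outside $Z$ in the original undirected graph $G$ and, in the spirit of Reduction Rule~\ref{rr:oct}, subdivide any newly-created length-one paths so that $G' - B$ remains bipartite and path parities between $B$-vertices are preserved. Gluing-safeness follows because the optimum vertex cover of $G_B \oplus H_B$ decomposes, up to an additive constant, as a choice of $X_B \subseteq B$ together with the extra costs $\conf{G}{B \setminus X_B} + \conf{H}{B \setminus X_B}$, and the reduced graph $G'$ preserves the value $\conf{G}{B'}$ for every $B' \subseteq B$ by the cutset property of $Z$. The main obstacle will be proving the conflict-to-cut correspondence uniformly over all $B' \subseteq B$ and ensuring that the clean-instance reduction and the bypass step preserve all conflict values simultaneously under arbitrary gluing with an external $H_B$; in particular, the subdivision of bypass-induced edges is essential to avoid creating new odd walks that would spuriously alter the conflicts.
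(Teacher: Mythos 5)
Your overall plan matches the paper's skeleton: enlarge the boundary by the given odd cycle transversal, reduce to a clean instance with a perfect matching $M$ of $G-B$, build a directed auxiliary graph $G^*$ with split boundary copies, prove a conflict--cut correspondence, and compute a cut cover $Z$ via Lemma~\ref{lem:cutset_small}. However, the final reduction step is where the argument breaks, and in a way that is not cosmetic.

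You propose to bypass every $F$-vertex outside $Z$ individually, then subdivide the newly created length-one paths ``in the spirit of Reduction Rule~\ref{rr:oct}'' so that $G'-B$ stays bipartite and path parities are preserved. That is exactly what works for \oddcycletransversal, where only parities of paths and cycles matter, but it is not sound for \vertexcover. Subdividing an edge $\{a,b\}$ into a length-two path $\{a,w\},\{w,b\}$ changes the vertex cover constraint from ``at least one of $a,b$'' to ``at least one of $a,w$ and at least one of $w,b$,'' which generally alters the optimum; there is no clean offset you can charge for all the subdivision vertices at once. Similarly, bypassing a single $F$-vertex $u$ has no natural vertex-cover-offset semantics: bypassing forbids $u$ from entering the cover, but since $u$ is $M$-matched to some $v$, at least one of $u,v$ must be in any cover, so you cannot simply discard $u$ without tracking this cost. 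The paper therefore bypasses \emph{$M$-matched pairs} $\{u,v\} \subseteq F \setminus Z$ at once and increases $\Delta$ by one per pair, which is exactly the one unit that the matching edge contributes to any vertex cover. This pairwise bypassing also keeps $G'-B$ bipartite and keeps $M' := M \setminus \{\{u,v\}\}$ a perfect matching of $G'-B$, so no subdivision is needed. Moreover, it is essential that $u$ and $v$ share no common $B$-neighbor (otherwise bypassing both creates problematic structure), and the paper needs an additional rule (Reduction Rule~\ref{rr:vc_triangle}) that caps, for each $x \in B$, the number of $M$-edges forming a triangle with $x$ by $|B|$; without this, the number of matched pairs you are unable to bypass is unbounded, and the final vertex count $\Oh(2|\hat B|^2 + 2|Z|)$ does not go through. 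Your proposal omits both of these ingredients, and the gap is substantive: neither the offset accounting nor the size bound can be salvaged with the bypass-then-subdivide step as written.

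One smaller point: you assert that the conflict $\conf{G}{B'}$ is bounded by $\Oh(|B|+\ell)$, but conflicts can be as large as $|F|/2$. What is actually bounded by $|B|$ is the minimum $(B_L,B_R)$-cut value $r$ in $G^*$ (one can always take all of $B_L$ as the cut, since cuts here are allowed to intersect the terminal sets), and it is this $r$ that enters the $\Oh(|S|\cdot|T|\cdot r)$ bound of Lemma~\ref{lem:cutset_small}.
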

    
    In analogy to the previous section, we assume to be given boundaried graph $G_B$ such that $G-B$ is a bipartite graph. Set $F := V(G) \setminus B$. Apply a few reduction rules in order to get a clean instance.
    
    \begin{rrule}\label{rr:vc_iso}
        If $v \in F$ is an isolated vertex, remove it.
    \end{rrule}
    \begin{rrule}\label{rr:vc_crown}
        If $G$ contains a crown $(I, H)$ with $I \subseteq F$, an $H$-saturating matching $M$ and $M \subsetneq E(I, H)$, then remove all edges in $E(I, H) \setminus M$.
    \end{rrule}
    
    \begin{lemma}[\cite{AntipovKratsch2025-BoundariedKernelization}]\label{lem:vc_pre_poly}
        Reduction Rules~\ref{rr:vc_iso} and \ref{rr:vc_crown} are gluing-safe and can be applied exhaustively in polynomial time.
    \end{lemma}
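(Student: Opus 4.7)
The plan is to verify gluing-safety for each rule separately and then bound the total number of applications.

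For Rule~\ref{rr:vc_iso}, I would simply note that a vertex $v \in F$ that is isolated in $G$ remains isolated in every glued instance $G_B \oplus H_B$, because $v \notin B$ can acquire no new incidences through gluing. Hence $v$ is never needed in any vertex cover, and the rule is gluing-safe with shift $\Delta = 0$.

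For Rule~\ref{rr:vc_crown}, let $G' := G - (E(I,H) \setminus M)$ and fix any $H_B$. The inequality $\OPT(G'_B \oplus H_B) \le \OPT(G_B \oplus H_B)$ is immediate since removing edges cannot increase the vertex cover value. For the reverse direction I would apply the classical crown-swap, adapted to the boundaried setting: given any vertex cover $C'$ of $G'_B \oplus H_B$, define $C'' := (C' \setminus I) \cup H$. Coverage is preserved because every edge of $G_B \oplus H_B$ incident to $I$ has its other endpoint in $H = N_G(I) \subseteq C''$ (crucially, $I$ gains no new neighbors from $H_B$ since $I \subseteq F$ is disjoint from $B$), while all remaining edges are still covered by $C' \setminus I$. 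The matching $M \subseteq E(G'_B \oplus H_B)$ forces, for each $h \in H \setminus C'$, its unique $M$-partner in $I$ to lie in $C'$, yielding $|C' \cap I| \ge |H \setminus C'|$ and hence $|C''| \le |C'|$. This gives $\OPT(G_B \oplus H_B) \le \OPT(G'_B \oplus H_B)$ and thus equality, so again $\Delta = 0$.

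The subtlety I expect to handle most carefully is that $H$ may contain boundary vertices, so after gluing the $H$-vertices can be incident to new edges from $H_B$; this is harmless because $H \subseteq C''$ covers all such edges, but it is essential that $I$ itself avoids $B$ (which the rule enforces via $I \subseteq F$) so that the symmetric concern does not arise for $I$-vertices.

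Finally, polynomial-time exhaustive application follows because Rule~\ref{rr:vc_iso} reduces to a linear scan, while a crown as in Rule~\ref{rr:vc_crown} can be detected via a standard crown-decomposition procedure based on maximum bipartite matching between $F$ and $N_G(F)$; each successful application strictly decreases $|E(G)|$, bounding the total number of applications by $|E(G)|$.
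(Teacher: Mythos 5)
Your argument for both rules is correct and is the natural crown-swap proof lifted to the boundaried setting; since the paper only cites this lemma from \cite{AntipovKratsch2025-BoundariedKernelization} rather than re-proving it, I can only assess correctness, and the essential ingredients are all there. The isolated-vertex rule is safe precisely because $v \in F$ is disjoint from $B$, so it remains isolated after gluing any $H_B$, and the crown rule is handled by the standard exchange $C'' = (C' \setminus I) \cup H$, with the size bound coming from the surviving matching edges $\{h, i_h\} \in M$ forcing $|C' \cap I| \ge |H \setminus C'|$, and coverage coming from the fact that $I \subseteq F$ keeps $I$ independent and with unchanged neighborhood $H = N(I)$ in the glued graph, while any new edges that $H$-vertices in $B$ might acquire from $H_B$ are covered by $H \subseteq C''$. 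You have correctly identified the one place where the gluing setting requires care, namely that $I$ must avoid $B$ so that neither its independence nor $N(I) = H$ is perturbed by the glued graph. The polynomial-time claim is adequately justified: exhaustive isolated-vertex removal is a scan, crown detection can be done with standard matching-based crown decomposition machinery, and since the rule requires $M \subsetneq E(I,H)$, each crown application strictly decreases $|E(G)|$, bounding the number of rounds; the only thing I would tighten is the observation that isolated-vertex removal may be re-triggered after crown applications, but the overall iteration count is still polynomial in $|V(G)| + |E(G)|$.
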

    
    \begin{lemma}[{\cite[Lemma 13]{AntipovKratsch2025-BoundariedKernelization}}]\label{lem:vc_pre_stable}
        Let $G_B$ be a boundaried graph on which Reduction Rules~\ref{rr:vc_iso} and \ref{rr:vc_crown} cannot be applied. Then it holds that $\alpha(G-B) > \frac{1}{2}|V(G)|$.
    \end{lemma}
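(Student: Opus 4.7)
The plan is to argue by contrapositive: assume $\alpha(G-B) \le \tfrac{1}{2}|V(G)|$ and show that at least one of Rules~\ref{rr:vc_iso}, \ref{rr:vc_crown} must still apply. Since $G-B$ is bipartite, König's theorem gives $\alpha(G-B) + \nu(G-B) = |F|$, so the hypothesis translates to a lower bound $\nu(G-B) \ge \tfrac{1}{2}|F| - \tfrac{1}{2}|B|$ on the size of any maximum matching $M^*$ of $G-B$, and — crucially — forces the unsaturated set on some side of the bipartition to be large in terms of $|B|$.

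The core of the proof uses the König--Egerváry (Dulmage--Mendelsohn) decomposition of $G-B$ relative to $M^*$. I would fix a bipartition $(L, R)$ of $G-B$, let $U \subseteq L$ (WLOG) be the $M^*$-unsaturated vertices on one side, and let $I \subseteq L$ be the set of vertices reachable from $U$ by $M^*$-alternating paths. Standard König-style reasoning yields that $I$ is independent in $G-B$ (hence also in $G$), that $H_0 := N_{G-B}(I) \subseteq R$ is matched entirely by $M^*$ into $I \setminus U$, and that $N_{G-B}(I) = H_0$. If some $v \in U$ has no neighbor in $G$ at all, Rule~\ref{rr:vc_iso} immediately applies; otherwise each such $v$ has at least one neighbor in $F \cup B$.

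To upgrade $(I, H_0)$ to a crown in the full graph $G$, I would set $H := N_G(I) = H_0 \cup (N_G(I) \cap B)$ and extend $M^* \cap E(I, H_0)$ by matching the additional $B$-vertices of $H$ into $U$ via a Hall-type argument, iteratively shrinking $I$ along $M^*$-alternating paths whenever Hall's condition fails. The lower bound on $\nu(G-B)$ ensures that $|U|$ remains larger than $|N_G(I) \cap B|$ throughout, so the process terminates with a nonempty $I$ and an $H$-saturating matching $M$. Because each original $U$-vertex is used in the extension but must also have another edge into $H$ (otherwise we could have kept it out of $U$ by replacing $M^*$), we obtain the strict containment $M \subsetneq E(I, H)$ demanded by Rule~\ref{rr:vc_crown}, producing the desired contradiction.

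The main obstacle is the Hall-type extension step: guaranteeing that every added $B$-neighbor of $I$ can be matched into $U$ while preserving independence of $I$ and the saturation of $H$ requires tightly tracking the quantitative consequences of the hypothesis $\alpha(G-B) \le \tfrac12 |V(G)|$, which dictates the gap between $|U|$ and $|B|$. Coordinating the bipartite König structure on $G-B$ with the way $B$-vertices enter $N_G(I)$, and showing that the shrinking procedure never collapses $I$ to empty before a valid crown is produced, is the technical heart of the argument.
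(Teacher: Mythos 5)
The paper gives no proof of this lemma---it is imported from \cite[Lemma 13]{AntipovKratsch2025-BoundariedKernelization}---and the printed inequality is evidently reversed: the conclusion should read $\alpha(G-B) \leq \frac{1}{2}|V(G)|$. That is exactly what the surrounding text uses: to conclude that the set $I$ of $F$-vertices left unmatched by a maximum matching of $G[F]$ satisfies $|I| \leq |B|$, one needs, via K\H{o}nig's theorem on the bipartite graph $G[F]$, that $|I| = |F| - 2\nu(G[F]) \leq |B|$, which is equivalent to $\alpha(G[F]) = |F| - \nu(G[F]) \leq \frac{1}{2}(|F|+|B|) = \frac{1}{2}|V(G)|$. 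The $>$-version is also simply false: take $B = \{b\}$, $F = \{a\}$, and the single edge $\{a,b\}$; neither Reduction Rule~\ref{rr:vc_iso} (the vertex $a$ is not isolated in $G$) nor Reduction Rule~\ref{rr:vc_crown} (the only crown $(\{a\},\{b\})$ has $M = E(I,H)$, not a proper subset) applies, yet $\alpha(G-B) = 1 = \frac{1}{2}|V(G)|$.

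Your proposal sets up the contrapositive of the misstated direction, and the sign error then resurfaces inside the argument as an internal inconsistency. From $\alpha(G-B) \leq \frac{1}{2}|V(G)|$ you correctly derive $\nu(G-B) \geq \frac{1}{2}(|F| - |B|)$, but that bound makes the unsaturated set $U$ \emph{small}, namely $|U| \leq |F| - 2\nu(G-B) \leq |B|$, whereas you claim it becomes ``large in terms of $|B|$'' and later rely on $|U| > |N_G(I) \cap B|$ holding throughout the Hall-type extension. That inequality is not implied, and the case where $G[F]$ has a perfect matching and $B = \emptyset$ even has $U = \emptyset$ with no crown anywhere, directly refuting the implication you set out to prove. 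The correct contrapositive is of the $\leq$-version: assume $\alpha(G-B) > \frac{1}{2}|V(G)|$, which gives $|U| > |B|$, and it is precisely this surplus of unmatched vertices over $|B|$ that the K\H{o}nig/alternating-path construction and the Hall-type absorption of boundary neighbors can exploit to produce a crown with the required strict inclusion $M \subsetneq E(I,H)$. Your decomposition toolkit is appropriate, but the quantitative premise must be flipped.
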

    
    We make the instance clean by finding a maximum matching $M$ of $G[F]$ and moving the set $I$ of vertices not matched by $M$ to the boundary. Due to Lemma~\ref{lem:vc_pre_stable} and K\H{o}nig's Theorem, the set $I$ has size at most $|B|$ (for details, see the proof in \cite[Lemma 13]{AntipovKratsch2025-BoundariedKernelization}). The gluing safeness follows from Lemma~\ref{lem:bk_superset}. After applying the reduction rule, we still denote the new boundary simply as $B$ and $V(G) \setminus B$ as $F$.
    
    \begin{rrule}\label{rr:vc_clean}
        If Reduction Rules~\ref{rr:vc_iso} and \ref{rr:vc_crown} cannot be applied, then replace the boundary by a set $\hat{B} \supseteq B$ of size at most $2|B|$, such that $G-\hat{B}$ has a perfect matching.
    \end{rrule}
    
    Since $G-B$ is bipartite, we compute (in linear time using a simple vertex transversal) a bipartition of $F$ into the sets $L$ and $R$, such that $G[L]$ and $G[R]$ are independent sets. Furthermore, we compute (in polynomial time using the Hopcroft-Karp algorithm \cite{HopcroftKarp1973-$n^52Algorithm}) a maximum matching $M$ of $G[F]$, which is perfect, assuming that we have applied Reduction Rules~\ref{rr:vc_iso}, \ref{rr:vc_crown}, and \ref{rr:vc_clean} beforehand.
    
    Based on $G_B$ and the chosen bipartition, we construct the auxiliary graph $G^*$ (with directed edges), in which each $B$-vertex $v$ is replaced by a left copy $v_L$ that has outgoing edges to $N_G(v) \cap R$  and a right copy $v_R$ that has incoming edges from $N_G(v) \cap L$. In addition, for each adjacent vertex pair $u, v \in B$ in $G$, we create the edges $(u_L, v_R)$ and $(v_L, u_R)$ in $G^*$.
    For any $B^* \subseteq B$ we denote the set of all left and right copies of $B^*$-vertices by $B^*_L$ and $B^*_R$, respectively. The edges between $L$ and $R$ are copied from $G$, but directed from the $L$-vertex to the $R$-vertex. Additionally, edges from $M$ are also directed in the other way.
    We will show that so-called conflicts in $G$ are tightly connected to corresponding cuts in $G^*$. So let us give the definition of conflicts and show some useful behavior of the vertex cover problem on graphs $G_B \oplus H_B$ with separator $B$.
    
    \begin{definition}[Conflict]
        Let $G$ be a graph and $F, B$ two disjoint subsets of $V(G)$. For any $B' \subseteq B$ we denote by $\conf{G,F}{B'} := \OPT_\VC(G[F] - N_{G, F}(B')) + |N_{G, F}(B')| - \OPT_\VC(G[F])$ the conflict caused by $B'$ on $F$.
    \end{definition}
    \begin{definition}[Cut]
        Let $G$ be a graph and $S, T \subseteq V(G)$ two vertex sets of $G$. We denote the size of a minimum cut on $G$ and $(S, T)$ by $\cut_G(S, T)$.
    \end{definition}
    
    \begin{lemma}\label{obs:vc_destruct}
        Let $G_B$ and $H_B$ be boundaried graphs.
        \begin{enumerate}
            \item[(i)] For any subset $B' \subseteq B$ and vertex covers $S_G, S_H$ for $G-B'$ and $H-B'$ respectively, the set $S := B' \cup S_G \cup S_H$ is a vertex cover for $G_B \oplus H_B$. 
            \item[(ii)] For any vertex cover $S$ for $G_B \oplus H_B$ and $B' := B \cap S$ the sets $S_G := S \cap (V(G) \setminus B)$ and $S_H := S \cap (V(H) \setminus B)$ are vertex covers for $G-B'$ and $H-B'$, respectively.
            \item[(iii)] It holds that $\OPT_\VC(G_B \oplus H_B) = \min_{B' \subseteq B} |B'| + \OPT_\VC(G-B') + \OPT_\VC(H-B')$.
        \end{enumerate}
    \end{lemma}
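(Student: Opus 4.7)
My plan is direct case analysis using the decomposition of edges of $G_B \oplus H_B$ into edges of $G$ and edges of $H$, together with the vertex partition $V(G_B \oplus H_B) = (V(G) \setminus B) \sqcup B \sqcup (V(H) \setminus B)$ ensured by the standing convention $V(G) \cap V(H) \subseteq B$.

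For (i), I will pick an arbitrary edge $e$ of $G_B \oplus H_B$; by the gluing construction it lies in $G$ or in $H$. If an endpoint of $e$ is in $B'$, then $B' \subseteq S$ covers $e$. Otherwise both endpoints avoid $B'$, so $e$ is an edge of $G - B'$ (respectively $H - B'$) and is covered by $S_G$ (respectively $S_H$).

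For (ii), I will fix an edge $e = \{u, v\}$ of $G - B'$ and use that $S$ covers $e$ in the glued graph, so without loss of generality $u \in S$. The key observation is that $u$ cannot lie in $B$, because then $u \in B \cap S = B'$ would contradict the fact that $u$ is a vertex of $G - B'$. Hence $u \in V(G) \setminus B$, and thus $u \in S_G$. The symmetric statement for edges of $H - B'$ follows by the same reasoning.

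For (iii), I will combine (i) and (ii) while carefully tracking cardinalities. The $\leq$ direction comes from (i): for any $B' \subseteq B$, plugging in optimum vertex covers $S_G, S_H$ of $G - B'$ and $H - B'$ yields a vertex cover $S = B' \cup S_G \cup S_H$ of $G_B \oplus H_B$; since $S_G \subseteq V(G) \setminus B$ and $S_H \subseteq V(H) \setminus B$ are disjoint from $B$ and from each other, $|S| = |B'| + \OPT_\VC(G-B') + \OPT_\VC(H-B')$, and minimizing over $B'$ gives the upper bound. The $\geq$ direction comes from (ii): taking an optimum vertex cover $S$ of the glued graph and $B' := B \cap S$, the same disjointness gives $S = B' \sqcup S_G \sqcup S_H$, hence $|S| = |B'| + |S_G| + |S_H| \geq |B'| + \OPT_\VC(G-B') + \OPT_\VC(H-B')$, which is at least the minimum over $B' \subseteq B$. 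The two inequalities together yield the claimed equality. The only points needing care are the $B$-vs-$B'$ dichotomy in (ii) (a vertex witnessing coverage of a $G-B'$-edge cannot sit in $B \setminus B' = B \setminus S$) and the disjointness accounting in (iii); both reduce to direct definition-chasing, so I expect no serious obstacles.
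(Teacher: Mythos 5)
Your proof is correct and takes essentially the same approach as the paper: direct definition-chasing on edges for (i) and (ii), combined with the disjointness-based cardinality accounting for (iii). The only cosmetic difference is in (ii), where the paper first notes that $S_G$ covers $G-B$ and then extends to $G-B'$ by observing that $G[B\setminus B']$ has no edges, whereas you argue per edge of $G-B'$ directly that the covering vertex must lie outside $B$; both are equivalent.
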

    \begin{proof}
        For (i) observe that $S$ covers all edges incident with $B'$ as well as all edges contained in $G-B'$ and in $H-B'$, and thus altogether all edges of $G_B \oplus H_B$.
        For (ii) it is clear that $S_G$ and $S_H$ cover all edges of $G-B$, respectively $H-B$. Furthermore, since $B \cap S = B'$ and $S$ is a vertex cover, there are no edges between vertices in $G[B \setminus B']$ or $H[B \setminus B']$, and as a result, $S_G$ and $S_H$ even cover all edges of $G-B'$, respectively $H-B'$.
        And as last for (iii) we consider each of the two directions. ``$\leq$'': Given the set $B' \subseteq B$ minimizing the right hand size and given minimum vertex covers $S_G, S_H$ for $G-B'$ and $H-B'$, we know by (i) that $\OPT_\VC(G_B \oplus H_B) \leq |B'| + |S_G| + |S_H|$. ``$\geq$'': Given a minimum vertex cover $S$ for $G_B \oplus H_B$ and setting $B' := B \cap S$, $S_G := S \cap (V(G) \setminus B')$, and $S_H := S \cap (V(H) \setminus B')$, we use (ii) to see that $\OPT_\VC(G-B') + \OPT_\VC(H-B') \leq |S_G| + |S_H| = |S| - |B'|$.
    \end{proof}
    
    Now to the correspondence of cuts in $G^*$ and conflicts in $G$. We show the connection separately for each of the two directions, and use them subsequently for our next reduction rule.
    
    \begin{lemma}\label{lem:vc_CtoS}
        Let $G$ be a graph and $B \subseteq V(G)$ such that $G-B$ has bipartition $(L, R)$ and a perfect matching $M$. Let $G^*$ be constructed from $G$ as described above. Let $B^* \subseteq B$ be an independent set, $B' := B \setminus B^*$ and $C$ a minimum cut between  $B^*_L$ and $B^*_R$ in $G^*$. Then there exists a vertex cover $S$ of size $|M| + |C|$ for $G-B'$ which contains every vertex $x \in B^*$ for which $C$ has a non-empty intersection with $\{x_L, x_R\}$.
    \end{lemma}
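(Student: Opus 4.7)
The plan is to construct $S$ directly from the reachability structure of $G^*$ after removing the cut. Let $A \subseteq V(G^*)$ denote the set of vertices reachable (via directed paths) from $B^*_L$ in $G^* - C$, let $Z$ denote the set of vertices from which $B^*_R$ is reachable in $G^* - C$, and set $D := V(G^*) \setminus (A \cup Z \cup C)$. Since $C$ separates $B^*_L$ from $B^*_R$, the sets $A$, $Z$, $C$ are pairwise disjoint. The key structural observation is that a matching edge $\{u,v\} \in M$ contributes both $u \to v$ and $v \to u$ to $G^*$, so $A$- and $Z$-reachability propagate symmetrically across matching pairs; consequently, every matching pair disjoint from $C$ has both endpoints in exactly one of $A$, $Z$, or $D$.

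Based on this, I propose
\[
\hat{S} := C \cup (A \cap R) \cup (Z \cap L) \cup (D \cap L),
\qquad
S := (\hat{S} \cap F) \cup \{x \in B^* : \{x_L, x_R\} \cap C \neq \emptyset\}.
\]
The requirement that $S$ contain every $x \in B^*$ with $\{x_L, x_R\} \cap C \neq \emptyset$ then holds by construction. For the size bound I will argue that $|A \cap R| + |Z \cap L| + |D \cap L| \leq |M|$ by a direct counting over matching pairs: each pair wholly contained in one of the classes $A$, $Z$, $D$ contributes exactly one vertex to this sum, each pair with a single endpoint in $C$ contributes at most one, and pairs fully inside $C$ contribute none. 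This yields $|S| \leq |\hat S| \leq |M| + |C|$.

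To verify the vertex-cover property, I pass through the bipartite auxiliary graph $\hat G$ on parts $L \cup B^*_L$ and $R \cup B^*_R$ whose edges are the undirected versions of the relevant $G^*$-edges, and use that a cover of $\hat G$ yields one of $G-B'$ via the standard split-vertex correspondence. Matching edges in $F$ are covered by the class rules. For non-matching edges $u \to v$ (which appear in only one direction in $G^*$), I will case-split on the $(A,Z,D,C)$-membership of $u$ and $v$, exploiting the one-directional propagations $u \in A \Rightarrow v \in A \cup C$ and $v \in Z \Rightarrow u \in Z \cup C$; the remaining cases where $u \in Z$ or $u \in D$ are handled because $\hat S$ contains all of $(Z \cup D) \cap L$. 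The $B^*$-to-$F$ edges are handled in the same style, using that $x_L$ plays the role of an $A$-vertex and $x_R$ that of a $Z$-vertex whenever they are not in $C$.

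The main obstacle is the case analysis for non-matching edges: because matching edges enjoy full symmetric reachability propagation but non-matching edges only one-directional propagation, the choice of which side of each class to place in $\hat S$ must be made asymmetrically --- the $R$-side for $A$ but the $L$-side for $Z$ and $D$ --- and it is precisely this asymmetry that drives the specific form of the definition of $\hat S$.
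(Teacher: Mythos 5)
Your proposal is correct and yields the same bound; the construction is a clean variant of the paper's, but the bookkeeping is genuinely different. The paper first \emph{thickens} the cut to $C^* := C \cup N_M(C)$ and then computes reachability in $G^* \setminus C^*$, defining $X_R$ as the $R$-vertices reachable from $B^*_L$, $X_L$ as the $L$-vertices reaching $B^*_R$, and completing the cover with $Y_L := L \setminus (C^* \cup X_L \cup N_M(X_R))$; the thickening by $N_M(C)$ is what enforces ``one new vertex per $M$-edge.'' You instead leave the cut as $C$, compute the reachability partition $A$/$Z$/$D$ in $G^* - C$ directly, and get the per-edge counting from the observation that bidirected $M$-edges force an $M$-pair disjoint from $C$ to lie entirely inside one class; the asymmetric choice ($R$-side for $A$, $L$-side for $Z$ and $D$) is exactly what is needed because non-matching edges only propagate $L \to R$. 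Your covering case analysis, while only sketched, does go through: along a non-matching arc $u \to v$ one has $u \in A \Rightarrow v \in A \cup C$ (so the $R$-endpoint is in $\hat S$) and $u \in Z \cup D \cup C$ forces $u \in \hat S$ directly; for $B^*$--$F$ edges one uses that $x_L$ is in $A$ and $x_R$ is in $Z$ whenever not in $C$, and that $A$ and $Z$ are disjoint. In effect the two proofs are dual presentations of the same flow-residual construction; yours trades the explicit $N_M(\cdot)$ arithmetic for a matching-pair invariance claim, which many readers will find cleaner, while the paper's version avoids having to argue anything about reachability across $M$-edges because those edges are removed up front. If you write this out, be sure to state the $A/Z/D$-invariance of $M$-pairs and the disjointness of $A$ and $Z$ as explicit claims before the case analysis, and note that a minimum cut $C$ never contains copies $y_L, y_R$ of $B'$-vertices (they are sources/sinks off every $B^*_L$--$B^*_R$ path), so $S \subseteq F \cup B^*$ as required.
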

    \begin{proof}
        Define the set $S^*$ as the union of $C^* := C \cup N_M(C)$, $X_L := \{v \in L \setminus C^* \mid v \text{ reaches } B^*_R \text{ in } G^* \setminus C^*\}$, $X_R := \{v \in R \setminus C^* \mid v \text{ is reachable from } B^*_L \text{ in } G^* \setminus C^*\}$ and $Y_L := L \setminus (C^* \cup X_L \cup N_M(X_R))$. Based on $S^*$ we construct the set $S$, which contains all of $S^* \cap (L \cup R)$ and every vertex $x \in B^*$ for which $x_L$ and/or $x_R$ is contained in $C$.
        Let $Y_R := R \setminus (C^* \cup X_R \cup N_M(X_L))$.
        
        Note that $X_L$ has no $M$-neighbors in $X_R$ (and vice versa), since this would mean existence of an edge from $X_R$ to $X_L$, and thus a path from $B^*_L$ to $B^*_R$ in $G^* \setminus C^*$, a contradiction to $C \subseteq C^*$ being a cut between $B^*_L$ and $B^*_R$ in $G^*$. Clearly, also $Y_L$ does not contain $M$-neighbors of $X_R$ per definition. Hence, the only $M$-edges for which $S$ contains both endpoints are those between $C$ and $N_M(C)$, leading to $|S| \leq |S^*| \leq |M| + |C|$. Now let us make sure that $S$ is a vertex cover for $G-B'$. First, all neighbors of $B^* \setminus S = \{x \in B^* \mid C \cap \{x_L, x_R\} = \emptyset\}$ are contained in $C^*$, $X_L$, or $X_R$ and thus also in $S$. Next, similarly to the absence of edges from $X_R$ to $X_L$, there are no edges from $N_M(X_R)$ to $N_M(X_L)$, as those would lead to a path from $B^*_L$ through $X_R$ and $N_M(X_R)$ to $N_M(X_L)$ and thus through $X_L$ back to $B^*_R$, which would contradict $C \subseteq C^*$ being a cut between $B^*_L$ and $B^*_R$ in $G^*$. And last, an edge from $N_M(X_R)$ to $Y_R$ would violate the definition of $X_R$, as there would exist a path from $B^*_L$ through $X_R$ and $N_M(X_R)$ to $Y_R$, but all $R\setminus C^*$-vertices reachable from $B^*_L$ in $G^* \setminus C^*$ are contained in $X_R$. Altogether, $S$ is indeed a vertex cover for $G - B'$.
    \end{proof}
    
    \begin{lemma}\label{lem:vc_StoC}
        Let $G$ be a graph and $B \subseteq V(G)$ such that $G-B$ has bipartition $(L, R)$ and a perfect matching $M$. Let $G^*$ be constructed from $G$ as described above. Then for every set $B^* \subseteq B$ it holds that $\cut_{G^*}(B^*_L, B^*_R) \leq \conf{G, F}{B^*}$.
    \end{lemma}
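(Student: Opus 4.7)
The plan is to reformulate both sides via K\"onig's theorem on $G[F]$ and $G[F]-N$, then compare via Menger on $G^*$ together with a classification of vertex-disjoint paths. Setting $N := N_{G,F}(B^*) = N_L \sqcup N_R$, I partition $M$ by how its edges meet $N$: let $\alpha$ count edges with both endpoints in $N$, $\beta$ edges from $N_L$ to $R \setminus N_R$, $\gamma$ edges from $L \setminus N_L$ to $N_R$, and $\delta$ edges entirely inside $F \setminus N$. K\"onig gives $\OPT_\VC(G[F]) = |M| = \alpha+\beta+\gamma+\delta$ and $\OPT_\VC(G[F]-N) = \nu(G[F]-N) = \delta + a$, where $a$ is the maximum number of vertex-disjoint augmenting paths in $G[F]-N$ with respect to the restricted matching $M_H := M|_{G[F]-N}$ (using the standard bipartite fact that $\nu(G[F]-N) - |M_H|$ equals this quantity). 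Since $|N_L|=\alpha+\beta$ and $|N_R|=\alpha+\gamma$, a short algebraic substitution yields $\conf{G,F}{B^*} = \alpha + a$.

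For the cut side, Menger's theorem identifies $\cut_{G^*}(B^*_L, B^*_R)$ with the maximum number $p$ of vertex-disjoint $B^*_L$-to-$B^*_R$ paths in $G^*$. Because left copies of $B$-vertices have only outgoing edges in $G^*$ and right copies only incoming, and tacitly assuming $B^*$ is independent (in line with the companion Lemma~\ref{lem:vc_CtoS}), every such path must take the form $x_L \to r_1 \to l_1 \to r_2 \to \dots \to r_k \to l_k \to y_R$ with $\{r_j, l_j\} \in M$, $\{l_j, r_{j+1}\} \in E(G[F])$, $r_1 \in N_R$, and $l_k \in N_L$. The key preparatory step is to argue that a maximum family $(P_1, \dots, P_p)$ can be chosen so that each $P_i$ uses $N$-vertices only at $r_1$ and $l_k$; this is achieved by a rerouting argument that truncates any path visiting $l \in N_L$ at an interior position to end at some $z_R$ with $l \in N_G(z)$, $z \in B^*$, together with an exchange against any conflicting path already ending at $z_R$.

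After this normalization I classify the paths: Type~A paths (length $3$ in $G^*$) correspond to intra-$N$ $M$-edges $\{r_1, l_1\}$, so by vertex-disjointness there are at most $\alpha$ of them. For a Type~B path (length $\geq 5$), the vertex $l_1 \in L \setminus N_L$ is $M$-matched into $N_R$ and $r_k \in R \setminus N_R$ is $M$-matched into $N_L$, hence both $l_1$ and $r_k$ are unmatched by $M_H$; stripping $r_1, l_k$ from the $F$-portion exhibits $l_1 \to r_2 \to l_2 \to \dots \to r_k$ as an $M_H$-augmenting path in $G[F]-N$, and the stripped paths form a vertex-disjoint family, bounding the Type~B count by $a$. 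Combining, $p \leq \alpha + a = \conf{G,F}{B^*}$, which is the claim. The main obstacle I anticipate is the rerouting normalization, since the naive prefix/suffix shortcut is not vertex-disjoint-preserving when two paths aim at the same $z_R$, so a matroid-style exchange (or, alternatively, a direct flow-based construction of an explicit cut of size $\alpha+a$) will be needed. A secondary subtlety is the implicit independence of $B^*$ paralleling Lemma~\ref{lem:vc_CtoS}, absent which the direct edges $(x_L, y_R)$ inside $B^*$ would contribute to $\cut_{G^*}(B^*_L, B^*_R)$ without being accounted for by $\conf{G,F}{B^*}$.
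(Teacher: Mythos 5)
Your approach is genuinely different from the paper's. The paper argues primally: it takes a minimum vertex cover $S$ of $G[F]$ containing $N := N_{G,F}(B^*)$, observes that $|S| = |M| + \conf{G,F}{B^*}$ and hence that exactly $\conf{G,F}{B^*}$ matching edges $M' \subseteq M$ have \emph{both} endpoints in $S$, lets $C$ pick one endpoint from each such edge, and then verifies directly (via a parity/alternating-path argument along any would-be $B^*_L$--$B^*_R$ path in $G^*-C$) that $C$ separates $B^*_L$ from $B^*_R$. You instead work on the Menger dual, and your identity $\conf{G,F}{B^*} = \alpha + a$ (via König on $G[F]$ and $G[F]-N$, the $\alpha/\beta/\gamma/\delta$ decomposition of $M$, and the fact that the matching deficiency equals the maximum number of vertex-disjoint augmenting paths) is a correct and genuinely illuminating reformulation. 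You also correctly flag that $B^*$ must be independent for the statement to hold, a hypothesis the paper uses tacitly (e.g., in writing $N_{G^*}(B^*_L)=N_{G,R}(B^*)$) and which holds in all applications because $B^*$ always arises as $B\setminus S'$ for a vertex cover $S'$.

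However, there is a genuine gap exactly where you suspect, and it is larger than the write-up suggests. Your classification of paths into Type~A (contributing to $\alpha$) and Type~B (contributing to $a$) requires the maximum path family to be normalized so that \emph{all} internal $F$-vertices avoid $N = N_L \cup N_R$, not only $N_L$: if some interior $r_m \in N_R$ appears in a Type~B path, the stripped segment $l_1, r_2, \dots, r_k$ is not a path in $G[F]-N$ and does not yield an $M_H$-augmenting path. Your rerouting mechanism only addresses interior $N_L$-visits, and even there the proposed truncation to a $z_R$ with $l \in N_G(z)$ can collide with the endpoint of another path; the ``matroid-style exchange'' needed to resolve these collisions (and the symmetric re-sourcing for interior $N_R$-visits) is not local and is not supplied. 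This normalization is really the entire difficulty of the lemma in the max-flow formulation, so as written the argument is not complete. The paper's choice to construct an explicit cut $C$ from the vertex cover and check it via the alternating-path parity argument avoids this obstacle entirely, which is why it is the simpler route; your $\alpha + a$ decomposition could in principle be turned into an explicit cut (one endpoint per $\alpha$-edge plus the $a$ ``extra'' vertices of a minimum vertex cover of $G[F]-N$), but verifying that this is a cut would again require an argument essentially equivalent to the paper's.
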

    \begin{proof}
        Let $S$ be a vertex cover of $G[F]$ with $N_{G, F}(B^*) \subseteq S$ and minimum size among such. Clearly it then holds that $|S| = \OPT_\VC(G[F]) + \conf{G, F}{B^*}$. Since $G[F]$ has a perfect matching, it holds by K\H{o}nig's Theorem that $\OPT_\VC(G[F]) = \frac{1}{2}|F|$. Observe that for each edge $e \in M$, the vertex cover $S$ needs to contain at least one endpoint, but since it further contains  $\conf{G, F}{B^*}$-many additional vertices, and these are all matched by $M$, it follows that $S$ contains both endpoints of $\conf{G, F}{B^*}$-many $M$-edges. Let $M' \subseteq M$ be the set of these matching edges.
        
        Let $C \subseteq F$ be any set containing exactly one vertex from each $M'$-edge. We show that $C$ is a cut between $B^*_L$ and $B^*_R$ in $G^*$. Assume for contradiction the existence of some path $P$ in $G^*$ which goes from $B^*_L \setminus C$ to $B^*_R \setminus C$ and does not contain any edge from $M'$. Clearly, $P$ contains some vertices from $N_{G^*}(B^*_L) = N_{G, R}(B^*)$ and $N_{G^*}(B^*_R) = N_{G, L}(B^*)$, and without loss of generality we can assume that the path contains only one vertex from each of these sets. Let $P'$ be the subpath of $P$ without its endpoints, hence the endpoints of $P'$ lie in $N_{G, R}(B^*)$ and $N_{G, L}(B^*)$. Denote the vertices in $P'$ in an ordered manner, i.e., let $P' = (v_1, v_2, \dots v_r)$, with $v_1 \in N_{G, R}(B^*)$ and $v_r \in N_{G, L}(B^*)$. Since $P'$ is a path in $G^*[F]$, which itself is a directed version of $G[F]$, starts with a vertex from $R$ and ends with one from $L$, it holds that $P'$ has odd length, i.e., $r$ is even, and the edge $\{v_i, v_{i+1}\}$ is contained in $M$ for every odd $i \in [r-1]$. In an inductive manner, we show that for any $i \in [r]$ it holds that $S \cap \{v_1, \dots, v_i\} = \{v_j \mid j \text{ is odd and } j \leq i\}$. The base case is simple, since $v_1 \in N_{G, F}(B^*) \subseteq S$. For the inductive step, let $i \in \{2, \dots, r\}$ and assume that $S \cap \{v_1, \dots, v_{i-1}\} = \{v_j \mid j \text{ is odd and } j \leq i-1\}$. If $i$ is odd, then $v_{i-1}$ is not contained in $S$, and in order to cover the edge $\{v_{i-1}, v_i\}$, it must hold that $v_i \in S$; else $i$ is even, and since $v_{i-1} \in S$ by induction hypothesis, while the edge $\{v_{i-1}, v_i\}$ is contained in $M$, it cannot hold that $v_i \in S$: otherwise both endpoints of $\{v_{i-1}, v_i\} \in M$ are contained in $S$, and thus $\{v_{i-1}, v_i\} \in M'$, contradicting the choice of $P'$. 
        As a result, it holds that $S \cap V(P') = \{v_j \mid j \text{ is odd and } j \in [r]\}$ which contradicts the fact that $r$ is even, while $v_r \in N_{G, F}(B^*) \subseteq S$. Hence no path $P$ exists in $G^*$, which goes from $B^*_L \setminus C$ to $B^*_R \setminus C$ and does not contain $M'$-edges; and thus $C$ is a valid cut of size $|M'| = \conf{G, F}{B^*}$ between $B^*_L$ and $B^*_R$ in $G^*$.
    \end{proof}
    
    \begin{rrule}\label{rr:vc_bypass}
        Assume one cannot apply Reduction Rules~\ref{rr:vc_iso}, \ref{rr:vc_crown}, and \ref{rr:vc_clean} on $G_B$, and hence $G-B$ has bipartition $(L, R)$ and perfect matching $M$. Let $G^*$ be constructed from $G_B$ as described above and let $Z \subseteq V(G)$ be a cut cover for $(B_L, B_R)$ in $G^*$. If there exist vertices $u, v \in F \setminus Z$ that do not share any neighbors in $B$ and with $\{u, v\} \in M$, then bypass $u$ and $v$, and increase $\Delta$ by one.
    \end{rrule}
    
    \begin{lemma}
        Reduction Rule~\ref{rr:vc_bypass} is gluing-safe.
    \end{lemma}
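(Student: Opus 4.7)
The plan is to apply Lemma~\ref{obs:vc_destruct}(iii) on both $G_B \oplus H_B$ and $G'_B \oplus H_B$ and reduce the global equality $\OPT_{\VC}(G_B \oplus H_B) = \OPT_{\VC}(G'_B \oplus H_B) + 1$ to the family of pointwise equalities $\OPT_{\VC}(G - B') = \OPT_{\VC}(G' - B') + 1$ over $B' \subseteq B$. Fix such a $B'$ and set $B^* := B \setminus B'$. A useful preliminary observation is that, since $u \in L$ and $v \in R$ are $M$-matched and share no $B$-neighbor, $N_G(u) \setminus \{v\}$ and $N_G(v) \setminus \{u\}$ are disjoint, while in $G'$ every two vertices of $W := (N_G(u) \cup N_G(v)) \setminus \{u, v\}$ are joined by a bypass edge.

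For the direction $\OPT_{\VC}(G - B') \leq \OPT_{\VC}(G' - B') + 1$, I would start from a minimum vertex cover $S'$ of $G' - B'$. The clique structure of $W \setminus B'$ in $G' - B'$ forces at most one vertex $w \in W \setminus B'$ to lie outside $S'$. If $w$ does not exist or lies in $N_G(u)$, then $S' \cup \{u\}$ is a vertex cover of $G - B'$: all edges $\{u,\cdot\}$ are covered by $u$, and every edge $\{v, y\}$ has $y \in W \cap N_G(v) \subseteq S'$. Symmetrically, if $w \in N_G(v)$ then $S' \cup \{v\}$ works, yielding a vertex cover of size $|S'| + 1$ in either case.

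For the reverse inequality I would use the cut-based characterization. Splitting over $J \subseteq B^*$ (the $B^*$-vertices placed in the cover, with $B^* \setminus J$ independent in $G$) and combining Lemmas~\ref{lem:vc_StoC} and~\ref{lem:vc_CtoS} yields
\[
    \OPT_{\VC}(G - B') \;=\; \min_{J} \Bigl( |J| + |M| + \cut_{G^*}((B^* \setminus J)_L,\; (B^* \setminus J)_R) \Bigr).
\]
Fix an optimal $J$ and a corresponding minimum cut $C$. Because $Z$ is a cut cover for $(B_L, B_R)$ in $G^*$ and $u, v \notin Z$, we may take $C \subseteq Z$, so $u, v \notin C$; and since $\{u, v\} \in M$ is matched only to itself, $u, v \notin N_M(C)$ either. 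Feeding $C$ into the construction of Lemma~\ref{lem:vc_CtoS} gives a minimum VC $S$ of $G - B'$ whose $F$-trace is $C^* \cup X_L \cup X_R \cup Y_L$ in the notation of that proof. A short case analysis on the memberships of $u$ in $X_L$ and $v$ in $X_R$ rules out the configuration where both hold (otherwise a path $(B^* \setminus J)_L \to \cdots \to v \to u \to \cdots \to (B^* \setminus J)_R$ would survive in $G^* \setminus C^*$, contradicting $C$ being a cut). Hence $|S \cap \{u, v\}| = 1$, and deleting the single $\{u, v\}$-vertex of $S$ produces a vertex cover of $G' - B'$ of size $|S| - 1$: bypass edges inside $W$ are covered because the reachability definitions of $X_L, X_R$ drive the $M$-partners of the relevant $W$-vertices into $S$.

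The main obstacle I anticipate is exactly this last verification, namely that the single-vertex deletion really covers every bypass edge introduced in $G'$. The technical danger is that $G'[F']$ is no longer bipartite in general (bypass can introduce $L$-$L$ or $R$-$R$ edges inside $W$), so Lemmas~\ref{lem:vc_CtoS} and~\ref{lem:vc_StoC} cannot be re-invoked on the $G'$ side. The way out is to stay on the $G$-side and use the reachability structure from the proof of Lemma~\ref{lem:vc_CtoS} to force the needed $W$-vertices into $S$, combined with the disjointness of $N_G(u)$ and $N_G(v)$ away from the matching edge $\{u, v\}$.
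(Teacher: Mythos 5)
Your reduction via Lemma~\ref{obs:vc_destruct}(iii) to pointwise equalities $\OPT_\VC(G-B') = \OPT_\VC(G'-B')+1$ over all $B'\subseteq B$ is sound. Your first direction is correct and genuinely simpler than the paper's: you exploit that the bypass operation turns $W = (N_G(u)\cup N_G(v))\setminus\{u,v\}$ into a clique in $G'$, so a vertex cover of $G'-B'$ misses at most one vertex of $W\setminus B'$, and the disjointness of $N_G(u)\setminus\{v\}$ and $N_G(v)\setminus\{u\}$ then lets you re-insert the right one of $u,v$. The paper instead routes this direction through the auxiliary graph $\tG$ (obtained by bypassing $\{u,v\}$ in $G^*$), applying Lemmas~\ref{lem:vc_StoC}, \ref{lem:bypass}, and \ref{lem:vc_CtoS} in sequence; your argument avoids that machinery entirely.

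The second direction is where your proof has a genuine gap, and it is exactly the obstacle you flag. Ruling out $u\in X_L \wedge v\in X_R$ only gives $|S\cap\{u,v\}|=1$. What you actually need for $S\setminus\{u,v\}$ to cover the clique on $W\setminus B'$ in $G'-B'$ is the much stronger statement that $|(W\setminus B')\setminus S|\le 1$, and this does not follow from the reachability structure of Lemma~\ref{lem:vc_CtoS}. Concretely: take $u$ adjacent in $G[F]$ to $x_1,x_2\in R$ with $x_1,x_2\ne v$, and suppose the relevant min cut $C$ is empty (e.g., $B^*\setminus J$ has no $B_L$--$B_R$ connectivity in $G^*$); then $X_L=X_R=\emptyset$, the $F$-part of the cover produced by Lemma~\ref{lem:vc_CtoS} is exactly $L$, and $x_1,x_2\in R$ are both outside $S$, so the bypass edge $\{x_1,x_2\}$ is uncovered by $S\setminus\{u\}$. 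The intended conclusion $\OPT_\VC(G'-B')\le \OPT_\VC(G-B')-1$ can still be true via a different cover of $G'-B'$, but your specific construction $S\setminus\{u,v\}$ fails; the appeal to ``$M$-partners being driven into $S$'' does not repair this, since the $M$-partner of $x_1$ is not an endpoint of the bypass edge $\{x_1,x_2\}$. The paper avoids this pitfall by never trying to reuse $S$ on the $G'$ side: it goes $S\to$ cut $C\subseteq Z$ in $G^*\to$ cut $C'$ in $\tG$ (Claim~\ref{claim:vc_bypass} plus Lemma~\ref{lem:bypass}) $\to$ fresh cover $S_{G'}$ of $G'-B'$ via Lemma~\ref{lem:vc_CtoS} applied on the $\tG$ side, so the new cover is built from scratch out of the cut rather than carved out of $S$.
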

    \begin{proof}
        Let $u, v$ be vertices as described in the reduction rule. Let $G'$ be the result of bypassing these vertices and $M' := M \setminus \{\{u, v\}\}$. Clearly, $M'$ is a perfect matching for $G' - B$, so let $\tG$ be the result of the same operation on $G'$ that was applied on $G$ in order to define $G^*$. 
        
        \begin{claim}\label{claim:vc_bypass}
            $\tG$ is equal to the result of bypassing $\{u, v\}$ in $G^*$.
        \end{claim}
        \begin{claimproof}
            Let $\hat{G}$ be the result of bypassing $\{u, v\}$ in $G^*$. We show the equality $\hat{G} = \tG$. The vertex set of both is equal to $B_L \cup B_R \cup F \setminus \{u, v\}$, so let us show equivalence of the edge sets.
            
            Clearly, if an edge $\{x, y\}$ exists in $G$ with $x, y \not\in \{u, v\}$, then also $\{x, y\} \in G'$ and the corresponding edge with the corresponding direction (from $B_L \cup L$ to $B_R \cup R$ and the other way around if $\{x, y\} \in M$) exists in both $\hat{G}$ and $\tG$. Next, assume existence of path $(x, u, v, y)$ in $G$ and assume without loss of generality that $u \in L, v \in R$. Note that by assumption of the reduction rule it holds that $x \neq y$, as $u$ and $v$ do not share neighbors in $B$. Observe that if $x$ is contained in $F$, then it is actually coming from $R$, as it has the neighbor $u \in L$, and the other way around, we know that $y$ is contained in $B \cup L$, since it holds that $v \in R$. In order to omit a case distinction, we will simply write $\hat{x}_R$ and mean $x_R$ if $x \in B$ and otherwise we will mean $x$ by it. And analogously we will write $\hat{y}_L$ for $y_L$ or $y$, respectively. By construction, there exist the edges $(u, \hat{x}_R), (\hat{y}_L, v)$ in $G^*$. On the other hand, in $G'$ exists the edge $\{x, y\}$ with $x \in B \cup R$ and $y \in B \cup L$, since we bypassed $u$ and $v$. Now if we bypass $u$ and $v$ in $G^*$, we get the edge $(\hat{y}_L, \hat{x}_R)$, and the same happens in $G'$ when we split $B$-vertices into $L$- and $R$-variants, give the edges the direction from $B_L \cup L$ to $B_R \cup R$ and the other way around if they are contained in $M$. In other words, the edge $(\hat{y}_L, \hat{x}_R)$ exists in both $\tG$ and $\hat{G}$. As a last step, note that any edge in both $\hat{G}$ and $\tG$ is created by one of the two cases mentioned above, i.e., since the (undirected variant of the) edge already exists in $G$, or because there is a path between the endpoints going through $u$ and $v$.
        \end{claimproof}
        
        ``$\OPT_\VC(G_B \oplus H_B) \leq \OPT_\VC(G'_B \oplus H_B) + 1$'':
        Let $S'$ be a solution for $G'_B \oplus H_B$ and $B' := B \cap S'$. By Lemma~\ref{obs:vc_destruct}(ii) the set $S'_H := S' \cap (V(H) \setminus B)$ is a vertex cover for $H - B'$ and $S'_{G'} := S' \cap (V(G') \setminus B)$ is a vertex cover for $G' - B'$. Let $B^* := B \setminus S'$, which is an independent set in $G'$. From Lemma~\ref{lem:vc_StoC} it follows that $\cut_{\tG}(B^*_L, B^*_R) \leq \conf{G', F}{B^*} \leq |S'_{G'}| - |M'|$. Because of Claim~\ref{claim:vc_bypass} and Lemma~\ref{lem:bypass} there exists a cut $C$ of size at most $|S'_{G'}| - |M'|$ between $B^*_L$ and $B^*_R$ in $G^*$. And finally, Lemma~\ref{lem:vc_CtoS} ensures existence of a vertex cover $S_G$ for $G-B'$ of size $|M| + |C| \leq 1 + |S'_{G'}|$. By Lemma~\ref{obs:vc_destruct}(i) this leads us to the vertex cover $S := S'_H \cup B' \cup S_G$ for $G_B \oplus H_B$ of size at most $|S'_H| + |B'| + |S_G| \leq |S'| + 1$.
        
        ``$\OPT_\VC(G_B \oplus H_B) \geq \OPT_\VC(G'_B \oplus H_B) + 1$'':
        Let $S$ be a minimum size vertex cover for $G_B \oplus H_B$. Let $B' := B \cap S$ and let $B^*$ be the independent set $B \setminus S = B \setminus B'$ of $G_B \oplus H_B$. Similar to the argumentation for the previous direction, we follow that $S$ can be decomposed into the vertex cover $S_H := S \cap (V(H) \setminus B)$ for $H-B'$, the set $B'$ and the vertex cover $S_G := S \cap (V(G) \setminus B)$ for $G-B'$. Let $C$ be a min-cut for $(B^*_L, B^*_R)$ in $G^*$. We can assume without loss of generality that $C$ only contains vertices from $Z$ and thus does not intersect $\{u, v\}$. By Lemma~\ref{lem:vc_StoC} we know that $|C| \leq |S_G| - |M|$. Now let $C'$ be a min-cut for $(B^*_L, B^*_R)$ in $\tG$, and note that by Claim~\ref{claim:vc_bypass} and Lemma~\ref{lem:bypass} it holds that $|C'| \leq |C| \leq |S_G| - |M'| - 1$. Applying Lemma~\ref{lem:vc_CtoS} on $C'$ we get vertex cover $S_{G'}$ of size $|M'| + |C'| = |S_G| - 1$ for $G'-B'$. By Lemma~\ref{obs:vc_destruct}(i) this leads us to the vertex cover $S' := S_H \cup B' \cup S_{G'}$ for $G'_B \oplus H_B$ of size at most $|S_H| + |B'| + |S_{G'}| \leq |S| - 1$.
    \end{proof}
    
    As a last step, it remains to bound the number of $M$-adjacent pairs $u, v$ that are not contained in $Z$, but share neighbors in $B$. Then we are ready to give the whole algorithm and size bound.
    
    \begin{rrule}\label{rr:vc_triangle}
        Assume one cannot apply Reduction Rules~\ref{rr:vc_iso}, \ref{rr:vc_crown}, and \ref{rr:vc_clean} on $G_B$, and hence $G-B$ has bipartition $(L, R)$ and perfect matching $M$.
        If there exists a vertex $x \in B$ with more than $|B|$ distinct pairs of vertices $u, v \in F$, such that $\{u, v\} \in M$ and $(u, v, x)$ forms a triangle, i.e., a cycle of length three, then remove all edges between $x$ and $F$, create a new vertex $l_x$, and add the edge $\{x, l_x\}$.
    \end{rrule}
    \begin{lemma}
        Reduction Rule~\ref{rr:vc_triangle} is gluing-safe.
    \end{lemma}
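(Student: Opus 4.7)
The plan is to show that the rule is gluing safe with offset $\Delta = 0$, i.e., $\OPT_\VC(G_B \oplus H_B) = \OPT_\VC(G'_B \oplus H_B)$ for every boundaried graph $H_B$. Everything will reduce to the following key claim: \emph{for every $H_B$, every optimum vertex cover $S$ of $G_B \oplus H_B$ contains $x$.} Once the key claim is established, both inequalities are short. For ``$\OPT_\VC(G'_B \oplus H_B) \leq \OPT_\VC(G_B \oplus H_B)$'' I would take an optimum $S \ni x$ and observe that the only edge of $G'_B \oplus H_B$ missing from $G_B \oplus H_B$ is the pendant $\{x, l_x\}$, which $x$ covers, so $S$ is still a vertex cover of $G'_B \oplus H_B$. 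For the reverse direction, I would take an optimum $S'$ of $G'_B \oplus H_B$, assume without loss of generality $x \in S'$ and $l_x \notin S'$ (swap $l_x$ for $x$ at no cost if necessary), and observe that $S'$ then covers the removed $x$-$F$-edges of $G_B \oplus H_B$ via $x$.

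For the key claim I would argue by contradiction. Suppose $S$ is optimum with $x \notin S$. Then $N_{G_B \oplus H_B}(x) \subseteq S$, and in particular $\bigcup_{i=1}^{k}\{u_i, v_i\} \subseteq S_F := S \cap F$. Set $R := B \setminus S$; it contains $x$ and satisfies $|R| \leq |B|$. The exchange I would perform is radical: replace $S$ by $\tilde S := (S \setminus F) \cup B \cup \tilde S_F$, where $\tilde S_F$ is any minimum vertex cover of $G[F]$, of size $|M|$ by K\H{o}nig's theorem. This $\tilde S$ is a vertex cover of $G_B \oplus H_B$: every edge incident with $B$ is hit by $B \subseteq \tilde S$, edges inside $G[F]$ are covered by $\tilde S_F$, and edges strictly inside $H - B$ are covered by $S \cap (V(H) \setminus B) \subseteq \tilde S$, inherited from the vertex-cover property of $S$.

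The contradiction then comes from a counting step: $|\tilde S| - |S| = |R| + |M| - |S_F|$. Since $S_F$ is a vertex cover of $G[F]$ that must contain $N_G(R) \cap F$ (each edge $\{r,f\}$ with $r \in R$ and $f \in F$ forces $f \in S_F$), the definition of conflict yields $|S_F| \geq |M| + \conf{G,F}{R}$. Monotonicity of $\conf{G,F}{\cdot}$ gives $\conf{G,F}{R} \geq \conf{G,F}{\{x\}}$, and the triangle hypothesis yields $\conf{G,F}{\{x\}} \geq k$: the $k$ matching edges $\{u_i,v_i\}$ lie entirely inside $N_{G,F}(\{x\})$, so removing $N_{G,F}(\{x\})$ from $G[F]$ destroys $k$ edges of the perfect matching $M$, and K\H{o}nig then gives $\OPT_\VC(G[F] - N_{G,F}(\{x\})) \geq |M| - k$; combined with $|N_{G,F}(\{x\})| \geq 2k$ and $\OPT_\VC(G[F]) = |M|$, this yields $\conf{G,F}{\{x\}} \geq 2k + (|M|-k) - |M| = k$. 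Since the rule's hypothesis gives $k > |B| \geq |R|$, we get $|\tilde S| - |S| \leq |R| - k < 0$, contradicting the optimality of $S$. The main obstacle I anticipate is identifying the right exchange: a naive single-vertex swap $u_i \leftrightarrow x$ fails as soon as $u_i$ has a neighbour outside $\{x, v_i\} \cup S$, and it takes some thought to see that the global replacement above, paying $|R|$ by forcing all of $B$ into the cover and saving $\conf{G,F}{R}$ on the $F$-side, succeeds uniformly thanks to $k > |B|$.
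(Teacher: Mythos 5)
Your overall plan matches the paper's in spirit: both reduce the lemma to showing that any optimum vertex cover of $G_B \oplus H_B$ must contain $x$, and both obtain this from a lower bound on the conflict caused by any forbidden set containing $x$. Where the paper invokes Lemma~\ref{obs:vc_destruct}(iii) together with the $M$-edge counting buried in the proof of Lemma~\ref{lem:vc_StoC}, you make the exchange $\tilde S := (S \setminus F) \cup B \cup \tilde S_F$ explicit, which is a clean, self-contained alternative; the two routes are essentially the same argument.

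There is, however, a genuine error in your derivation of $\conf{G,F}{\{x\}} \geq k$. You claim K\H{o}nig gives $\OPT_\VC(G[F]-N_{G,F}(\{x\})) \geq |M|-k$. That inequality is false: deleting $N := N_{G,F}(\{x\})$ destroys \emph{every} $M$-edge with at least one endpoint in $N$, not just the $k$ triangle edges, so the maximum matching of $G[F]-N$ (and hence its minimum vertex cover) can be far below $|M|-k$. For a concrete counterexample, take $|B|=1$, $B=\{x\}$, $G[F]$ the path $p_1 q_1 p_2 q_2 p_3 q_3$ with $M=\{p_1q_1,p_2q_2,p_3q_3\}$, and $N = \{p_1,q_1,p_2,q_2,p_3\}$: here $k=2 > |B|$, but $\OPT_\VC(G[F]-N)=0 < 1 = |M|-k$. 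Your target $\conf{G,F}{\{x\}} \geq k$ is still true, but for a different reason, which is the one the paper's Lemma~\ref{lem:vc_StoC} actually uses: since $M$ is a perfect matching of $G[F]$, any vertex cover $S$ of $G[F]$ satisfies $|S| = \sum_{e\in M}|S\cap e| = |M| + j$, where $j$ is the number of $M$-edges with both endpoints in $S$; forcing $N_{G,F}(\{x\}) \subseteq S$ forces $j \geq k$, so any such cover has size at least $|M|+k$, giving $\conf{G,F}{\{x\}} \geq k > |B|$. With this replacement your counting step $|\tilde S|-|S| \leq |R| - \conf{G,F}{R} \leq |B| - k < 0$ is valid and the rest of the proof goes through.
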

    \begin{proof}
        Let $G'$ be the resulting graph.
        
        ``$\OPT(G_B \oplus H_B) \leq \OPT(G'_B \oplus H_B)$'': Let $S'$ be a vertex cover for $G'_B \oplus H_B$. Due to $x$ being the only neighbor of $l_x$, we can assume that $x$ is contained in $S'$. But then $S'$ is also a solution for $G_B \oplus H_B$, since all edges of $E(G_B \oplus H_B) \setminus E(G'_B \oplus H_B)$ are incident with $x$.
        
        ``$\OPT(G_B \oplus H_B) \geq \OPT(G'_B \oplus H_B)$'': Now let $S$ be a minimum vertex cover for $G_B \oplus H_B$. Destruct the considered triangles in the following way: let $U$ be the vertices from $R$ and $W$ the vertices from $L$. Construct $G^*$ from $G_B$ as described above. For every pair $u \in U$, $v \in W$ such that $\{u, v\} \in M$ and $(u, v, x)$ is a triangle, it holds that $(x_L, u, v, x_R)$ is a path in $G^*$. Thus, by Lemma~\ref{lem:vc_StoC} it follows that $\conf{G, F}{B^*} \geq |B|$ for any $B^* \subseteq B$ that contains $x$. Together with Lemma~\ref{obs:vc_destruct}(iii) this implies that $S$, as an optimum solution, must contain $x$. From that, it again follows that $S$ is also a feasible solution for $G'_B \oplus H_B$, since all edges of $E(G'_B \oplus H_B) \setminus E(G_B \oplus H_B)$ (namely exactly the single edge $\{x, l_x\}$) are incident with $x \in S$.
    \end{proof}
    
    \begin{proof}[Proof of Theorem~\ref{thm:pbk_vc_oct}]
        We are given a boundaried graph $G_B$ and an odd cycle transversal $S$ of $G$. We define the new boundary $B' := B \cup S$ and apply Reduction Rules~\ref{rr:vc_iso}, \ref{rr:vc_crown}, and \ref{rr:vc_clean} in order to get a boundaried graph $\hat{G}_{\hat{B}}$ such that $G_B \equiv \hat{G}_{\hat{B}}$ and $\hat{G}-\hat{B}$ has bipartition $(L, R)$ and perfect matching $M$. Further apply Reduction Rule~\ref{rr:vc_triangle} in order to bound the number of $M$-adjacent pairs that share $\hat{B}$-neighbors by $|\hat{B}|^2$. Let $F = V(\hat{G}) \setminus \hat{B}$ and let $G^*$ be the directed graph obtained from $\hat{G}$ by splitting every $B$-vertex $x$ into $x_L$ and $x_R$ and adding edges from $x_L$ to the vertices in $N_{\hat{G}, R}(x) \cup \{y_R \mid y \in N_{\hat{G}, B}(x)\}$, edges from $N_{\hat{G}, L}(x) \cup \{y_L \mid y \in N_{\hat{G}, B}(x)\}$ to $x_R$, from every $v \in L$ to $N_{\hat{G}, R}(v)$ and from every $v \in R$ to $N_{M}(v)$. Then using Lemma~\ref{lem:cutset_small} we repeatedly compute a cut cover $Z$ for $(B_L, B_R)$ in $G^*$ and apply Reduction Rule~\ref{rr:vc_bypass} on some vertices $u, v \in F \setminus Z$ with $\{u, v\} \in M$ and without common neighbor in $\hat{B}$, recomputing $G^*$ at each step simply by bypassing $u, v$, as is ensured to be correct by Claim~\ref{claim:vc_bypass}. We stop when there is no edge $\{u, v\} \in M$ without common neighbor in $\hat{B}$ and with $\{u, v\} \cap Z = \emptyset$, implying that the size of $F$ is upper bounded by $2|\hat{B}|^2 + 2|Z| \in \Oh(|\hat{B}|^3) = \Oh((|B| + \ell)^3)$. Let $G'$ be the resulting graph. By gluing safeness of the applied reduction rules and Lemma~\ref{lem:bk_superset}, the boundaried graph $G'_B$ is gluing equivalent to $G_B$ with respect to the problem \vertexcover.
    \end{proof}
    
    \section{Conclusion}\label{section:conclusion}
    
    Boundaried kernelization is a recently introduced model for efficient local preprocessing due to Antipov and Kratsch~\cite{AntipovKratsch2025-BoundariedKernelization}. That work gave polynomial boundaried kernelizations for several pure graphs problems, all based on local reduction rules, as well as several unconditional lower bounds. We showed that also global tools like the matroid-based approach of Kratsch and Wahlstr\"om~\cite{KratschWahlstrom2020-RepresentativeSetsIrrelevantVertices} can be leveraged for boundaried kernelization. Moreover, this required to extend the underlying definitions to work for annotated graphs, e.g., graphs with a distinguished set of terminal vertices, including a natural generalization of gluing. 
    
    We think that this should also be possible for other problems covered by these tools, e.g., \parameterizedproblem{Vertex Cover}{$k-LP$}, \parameterizedproblem{Subset Feedback Vertex Set}{$k$}, and \parameterizedproblem{Group Feedback Vertex Set}{$k$}, possibly with some restriction as was necessary for \smultiwaycut. Similarly, using a natural notion of boundaried formula, it would be interesting to get, if possible, a polynomial boundaried kernelization for \parameterizedproblem{Almost 2-SAT}{$k$}.
    
    \bibliographystyle{abbrv}
    \bibliography{bib}
\end{document}